\theoremstyle{plain}
\newtheorem{lemma}{Lemma}[section]
\newtheorem*{lemma*}{Lemma}
\newtheorem{theorem}{Theorem}[section]
\newtheorem{corollary}{Corollary}[section]
\newtheorem{claim}{Claim}[section]
\newtheorem*{claim*}{Claim}
\newtheorem{definition}{Definition}[section]
\DeclareMathOperator*{\argmax}{\arg\!\max}
\DeclareMathOperator*{\argmin}{\arg\!\min}
\newenvironment{proof-idea}{\paragraph{Proof Idea:}}{\hfill$\square$}
\newcounter{experiment}[section]
\newcommand{\A}{\mathcal{A}}
\newcommand{\N}{{\mathcal N}}
\newcommand{\Set}{{\mathcal S}}
\newcommand{\M}{{\mathcal M}}
\newcommand{\Vals}{(v_i)_{i\in\N}}
\newcommand{\MMSins}{(\N,\M,\Vals)}
\newcommand{\MMSidins}{(n,\M, v)}
\newcommand{\classNP}{{\sf NP}}
\newcommand{\classAPX}{{\sf APX}}
\newcommand{\bg}{{\sf BIG}}
\newcommand{\nbg}{{\sf SMALL}}
\newcommand{\bl}{{\sf LARGE}}
\newcommand{\bs}{{\sf MEDIUM}}
\newcommand{\Parti}{{\sf PARTITION}}
\newcommand{\EF}{\sf{EF1}}
\newcommand{\EFX}{\sf{EFX}}
\newcommand{\PO}{\sf{PO}}
\newcommand{\Prop}{\sf{Prop1}}
\newcommand{\MMS}{{\sf{MMS}}}
\newcommand{\NSW}{\sf{NSW}}
\newcommand{\aMMS}{{\sf{\alpha\text{-}MMS}}}
\newcommand{\aMMSg}{{\sf{GENERAL\text{ }\alpha\text{-}MMS}}}
\newcommand{\aeMMS}{{\sf{(\alpha-\epsilon)^+\text{-}MMS}}}
\newcommand{\gPO}{{\sf{\gamma\text{-}PO}}}
\newcommand{\optaMMS}{{\sf{OPT\text{-}\alpha\text{-}MMS}}}
\newcommand{\PTAS}{{\sf{PTAS}}}
\newcommand{\ASet}{{\mathbb A}}
\newcommand{\PA}{{{A^\pi}}}
\newcommand{\PB}{{{B^\pi}}}
\newcommand{\B}{\mathcal B}
\newcommand{\Bigi}{{\sf Big}}
\newcommand{\Sml}{{\sf Small}}
\newcommand{\epst}{\tfrac{\epsilon}{2}}
\newcommand{\eps}{{\epsilon}}
\newcommand{\tmu}{\tilde{\mu}}
\newif\ifsoda
\let\oldnl\nl
\newcommand{\nonl}{\renewcommand{\nl}{\let\nl\oldnl}}
\renewcommand{\nonl}{\renewcommand{\nl}{\let\nl\oldnl}}
\long\def\symbolfootnote[#1]#2{\begingroup%
\def\thefootnote{\fnsymbol{footnote}}\footnote[#1]{#2}\endgroup}
\title{Indivisible Mixed Manna: On the Computability of MMS+PO Allocations}
\author{Rucha Kulkarni}
\affiliation{%
    \institution{University of Illinois at Urbana-Champaign}
    \city{Urbana}
    \state{IL}
    \country{USA}}
\email{ruchark2@illinois.edu}
\author{Ruta Mehta}
\affiliation{%
    \institution{University of Illinois at Urbana-Champaign}
    \city{Urbana}
    \state{IL}
    \country{USA}}
\email{rutameht@illinois.edu}
\author{Setareh Taki}
\affiliation{%
    \institution{University of Illinois at Urbana-Champaign}
    \city{Urbana}
    \state{IL}
    \country{USA}}
\email{staki2@illinois.edu}
\begin{abstract}
In this paper we initiate the study of finding {\em fair} and {\em efficient} allocations of an indivisible mixed manna: Divide $m$ indivisible items among $n$ agents under the popular fairness notion of {\em maximin share $(\MMS)$} and the efficiency notion of {\em Pareto optimality $(\PO)$}. A mixed manna allows an item to be a good for some agents and a chore for others, and hence strictly generalizes the well-studied goods (chores) only manna. For the goods manna, non-existence of an $\MMS$ allocation prompted a series of works on finding approximate $\MMS$ allocations, and the best factor known to date is $\alpha=${\tiny $\sim$ }$\!\nicefrac{3}{4}$, while non-existence is only known for $\alpha$ close to $1$. The problem of finding $\aMMS$ allocation for the {\em (near) best $\alpha\in (0,1]$} for which it exists, remains unresolved even when the number of agents is a constant, while the problem of finding $\aMMS+\PO$ allocation is unexplored for {\em any} $\alpha\in (0,1]$. 

We make significant progress on the above questions for the case of mixed manna. First, we show that for any $\alpha>0$, an $\aMMS$ allocation may not always exist, thus ruling out solving the problem for a fixed $\alpha$.
Second, towards computing $\aMMS+\PO$ allocation for the best possible $\alpha$, we obtain a dichotomous result: We derive two conditions and show that the problem is tractable under these two conditions, while dropping either renders the problem intractable. The two conditions are: $(i)$ number of agents is a constant, and $(ii)$ for every agent, 
her absolute value for all the items is at least a constant factor of her total (absolute) value for all the goods {\em or} all the chores. 

In particular, first, for instances satisfying $(i)$ and $(ii)$ we design a $\PTAS$ -- an efficient algorithm to find an $(\alpha-\eps)$-$\MMS$ and $\gPO$ allocation when given $\eps,\gamma$ {\small $>0$}, for the highest possible $\alpha\in (0,1]$. Second, we show that if either condition is not satisfied then finding an $\alpha$-$\MMS$ allocation for {\em any} $\alpha\in(0,1]$ is $\classNP$-hard, even when a solution exists for $\alpha=1$. On $m$ item instances our $\PTAS$ runs in time $2^{O(1/\min\{\eps^2,\gamma^2\})}poly(\mbox{m})$ for given $\eps$ and $\gamma$, and therefore gives polynomial run-time for $\eps,\gamma$ as small as $O(\nicefrac{1}{\sqrt{\log m}})$.  

As corollaries, our algorithm resolves the open questions 
of designing a $\PTAS$ for a goods manna and a chores manna with constantly many agents to find an $\aMMS$ allocation for the best possible $\alpha$; the best known was $\alpha=$ {\tiny $\sim$ }$\!\nicefrac{3}{4}$ for goods manna, and $\alpha=\nicefrac{9}{11}$ for chores manna. To the best of our knowledge, ours is the first algorithm that ensures both approximate $\MMS$ and $\PO$ guarantees. In terms of techniques, for the first time, we use an LP-rounding through envy-cycle elimination as a tool to solve an $\MMS$ problem and ensure $\PO$, which may be of independent interest.
\end{abstract}
\begin{document}

\maketitle

\newcommand{\subprob}{{\bf Sub-problem}}
\section{Introduction}\label{sec:intro}
Finding fair and economically efficient allocations of indivisible items is a fundamental problem that arises naturally in various multi-agent systems \cite{steinhaus1948problem,BramsT96,Vossen02,moulin2004fair,EtkinPT05,budish2011combinatorial,ghodsi2018fair}, for example, school seats assignment, spectrum allocation, air traffic management, allocating computing resources on a network, splitting assets and liabilities in partnership dissolution, and office tasks. Many of these involve both goods that are freely disposable and chores that {\em have} to be assigned. In this paper we study the problem of finding fair and efficient allocations of a {\em mixed manna}, i.e., a set $\M$ of discrete items that are goods/chores, among a set $\N$ of agents with additive valuations. We note that a mixed manna allows an item to be a good (positively valued) for some agents, and a chore (negatively valued) for others, and thereby strictly generalizes the extensively studied goods (chores) manna (See Appendix \ref{sec:relWork} for a detailed discussion on related works).

To measure {\em fairness} and {\em efficiency} we consider the popular and well studied notions of maximin-share ($\MMS$) ({\em e.g.,} see \cite{budish2011combinatorial,KurokawaPW18,amanatidis2017approximation,ghodsi2018fair,FarhadiGHLPSSY19,garg2019improved}) and Pareto optimality ($\PO$) respectively. Pareto optimality is a sought after notion in economics, and when achieved means that there is no other allocation that makes all the agents better off and at least one of them strictly better off. 
The fairness notion of  maximin-share is inspired from the classical cut-and-choose mechanism\footnote{In case of divisible items and two agents, one agent cuts so that she is okay with both the bundles and the other person chooses (mentioned in the Bible).}. The $\MMS$ value of agent $i$ is the value that she can guarantee herself if she is to partition (cut) $\M$ into $n=|\N|$ bundles, given that she is the last agent to choose her favorite bundle. Naturally, she will try to maximize the minimum valued bundle in the partition. Formally, if $\Pi_n(\M)$ represents all possible partitions $(A_1,\dots,A_n)$ of $\M$ into $n$ bundles, and $v_i$ is her valuation function, then
\begin{equation}\label{eq:mms-def}
\MMS_i(\M) = \displaystyle\max_{(A_1,\dots,A_n) \in \Pi_n(\M)} \displaystyle\min_{k \in [n]} v_i(A_k) \ .    
\end{equation}

An {\em $\MMS$ allocation} is one where every agent gets at least her $\MMS$ value. The problem of finding an $\MMS$ allocation has seen extensive work in the case of a goods (chores) only manna, while no results are known for the mixed manna. Even for the goods (chores) manna, no work has explored the $\PO$ guarantee in addition to $\MMS,$ to the best of our knowledge; finding fair+(approximate) $\PO$ allocations has been studied for other fairness notions like $\EF$ and $\Prop$ \cite{BarmanKV18,Zeng2020,AzizMS20}. 
In this paper we initiate the study of finding an $\MMS+\PO$ allocation for a mixed manna. 

For the goods manna, the notable result of Kurokawa, Procaccia and Wang \cite{KurokawaPW18} showed that an $\MMS$ allocation may not always exist, but $\alpha$-$\MMS$ allocations, where every agent gets at least $\alpha$ times her $\MMS$ value, exist for $\alpha=2/3$. This prompted works on efficient computation of an $\alpha$-$\MMS$ allocation for progressively better $\alpha\in[0,1]$ \cite{amanatidis2017approximation,BarmanK17,garg2018approximating, ghodsi2018fair}; the best factor known so far is $\alpha=(3/4 + 1/(12n))$ by Garg and Taki \cite{garg2019improved} for $n\ge 5$ agents. With a chores manna, $\MMS$ values are negative, and an $\aMMS$ allocation gives each agent $i$ a bundle of value at least $\frac{1}{\alpha}\cdot\MMS_i$. For this case too, starting from the work~\cite{AzizRSW17} for $\alpha=1/2,$ a series of works improved it to $9/11$ \cite{BarmanK17,HuangL19}. 

With a mixed manna we show that, for any fixed $\alpha\in(0,1],$ an $\aMMS$ allocation may not always exist (see Appendix \ref{sec:nonexist}); in contrast, non-existence with a goods manna is known for $\alpha$ close to one~\cite{KurokawaPW18}. This rules out efficient computation for any fixed $\alpha,$ and naturally raises the following problem.
\medskip

\noindent 
{\em Design an efficient algorithm to find an $\aMMS+\PO$ allocation for the best possible $\alpha$, i.e., the maximum $\alpha\in(0,1]$ for which it exists.}
\medskip

This {\em exact} problem is intractable: In the case of identical agents, an $(\alpha=1)$-$\MMS$ allocation exists by definition. However, finding one is known to be $\classNP$-hard for a goods manna.\footnote{Checking if a given instance admits an $\MMS$ allocation is known to be in $\classNP^\classNP$, but not known to be in $\classNP$ \cite{BouveretL16}.} On the positive side, a polynomial-time approximation scheme ($\PTAS$) is known for this case due to \cite{woeginger1997polynomial}; given a {\em constant} $\eps\in(0,1],$ the algorithm finds a $(1-\eps)$-$\MMS$ allocation in polynomial time. No such result is known when the agents are not identical. 
Guaranteeing $\PO$ in addition adds to the complexity, since even checking if a given allocation is $\PO$ is co$\classNP$-hard even with two identical agents \cite{AzizBLLM16}. In light of these results, we ask, 

\noindent{\bf Question.} {\em Can we design a $\PTAS$, namely an efficient algorithm to find an $(\alpha-\eps)$-$\MMS+\gamma$-$\PO$ allocation, given $\eps,\gamma>0$, for the best possible $\alpha$?}

\noindent
\textbf{Our Contribution.} In this paper we make significant progress towards this question for mixed manna by showing the following dichotomy result:
We derive two conditions and show that the problem is tractable under these conditions, while dropping either renders the problem intractable. 
The two conditions are: $(i)$ number of agents $n$ is a constant, and $(ii)$ for every agent $i$, her total (absolute) value for all the items ($|v_i(\M)|$) is significantly greater than the minimum of her total value of goods ($v^+_i$) and her total (absolute) value for chores ($v^-_i$), i.e., for a constant $\tau>0$, $|v_i(\M)|\ge \tau\cdot \min\{v_i^+,v_i^-\}$.

In particular, first, for instances satisfying $(i)$ and $(ii),$ we design a $\PTAS$ (as asked in the above question). Second, we show that if either condition is not satisfied, then finding an $\alpha$-$\MMS$ allocation for {\em any} $\alpha\in(0,1]$ is $\classNP$-hard, even with {\em identical agents} where a solution exists for $\alpha=1$. This hardness is striking because it shows inapproximability within {\em any} non-trivial factor when either $(i)$ or $(ii)$ is not satisfied. This also indicates that the two conditions are unavoidable. 

Our algorithm, in principle, gives a little more than a $\PTAS.$ It runs in time $2^{O(1/\min\{\eps^2,\gamma^2\})}poly(\mbox{m})$ for given $\eps,\ \gamma$, thus gives polynomial run-time for $\eps,\gamma$ as small as $O(\nicefrac{1}{\sqrt{\log m}})$, where $m=|\M|$. 

\emph{$\aMMS + \PO$ for goods (chores) manna.} As a corollary, we obtain a $\PTAS$ for finding $\aMMS + \PO$ allocations of a goods manna and a chores manna when the number of agents is a constant. This improves the previous results for these settings in two aspects: $(i)$ provides the best possible approximation factor; factors better than the general case known for good manna are $4/5$ for $n=4$ \cite{ghodsi2018fair}, $8/9$ for $n=3$ \cite{GourvesM19}, and $1$ for $n=2$ \cite{BouveretL16}, and $(ii)$ provides an additional (approximate) $\PO$ guarantee. 

\noindent
\textbf{Challenges.} The key challenge in solving this question is handling items of high value to any agent. In the goods or chores mannas, these items can be greedily assigned, for example as singleton bundles. But in a mixed manna, \emph{high valued} goods (chores) may have to be bundled with specific sets of chores (goods) or low valued items to form lesser valued bundles. 
Secondly, the $\MMS$ values of the agents, and the $\alpha$ for which $\aMMS$ allocation exist, both are not known. In fact, computing the exact $\MMS$ values is $\classNP$-hard (even with a goods manna). 

\emph{$\PTAS$ to find $\MMS$ values.} As the first key step for our main algorithm, we design a $\PTAS$ that returns $(1-\epsilon)$ approximate $\MMS$ values of agents, which may be of independent interest. 


\emph{A new technique to prove $\PO$.} Since certifying a $\PO$ allocation is a co\classNP-hard problem \cite{AzizBLLM16}, known works maintain a $\PO$ allocation with market equilibrium as a certificate \cite{BarmanKV18,GargM19,GargM20}. 
We develop a novel approach to ensure $\PO$ with $\aMMS$ through LP rounding. The LP itself is intuitive, however the rounding is involved. It makes use of {\em envy-graph} and properties of the $\MMS$ in a novel way. This approach may be of independent interest. 
\medskip

\noindent{\bf Organization.} Section \ref{sec:prelims} gives a formal definition of the problem and notations.
Section \ref{sec:mixed-pos} discusses the main result of $\PTAS$ for the $\aMMS+\PO$ problem with the best possible $\alpha$; 
the formal proofs missing from this section due to space limitations are in Appendix \ref{sec:missing-proofs}. The formal and complete discussion of the $\PTAS$ for computing $\MMS$ values for the case when $\MMS\ge 0$ is in Section \ref{sec:identical} and for the $\MMS<0$ case is in Appendix \ref{appendix:ptas-neg-id}. Appendix \ref{sec:nonexist} discusses the non-existence of $\aMMS$ allocation for any $\alpha \in (0,1]$. Finally, the discussion of $\classNP$-hardness results is in Appendix \ref{sec:hardness}.


\section{Problem Definition and Notations}\label{sec:prelims}

\noindent{\em Notations.} We use $[k]$ to denote the set $\{1,2\cdots, k\}$. For $c\in \mathbb R$, $c^+$ denotes $\max\{c,0\}$.

We consider the problem of allocating a set $\M$ of $m$ indivisible items among a set $\N$ of $n$ agents in a {\em fair} and {\em efficient} manner, with the fairness notion of {\em maximin share $(\MMS)$} and the efficiency notion of {\em Pareto-optimality $(\PO)$}. Each agent $i\in \N$ has an additive valuation function $v_i:2^{\M}\rightarrow \mathbb R$ over sets of items. For a set $S\subseteq \M$, her value is $v_i(S) = \sum_{j\in S} v_{ij}$. Agents are called {\em identical} if their $v_i$s are the same function; in this case, the valuation function is denoted by $v$.

The set of items valued non-negatively (negatively) by an agent $i$ are called her {\em Goods} ({\em Chores}), and denoted by $\M^+_i=\{j\mid v_{ij}\ge 0\}$ ($\M^-_i=\{j\mid v_{ij}< 0\}$). The sets of all the goods and all the chores of the instance are defined as respectively $\M^+:=\cup_i \M^+_i,$ and $\M^-:=\M\backslash \M^+.$ We refer to an item $j$ as a \textit{good} if $v_{ij} \ge 0$ for \emph{some} agent and as a \textit{chore} if $v_{ij} < 0$ for all agents.  

\noindent{\bf $\MMS$ values and allocation.} Let $\PA=\{A_1,A_2,\cdots, A_n\}$ denote a partition of all the items among the $n$ agents, referred as an \textit{allocation}, i.e., $A_i\cap A_{i'}=\emptyset$ for all distinct $i, i'$ in $\N$, and $ \cup_i A_i = \M$. And let $\Pi_n(\M)$ be the set of all possible allocations of $\M$ among $n$ agents. The maximin share $(\MMS)$ value of an agent $i$ is defined as \[\MMS^n_i(\M) = \displaystyle\max_{(A_1,\dots,A_n) \in \Pi_n(\M)} \displaystyle\min_{k \in [n]} v_i(A_k).\]    

We refer to $\MMS_i^n(\M)$ by $\MMS_i$ when the qualifiers $n$ and $\M$ are clear, and by $\MMS$ when agents are identical. Note that $\MMS_i$ can be negative too. 

\begin{definition}[$\aMMS$ allocation]
\label{def:mms}
$\PA$ is called an $\aMMS$ allocation for an $\alpha\in (0,1]$, if for each agent $i\in\N$ we have $v_i(A_i) \ge \alpha \MMS_i$ if $\MMS_i\ge 0$, $v_i(A_i) \ge (1/\alpha) \MMS_i$, if $\MMS_i <0$. 
Equivalently, $ v_i(A_i) \ge \min \{ \alpha \MMS_i, (1/\alpha) \MMS_i  \}.$ When $\alpha\le 0,$ for simplicity, we define any allocation as $\aMMS.$
\end{definition}
\noindent
{\bf $\gamma$-Pareto optimal $(\gPO)$ and $\gamma$-Pareto dominating allocations.}\label{def:gpo}
An allocation $\PA$ is said to be $\gPO$ if there does not exist any $\PB\in \Pi_n(\M),$ called an allocation $\gamma$-Pareto dominating $\PA,$ such that $\forall i\in \N, \ v_i(B_i) \ge (1+\gamma)v_i(A_i)$ if $v_i(A_i)\ge 0,$ and $v_i(B_i) \ge \frac{1}{(1+\gamma)}v_i(A_i)$ if $v_i(A_i)< 0,$ and for at least one $i$ the inequality is strict. 

An allocation is called $\PO$ if it is $0$-$\PO.$ It is easy to see that if there exists an $\aMMS$ allocation for a given instance then there is one that is both $\aMMS$ and $\PO$ (and thereby also $\gPO$). This is because if an allocation $\PB$ Pareto dominates an $\aMMS$ allocation $\PA$, then $\PB$ is also $\aMMS$. 

Since the problem of finding $\aMMS$ allocation is $\classNP$-hard for any $\alpha\in (0,1]$, we design a $\PTAS$ to compute an $\aMMS+\PO$ allocation for a sub-class of instances. To characterize this sub-class, we will need the following definition.

\noindent
\begin{equation}
\label{eq:vipm}
\begin{array}{lr}
\text{For each agent }i\in \N,\text{ define  }\ \ v_i^+ =\sum_{j\in\M^+_i} v_{ij}\ \ \text{ and }\ \ v_i^- =\sum_{j\in\M^-_i} |v_{ij}|. \quad \quad & \quad\quad\quad\quad
\end{array}
\end{equation}

\begin{definition}[$\aMMS+\PO$ Problem]
\label{def:amms}Given an instance $\MMSins$ and $\alpha\in (0,1]$ where, 
\begin{enumerate}
    \item the number of agents $n$ is constant, and 

    \item for some constant $\tau>0$, for every agent $i\in\N$, $|v_i(\M)|\ge \tau\cdot \min\{v_i^+,v_i^-\},$
\end{enumerate}
either find an allocation $\PA\in \Pi_n(\M)$ that is both $\aMMS$ and $\PO$, also called an $\aMMS+\PO$ allocation, or correctly report that such an allocation does not exist for the given instance. 
\end{definition}

The above problem without the $\PO$ guarantee is called the $\aMMS$ problem. Unlike the goods manna or the chores manna, for the mixed manna an $\aMMS$ allocation may not exist for \textit{any} $\alpha>0,$ as shown in Appendix \ref{sec:nonexist}. Therefore, for a mixed manna, we can only hope to find an $\aMMS$ allocation for the maximum possible $\alpha$ value for the given instance, formally defined below.

\begin{definition}[$\optaMMS+\PO$ Problem.]\label{def:optmms} Given an instance $\MMSins$, find an allocation which is $\aMMS+\PO$ for an $\alpha\in(0,1]$ such that there is no $\alpha'$-$\MMS$ allocation for any $\alpha'>\alpha$. 
\end{definition}

Note that, given an algorithm for the $\aMMS+\PO$ problem it is easy to solve $\optaMMS+\PO$ by doing a binary search on the value of $\alpha$. We design a $\PTAS$ for the former in Section \ref{sec:mixed-pos} and thereby solve the latter efficiently up to a small error. By $\PTAS$ we mean, given constants $\epsilon,
\gamma>0$, in polynomial-time it either returns an $(\alpha-\epsilon)^+$-$\MMS+\gPO$ allocation, or correctly reports that no $\alpha$-$\MMS$ allocation exists.

The following observations will be useful in what follows (See Appendix \ref{sec:missing-proofs} for proofs). 

\begin{restatable}{lemma}{Sign}\label{lem:mms-sign}
$v_i(\M) \ge 0$ iff $\MMS_i \ge 0$.
\end{restatable}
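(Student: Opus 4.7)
\begin{proof-idea}
The plan is to prove both implications directly from the definition of $\MMS_i$ as a max-min over partitions in $\Pi_n(\M)$.

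For the direction $v_i(\M)\ge 0 \Rightarrow \MMS_i \ge 0$, I would exhibit a single partition witnessing $\MMS_i \ge 0$, namely the ``lumped'' partition that places all items in one bundle and leaves the remaining $n-1$ bundles empty, i.e., $(\M,\emptyset,\dots,\emptyset)$. Since additivity gives $v_i(\emptyset) = 0$, the minimum bundle value under this partition equals $\min\{v_i(\M),0\}$, which is $0$ whenever $v_i(\M)\ge 0$. Taking the max over all partitions only makes things better, so $\MMS_i \ge 0$.

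For the reverse direction $\MMS_i \ge 0 \Rightarrow v_i(\M)\ge 0$, I would take any partition $(A_1,\dots,A_n)$ achieving the maximin in the definition of $\MMS_i$. Every bundle then satisfies $v_i(A_k)\ge \MMS_i \ge 0$. Summing over $k$ and using additivity of $v_i$ together with the fact that $(A_1,\dots,A_n)$ partitions $\M$ gives $v_i(\M) = \sum_{k\in[n]} v_i(A_k) \ge 0$.

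There is no real obstacle here; the only subtlety worth flagging is that the definition of $\Pi_n(\M)$ must allow empty bundles for the forward direction to work via the lumped partition (otherwise one would have to distribute at least one item to each agent, which can force a negative bundle even when $v_i(\M)\ge 0$). Inspecting the preliminaries, $\Pi_n(\M)$ is defined as the set of all partitions of $\M$ into $n$ (possibly empty) parts, so this is fine, and the two short arguments above suffice.
\end{proof-idea}
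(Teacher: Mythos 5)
Your proof is correct and matches the paper's argument: the forward direction uses the same lumped partition $(\M,\emptyset,\dots,\emptyset)$, and your reverse direction (summing bundle values of a maximin-achieving partition) is just the contrapositive of the paper's observation that $v_i(\M)<0$ forces some bundle to be negative. No issues.
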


\begin{restatable}{lemma}{Average}\label{lem:avg}
$\MMS_i \le v_i(\M)/|\N|$ for all $i\in \N$.
\end{restatable}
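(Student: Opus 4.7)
The plan is to use the basic fact that the minimum of a finite collection of numbers is upper-bounded by their average. Fix any partition $(A_1,\dots,A_n) \in \Pi_n(\M)$. Since the $A_k$ partition $\M$ and $v_i$ is additive, we have $\sum_{k=1}^n v_i(A_k) = v_i(\M)$. Therefore
\[
\min_{k \in [n]} v_i(A_k) \;\le\; \frac{1}{n}\sum_{k=1}^n v_i(A_k) \;=\; \frac{v_i(\M)}{n}.
\]

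Because this holds for every partition, taking the maximum over $(A_1,\dots,A_n) \in \Pi_n(\M)$ on the left preserves the inequality, yielding
\[
\MMS_i \;=\; \max_{(A_1,\dots,A_n) \in \Pi_n(\M)} \min_{k \in [n]} v_i(A_k) \;\le\; \frac{v_i(\M)}{n},
\]
which is the desired bound. Note that nothing here depends on the sign of the values, so the statement covers both the $\MMS_i \ge 0$ and $\MMS_i < 0$ cases uniformly, consistent with Lemma~\ref{lem:mms-sign}.

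There is no real obstacle: the proof is a one-line averaging argument combined with additivity of $v_i$. The only thing to double-check is that $\Pi_n(\M)$ is nonempty (it is, since we can always place all items in $A_1$ and leave the rest empty, using $v_i(\emptyset) = 0$), so the maximum is well-defined.
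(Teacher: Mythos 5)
Your proof is correct and uses the same underlying idea as the paper's — that the bundle values sum to $v_i(\M)$ by additivity, so the minimum cannot exceed the average — with the only difference being that the paper phrases it as a proof by contradiction while you argue directly. No issues.
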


\begin{restatable}{lemma}{ScaleInv}\label{lem:scale}[Scale Invariance]
$\aMMS+\PO$ allocations for the instances $\MMSins$ and $(\N,\M,(v'_i)_{i \in \N})$ are the same when for all  $i,j,\  v'_{ij}=c_i \cdot v_{ij}$ for some constants $c_i>0$.
\end{restatable}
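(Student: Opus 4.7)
The plan is to show, in three small steps, that scaling each agent's valuation by a positive constant $c_i$ preserves (a) $\MMS$ values up to the factor $c_i$, (b) the $\aMMS$ guarantee, and (c) the $\gPO$ guarantee (and hence $\PO$). All three follow from the observation that $v'_i(S)=c_i\, v_i(S)$ for every bundle $S\subseteq\M$, combined with the fact that $c_i>0$ preserves inequalities and the signs of $v_i(S)$, $\MMS_i$, etc.

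First I would establish the scaling of $\MMS$ values. Since $v'_i(S)=c_i\, v_i(S)$ and $c_i>0$, for any partition $(A_1,\dots,A_n)\in\Pi_n(\M)$ we have $\min_k v'_i(A_k)=c_i\min_k v_i(A_k)$, and taking the max over the (same) set of partitions gives $\MMS'_i=c_i\,\MMS_i$. In particular, by Lemma~\ref{lem:mms-sign} applied to both $v_i$ and $v'_i$, the signs of $\MMS_i$ and $\MMS'_i$ coincide. Hence the condition $v'_i(A_i)\ge\min\{\alpha\MMS'_i,(1/\alpha)\MMS'_i\}$ is obtained from $v_i(A_i)\ge\min\{\alpha\MMS_i,(1/\alpha)\MMS_i\}$ by multiplying both sides by $c_i>0$, so an allocation $\PA$ is $\aMMS$ under $v$ iff it is $\aMMS$ under $v'$.

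Next I would argue the $\gPO$ equivalence by contrapositive. Suppose some $\PB\in\Pi_n(\M)$ $\gamma$-Pareto dominates $\PA$ under $v'$. Consider any agent $i$: if $v_i(A_i)\ge 0$ then $v'_i(A_i)\ge 0$ and $v'_i(B_i)\ge(1+\gamma)v'_i(A_i)$, and dividing by $c_i>0$ yields $v_i(B_i)\ge(1+\gamma)v_i(A_i)$; the case $v_i(A_i)<0$ is symmetric. Strictness of the inequality for some agent is preserved since $c_i>0$. Hence $\PB$ $\gamma$-Pareto dominates $\PA$ under $v$ as well, and the reverse direction is identical. Combining the three steps, $\PA$ is an $\aMMS+\PO$ (resp.\ $\aMMS+\gPO$) allocation under $v$ iff it is one under $v'$.

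No step here is conceptually hard; the only mild subtlety is handling the two sign regimes in the definitions of $\aMMS$ and $\gPO$, which I would dispose of by treating the $v_i(A_i)\ge 0$ and $v_i(A_i)<0$ cases separately (and invoking Lemma~\ref{lem:mms-sign} to ensure the sign regime itself is invariant under the scaling). I expect the proof to occupy only a few lines once these observations are in place.
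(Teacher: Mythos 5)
Your proposal is correct and follows essentially the same route as the paper's proof: establish $\MMS'_i=c_i\,\MMS_i$ from $v'_i(S)=c_i\,v_i(S)$, conclude that the $\aMMS$ property is preserved, and transfer any Pareto-dominating allocation between the two valuation profiles to preserve $\PO$. Your version is merely more explicit about the sign cases and the $\gPO$ variant, which the paper leaves implicit.
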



\section{$\PTAS$ for $\aMMS+\PO$ with Non-identical Agents}\label{sec:mixed-pos}
In this section, we present our main result, namely a $\PTAS$ for the $\optaMMS+\PO$ problem. 

For $\optaMMS+\PO$ the crucial step is to get a $\PTAS$ for the $\aMMS+\PO$ problem, discussed next.
Recall that the definition of the latter, namely Definition \ref{def:amms}, assumes two conditions on the input instance $\MMSins$: $(i)$ number of agents is a constant, and $(ii)$ for each $i\in\N$, $|v_i(\M)|\ge \tau\cdot \min\{v_i^+,v_i^-\},$ where $\tau>0$ is a constant. Let us first briefly discuss why both of these conditions are unavoidable.
\medskip

\noindent{\bf Hardness of approximation.} 
In Appendix \ref{sec:hardness}, we show the following theorem by proving that if either condition is dropped then the problem is intractable for {\em any} $\alpha\in(0,1],$ even when exact $\MMS$ allocation exists. 

\begin{restatable}{thm}{GenMMSHard}\label{thm:hard}
For any instance $\MMSidins$ with identical agents and $v(\M)>0$ such that exactly one of the following two holds: (a) either $n=2$ or (b) $|v(\M)| \ge \tau\cdot  \min\{v(\M^+),|v(\M^-)|\}$ for a constant $\tau$, finding an $\alpha$-$\MMS$ allocation of $\MMSidins$ for any $\alpha \in (0,1]$ is $\classNP$-hard.
\end{restatable}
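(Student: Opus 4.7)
My plan is to reduce from \textsc{Partition} for case (a) and from a padded variant (or $3$-\textsc{Partition}) for case (b). Both constructions are engineered so that in a NO instance no allocation has a strictly positive minimum value, so that any $\aMMS$ allocation for $\alpha>0$ has minimum exactly $0$ in NO and minimum $\ge \alpha\MMS>0$ in YES; the output then decides the NP-hard source instance.

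For case (a), given a \textsc{Partition} instance $a_1,\dots,a_m$ (positive integers summing to $2T$), I would construct an instance with $n=2$ identical agents, $m$ goods of values $a_i$, and two chores each of value $-(T-\tfrac{1}{2})$. Then $v(\M)=1>0$ (so $\MMS\ge 0$ by Lemma~\ref{lem:mms-sign}), while $\min(v^+,v^-)=\Theta(T)$, so condition~(2) of Definition~\ref{def:amms} fails for any constant $\tau$ as $T$ grows. By Lemma~\ref{lem:avg}, $\MMS\le v(\M)/2=\tfrac{1}{2}$; in a YES instance a \textsc{Partition} solution $S$ realizes $\MMS=\tfrac{1}{2}$ via $(S\cup\{c_1\},\bar S\cup\{c_2\})$. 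In the NO case I would case-split on chore placement: with both chores in one bundle, integrality caps the minimum at $0$; with one chore each, a strictly positive minimum requires both goods-sums to exceed $T-\tfrac{1}{2}$ and hence (by integrality) to equal $T$ --- a \textsc{Partition} solution, contradicting NO. So the NO case has $\MMS=0$ with no positive-minimum allocation, and the output-inspection argument (YES gives minimum $\ge\alpha/2>0$; NO gives minimum $=0$) yields NP-hardness for every $\alpha\in(0,1]$.

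For case (b), I would pad the case-(a) construction by adding $n-2$ identical ``dummy'' agents along with $n-2$ dummy good-chore pairs: each pair is a dummy good of value $D$ paired with a dummy chore of value $-(D-1)$, for a constant $D$ chosen so that $v(\M)/v^-$ stays bounded below by the target $\tau$. The dummies are engineered so that in any min-$>0$ allocation they self-pair: a dummy chore handed to a \textsc{Partition}-agent would force that agent's value sharply negative, and displacing a dummy good out of a dummy-agent's bundle leaves that agent holding a bare dummy chore of value $-(D-1)<0$. Once the dummies self-allocate, the two \textsc{Partition}-agents again face the case-(a) sub-problem, and the YES/NO dichotomy propagates. (If the reduction requires a polynomial-in-input parameter, one can swap the \textsc{Partition} source for $3$-\textsc{Partition} and use the corresponding triple-based structure in place of the two-agent \textsc{Partition} structure, ensuring the blow-up stays polynomial.)

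The hard part will be the case-(b) rearrangement analysis: since condition~(2) forces the dummy magnitude $D$ to be a constant, one must show that no clever redistribution of dummy goods, dummy chores, and \textsc{Partition}-chores across the \textsc{Partition}/dummy-agent boundary can produce a positive-minimum allocation in a NO instance. This is a case analysis on chore groupings, in each of which one exhibits an agent with non-positive value unless the underlying source instance is a YES-instance. Tuning the constants $D$, the \textsc{Partition}-chore magnitude $T-\tfrac{1}{2}$, and $\tau$ so that all the case inequalities go through simultaneously is the core technical subtlety.
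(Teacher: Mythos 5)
Your case (a) is essentially the paper's own reduction (Theorem \ref{thm:hard-const}): the same two-agent {\Parti} gadget with two chores of value $-(T-\beta)$ for a small fractional $\beta$, the same case split on whether the chores share a bundle, the same integrality argument forcing each chore-bundle's goods to sum to exactly $T$, and the same ``inspect the sign of the returned minimum'' framing to turn an $\aMMS$-finder into a decider for $\MMS>0$. That part is correct and complete.

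Case (b) is where there is a genuine gap, and it stems from a construction choice. The paper pads the two-agent gadget with $n-2$ dummy \emph{goods} of value $\beta$ each and \emph{no} new chores, then grows $n$ until $v(\M^+)\ge(1+\tau)|v(\M^-)|$. Keeping exactly two chores is what makes the analysis collapse: in any allocation with all bundles strictly positive, $n-2$ bundles are chore-free, each must contain a good (value $\ge\beta=$ the upper bound on $\MMS$), and an exchange argument pins the $\beta$-goods to those bundles, reducing everything to the two-agent case. Your construction instead adds $n-2$ dummy good--chore \emph{pairs}, which injects $n-2$ additional chores; the ``no clever redistribution of dummy goods, dummy chores, and {\Parti}-chores'' case analysis that you explicitly defer is precisely the crux of the proof, it is not carried out, and it is substantially harder than in the paper's version (a dummy chore of value $-(D-1)$ with constant $D$ does \emph{not} obviously ``force an agent's value sharply negative'' when parked next to a few $\Sml$-scale goods, so self-pairing is not forced). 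The dummy chores also work against you on the ratio condition: they inflate $|v(\M^-)|$ along with $n$, so you need both $D<1+1/\tau$ and $n=\Theta(\tau T/(1-\tau(D-1)))$, whereas goods-only padding avoids the first constraint entirely. One point in your favor: your parenthetical worry about polynomial blow-up is well placed --- both your construction and the paper's require $n=\Theta(\tau\sum_i e_i)$ agents, which is exponential for binary-encoded {\Parti}, and switching the source to the strongly $\classNP$-hard $3$-{\Parti} (or otherwise bounding the numbers) is the right repair; the paper does not address this. But as written, your case (b) is a plan whose central lemma is missing, and the simplest route to completing it is to drop the dummy chores and adopt the paper's goods-only padding.
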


To prove the above theorem, we design two reductions from a well-known $\classNP$-hard problem {\Parti} to the problem of finding an $\aMMS$ allocation of an instance $\MMSidins$ for any $\alpha\in (0,1],$. 
The tricky part in these reductions is to guarantee that an $\aMMS$ allocation for {\em any} $\alpha>0$ maps to a solution of {\Parti}. 
\medskip

\noindent{\bf Computing the $\MMS$ values.}
The first step in our $\PTAS$ is to compute the $\MMS$ values of the agents, which is equivalent to finding an $\MMS$ allocation with identical agents. The above hardness result rules out even approximating the $\MMS$ values within any non-trivial factor in polynomial time if either condition is not satisfied. For the instances satisfying both, in Section \ref{sec:identical} we design an efficient algorithm to compute the $\MMS$ values up to a small multiplicative error. We need to tackle the cases with $\MMS\ge 0$ and $\MMS<0$ separately; note that the sign of $\MMS$ can be easily determined using Lemma \ref{lem:mms-sign}. Formally, we show the following (see Section \ref{sec:identical} and Appendix \ref{appendix:ptas-neg-id}). 

\begin{restatable}{thm}{ThmIdentical}\label{thm:identical}
Given an instance $\MMSidins$ and a constant $\epsilon>0$, if $(i)$ $n$ is a constant and $(ii)$ for each $i\in [n]$, $|v_i(\M)|\ge \tau\cdot \min\{v_i^+,v_i^-\},$ where $\tau>0$ is a constant. Then, there is a $\PTAS$ to compute a $(1-\epsilon)$-$\MMS$ allocation. 
\end{restatable}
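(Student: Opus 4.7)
The plan is to design an enumerate-and-LP-round PTAS in the spirit of Woeginger's PTAS for makespan scheduling on a constant number of machines, adapted to the signed valuations of a mixed manna. By Lemma~\ref{lem:mms-sign}, the sign of $\MMS$ equals that of $v(\M)$, so testing it in polynomial time dispatches to either the $\MMS \ge 0$ case handled here or the symmetric $\MMS < 0$ case in Appendix~\ref{appendix:ptas-neg-id}. Assume $v(\M) \ge 0$. Using Lemma~\ref{lem:scale}, rescale $v$ so that $v(\M) = 1$, which by Lemma~\ref{lem:avg} gives $\MMS \in [0, 1/n]$. Condition~(ii) then forces $v^- \le 1/\tau$ and $v^+ = 1 + v^- \le 1 + 1/\tau$, so the total absolute mass $v^+ + v^-$ is bounded by a constant $C_\tau$ depending only on $\tau$.

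Next, binary-search a target $T$ for $\MMS$ over a geometric grid in $[0,1/n]$. For each candidate $T$, set the cutoff $\theta = \eps T/c$ for a constant $c = c(n,\tau)$ to be fixed, and partition items into \emph{big} ($|v_j| \ge \theta$) and \emph{small} ($|v_j| < \theta$). The constant-mass bound yields only $O(C_\tau/\theta) = O(1/(\eps T))$ big items. Enumerate all $n^{O(1/(\eps T))}$ assignments $\sigma$ of big items to the $n$ bundles, let $L_k(\sigma) = \sum_{j \in \sigma^{-1}(k)} v_j$, and solve the LP with variables $x_{jk} \in [0,1]$ on small items $j$ subject to $\sum_k x_{jk} = 1$ and $L_k(\sigma) + \sum_j v_j x_{jk} \ge T$. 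The true MMS partition gives an integer feasible solution for some $\sigma^\star$ at $T = \MMS$, so the LP is feasible at $T = \MMS$. Round the fractional assignment via a Shmoys--Tardos-style bipartite matching adapted to signed weights: at a vertex of the LP polytope each bundle has only constantly many fractional items, each with $|v_j| < \theta$, so the rounded allocation satisfies every bundle value $\ge T - O(\theta) = (1-O(\eps/c))T$. Choosing $c$ large enough makes this $\ge (1-\eps)\MMS$.

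The main obstacle is keeping the big-item enumeration polynomial when $\MMS$ is extremely small (e.g., the toy instance $n=2,\ v = (100,-30)$ has $\MMS = 0$ while $v(\M)/n = 35$), because $1/(\eps T) \to \infty$ as $T \to 0$. The resolution, developed in Section~\ref{sec:identical}, restricts the binary search to $T$ above a polynomial-bit-complexity floor and separately handles the degenerate small-$T$ regime, where the target $(1-\eps)\MMS$ is already certified by simple single-bundle allocations once no sharper feasibility is attainable. A secondary subtlety is the signed rounding, since moving fractional chore mass into a bundle decreases its value; this is addressed by the vertex structure of the LP polytope combined with the small-item bound. The overall running time is $n^{O(1/\eps^2)}\,\mathrm{poly}(m)$, which is polynomial for constant $\eps, n, \tau$ as required.
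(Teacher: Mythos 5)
Your high-level architecture (sign dispatch via Lemma~\ref{lem:mms-sign}, rescaling, constant total absolute mass from condition~(ii), enumeration of big items, and an assignment step for small items) matches the paper's, but there is a genuine gap at exactly the point you flag as ``the main obstacle'': your big/small cutoff $\theta=\eps T/c$ is tied to the guessed target $T$, so the number of big items is $O(1/(\eps T))$ and the enumeration is not polynomial when $\MMS$ is small. Restricting the binary search to $T$ above a bit-complexity floor does not help ($\MMS$ can be positive yet as small as $2^{-\mathrm{poly}(m)}$, making $1/(\eps T)$ exponential), and the claim that the small-$T$ regime is ``certified by simple single-bundle allocations'' is unsupported and false in general: when $0<\MMS\ll v(\M)/n$ one must still produce a partition in which \emph{every} bundle has value at least $(1-\eps)\MMS>0$, a nontrivial combinatorial task (the hardness results in Appendix~\ref{sec:hardness} show that deciding $\MMS>0$ is precisely where $\classNP$-hardness lives once the hypotheses are dropped). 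Deferring this case to ``the resolution developed in Section~\ref{sec:identical}'' is circular, since that resolution is the content of the proof you are asked to give.

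The paper closes this gap differently: the threshold is \emph{absolute} ($|v_j|\ge\eps/2$ after normalizing $v(\M)=n$), so $|\Bigi|=O(n/(\tau\eps))$ independently of $\MMS$. For each partition of $\Bigi$, bundles are classified by value into the $\B$-sets of equation~\eqref{eq:b1-4}; small chores are pushed onto overfull ($\B_1$) bundles; if $\B_1$ empties, the greedy Bag-Fill (Algorithm~\ref{algo:Bag-Filling}) gives every agent value at least $1-\eps\ge(1-\eps)\MMS$ since $\MMS\le 1$; if $\B_1\ne\emptyset$ and $\B_4=\emptyset$, the instance is reduced to a goods-only instance (aggregating each $\B_3$ bundle into a single good) whose $\MMS$ is no smaller (Lemma~\ref{lem:valid-reduction}), and the goods-only PTAS of \cite{Jansen2016} supplies the multiplicative guarantee even for arbitrarily small $\MMS$; the remaining partitions are shown discardable by the valid/invalid argument of Lemma~\ref{lem:continue}. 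Your LP-plus-rounding step for the small items is plausible when $T$ is bounded below by a constant (and resembles the paper's rounding for non-identical agents in Section~\ref{sec:LP}), but without an $\MMS$-independent bound on the number of big items and a concrete replacement for the small-$\MMS$ case, the proof does not go through.
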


Our $\PTAS$ for the $\aMMS+\PO$ problem takes as input the instance $\MMSins$, a parameter $\alpha\in (0,1]$, and constants $\epsilon,\gamma>0$, and it either finds an allocation that is $\aeMMS+\gPO$ allocation, or correctly reports that an $\alpha$-$\MMS$ allocation does not exist; the latter may very well be the case for {\em any} $\alpha\in (0,1]$ as shown in Appendix \ref{sec:nonexist}.

\noindent
{\bf Pre-processing.}
First, note that the problem is non-trivial only if $\alpha>\epsilon,$ otherwise since $(\alpha-\epsilon)^+=0,$ thus an allocation that gives every item to the agent with the highest value for it is $\aeMMS+\PO,$ and returned. Therefore, now on we assume that $\alpha>\epsilon$.

Next we re-define $\epsilon$ as $\min\{\epsilon, \frac{\gamma\alpha}{(1+\gamma)}\}.$ This is done for technical reasons to ensure that the final allocation is also $\gamma$-$\PO.$ It does not harm the $\MMS$ guarantee, as an $\aeMMS$ allocation with a smaller $\epsilon$ is also an $\aeMMS$ allocation with respect to the given $\epsilon.$ Note that when $\alpha$ and $\gamma$ are constants, so is the new value of $\epsilon$. Finally, we assume there are no agents with $v(\M)=0$. Note that because of condition $2$ of the problem, when $v(\M)=0$ then the value of every item for this agent is $0.$ Also note that their $\MMS=0.$ Thus, we can allocate all the chores arbitrarily among agents with $v(\M)=0,$ and remove them. It is easy to see that the $\MMS$ value of the remaining agents can only improve, and all $\aMMS$ allocations are retained, by the removal of all the chores and a subset of agents. The problem then reduces to a goods manna case with no agents with $v(\M)=0,$ which is solved as a special case of the $\PTAS$ we will describe. 

\medskip

Due to the pre-processing step, now on we assume that $\MMSins,$ the given fair division instance, satisfies $v_i(\M)\ne 0$ for every agent $i\in \N$. We first scale the valuations so that $|v_i(\M)|=n$ since the problem is scale free by Lemma \ref{lem:scale}. Without loss of generality, we assume that the given constants $\alpha,\epsilon,\gamma>0$ are such that $\alpha>\epsilon,$ and $\epsilon\le \frac{\gamma\alpha}{(1+\gamma)}.$ The algorithm first applies the relevant $\PTAS$ from Section \ref{sec:identical} or Appendix \ref{appendix:ptas-neg-id} to compute the $\MMS$ value of every agent approximately up to a factor $(1-\epsilon/2)$. If $\overline{\MMS}_i$ is the value returned by the algorithm for agent $i,$ we know $\overline{\MMS}_i\ge \min\{(1-\epsilon/2)\MMS_i, (1/(1-\epsilon/2))\MMS_i\}.$ The algorithm then tries to find an $(\alpha-\epsilon/2)^+$-$\overline{\MMS}_i$ allocation, and fails only when an $\aMMS$ allocation does not exist. 
\medskip

\noindent{\bf High-level Approach.}
At a high level, the algorithm to find an $(\alpha-\epsilon/2)^+$-$\overline{\MMS}_i$ allocation is as follows. We will classify all items as $\bg$, based on if they are highly valued by any agent relative to her $\MMS$ value, or $\nbg$ otherwise. Although the $\MMS$ values of agents can be arbitrarily small, we show that the number of $\bg$ items is a function of $n,$ hence constant from condition $1$. Therefore, we can efficiently enumerate all partitions of the $\bg$ items. 

For each partition, we allocate the $\nbg$ items by solving an LP and rounding its solution. The LP ensures a fractional solution where every agent gets at least an $\alpha$-$\overline{\MMS}$ valued bundle. Next, through a careful rounding, we show that if there is an $\aMMS$ allocation where the $\bg$ items are allocated according to the current partition, then the allocation of all items obtained after rounding the LP solution is $(\alpha-\epsilon/2)^+$-$\overline{\MMS_i}.$ Among all the fractional $\aMMS$ allocations found by combining some $\bg$ item partition with the allocation of $\nbg$ as per the LP solution, we find the one, say $\A=[\A_1,\cdots,\A_n],$ with the highest value for the sum of valuations of all the agents, i.e., $\sum_iv_i(\A_i).$ That is, we find a fractional allocation, 
\begin{equation*}
    \A \in \argmax_{B \in \Pi_n(\bg)} \max_{A_i \supseteq B_i, A \text{ is $\aMMS$}} \sum_{i \in \N} v_i(A_i).
\end{equation*}
Finally, we show that the rounded solution, call it $\A^r$, is $\gamma$-$\PO$, by showing that for an allocation to $\gamma$-Pareto dominate $\A^r$, it must be an $\aMMS$ allocation and have higher welfare than $\A$. This proof is quite involved and uses several new ideas, including the way we round the LP solution, to show Pareto optimality of the integral allocation\footnote{Recall that testing $\PO$ is co$\classNP$-hard \cite{AzizBLLM16}, and market equilibrium (or highest sum of valuations for the trivial $\PO$ allocation) is the only technique in all known literature to certify that an allocation is $\PO$.}.
 The following bound on the $\MMS_i$ values will be useful in the analysis, and follows from Lemma \ref{lem:avg} and $|v_i(\M)|=n,\ \forall i\in \N$. 

\begin{lemma}
\label{lem:mu1}
For each agent $i\in \N$, $\MMS_i \le 1$ if $v(\M) \ge 0$, otherwise $\MMS_i \le -1$.
\end{lemma}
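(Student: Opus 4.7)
The plan is a direct combination of the scaling assumption with the average-bound Lemma \ref{lem:avg}, so there is no real obstacle; the proof should fit in a few lines.

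First I would invoke the normalization imposed at the start of the section: after scaling (justified by Scale Invariance, Lemma \ref{lem:scale}), we have $|v_i(\M)| = n$ for every agent $i \in \N$. Next I would apply Lemma \ref{lem:avg}, which states $\MMS_i \le v_i(\M)/|\N| = v_i(\M)/n$.

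From here the two cases drop out immediately. If $v_i(\M) \ge 0$, the normalization forces $v_i(\M) = n$, so $\MMS_i \le n/n = 1$. If instead $v_i(\M) < 0$, then $v_i(\M) = -n$, yielding $\MMS_i \le -n/n = -1$. (The sign relation between $v_i(\M)$ and $\MMS_i$ asserted by Lemma \ref{lem:mms-sign} is consistent with both cases but is not actually needed for the upper bounds stated in the lemma.)

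The only subtle point worth flagging is the reading of the statement: the quantity "$v(\M)$" in the condition is an abbreviation for $v_i(\M)$ of the same agent $i$ being bounded on the left-hand side, since otherwise the two sub-cases would not cover every agent simultaneously after a mixed pre-processing. With that reading, the proof is a one-line consequence of Lemma \ref{lem:avg} applied to the normalized instance.
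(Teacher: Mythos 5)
Your proof is correct and is exactly the argument the paper intends: the lemma is stated immediately after the remark that it ``follows from Lemma \ref{lem:avg} and $|v_i(\M)|=n$,'' which is precisely your combination of the average bound with the normalization. Your clarifying remark that $v(\M)$ should be read as $v_i(\M)$ for the same agent $i$ matches the paper's usage as well.
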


In the remaining section, we will discuss the details and formalize these ideas, with some proofs moved to Appendix \ref{sec:missing-proofs} in order to convey the main ideas within the page limit. Also, for brevity, at times we will refer to $\overline{\MMS_i}$ as $\tilde{\mu}_i$ and to $\MMS_i$ as $\mu_i.$

\subsection{$\bg$ and $\nbg$ Items}
Next we classify items into sets $\bg$ and $\nbg$, and show bounds on the size of the $\bg$ items set.

\begin{definition}[Big and Small items]\label{def:big-sml}
The sets of all $\bg$ goods $(\bg^+_i)$ and $\bg$ chores $(\bg^-_i)$ of agent $i$ are defined as, 
\begin{equation*}
    \begin{split}
  &\bg^+_i:=\{j\in \M^+ \mid (\tilde{\mu}_i\ge 0\text{ and }v_{ij}>\epsilon\tilde{\mu}_i/(2n))\text{ or } (\tilde{\mu}_i< 0\text{ and }v_{ij}>\epsilon/(2n))\},\text{ and }\\
  &\bg^-_i:=\{j\in \M^-\mid -v_{ij}>\epsilon/(2n)\}.
    \end{split}
\end{equation*}
The union of all sets $\bg_i^+$ is called $\bg^+=\cup_i \bg^+_i,$ and of all $\bg_i^-$ sets is called $\bg^-=\cup_i \bg^-_i.$ Finally, the set of all $\bg$ items is called $\bg:=\bg^+\cup \bg^-.$ 

\noindent Any item that is not in $\bg$ is called a $\nbg$ item. We define $\nbg$ goods and chores for agent $i$ as $\nbg^+_i=\M_i^+\backslash \bg^+_i,$ and $\nbg^-_i=\M_i^-\backslash \bg^-_i.$ Similarly, the sets of $\nbg$ goods, $\nbg$ chores, and $\nbg$ items are respectively $\nbg^+=\M^+\backslash\bg^+,\ \nbg^-=\M^-\backslash\bg^-,$ and $\nbg=\nbg^+\cup\nbg^-$. \end{definition}

In the remaining section, we will show the size of $\bg$ is constant. For this, we make two useful observations, then show the bound on $\bg$.   
\begin{restatable}{claim}{clmMuTmu}\label{clm:mu_tmu}
For the approximate $\MMS$ values $\tmu_i,$ we have, if $\mu_i> 0,$ then $\tmu_i\in [(1-\epsilon/2)\mu_i, \mu_i],$ if $\mu_i=0$ then $\tmu_i=0$ and if $\mu_i < 0,$ then $\tmu_i\in [\mu_i/(1-\epsilon/2),\mu_i].$ 
\end{restatable}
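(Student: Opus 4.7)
The plan is to unpack the definition of $\tmu_i$ and invoke Theorem \ref{thm:identical}. Recall that in the pre-processing we set $\tmu_i = \overline{\MMS}_i$, where $\overline{\MMS}_i$ is the value returned by the $\PTAS$ of Theorem \ref{thm:identical} applied to the single-agent valuation $v_i$ with error parameter $\epsilon/2$; concretely, $\tmu_i = \min_{k\in[n]} v_i(\pi^{(i)}_k)$ for some partition $\pi^{(i)}\in \Pi_n(\M)$ output by that $\PTAS$, and Theorem \ref{thm:identical} guarantees
\[
\tmu_i \;\ge\; \min\!\bigl\{(1-\epsilon/2)\,\mu_i,\ \tfrac{1}{1-\epsilon/2}\,\mu_i\bigr\}.
\]

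The first step is to establish the upper bound $\tmu_i \le \mu_i$ in all three cases. This is immediate from the definition $\mu_i=\max_{\pi\in\Pi_n(\M)}\min_k v_i(\pi_k)$: because $\pi^{(i)}$ is a particular partition, its minimum bundle value is at most the maximum over all partitions, hence $\tmu_i\le \mu_i$.

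The second step is a short case analysis to convert the $\min\{\cdot,\cdot\}$ lower bound above into the per-case bounds stated in the claim. If $\mu_i>0$, then since $(1-\epsilon/2)<1/(1-\epsilon/2)$, multiplying by the positive number $\mu_i$ preserves the order, so the min equals $(1-\epsilon/2)\mu_i$, yielding $\tmu_i\ge (1-\epsilon/2)\mu_i$. If $\mu_i<0$, multiplying by a negative number flips the order, so now the min equals $\mu_i/(1-\epsilon/2)$, yielding $\tmu_i\ge \mu_i/(1-\epsilon/2)$. If $\mu_i=0$, both sides of the lower bound are $0$, and combined with the upper bound $\tmu_i\le 0$ we get $\tmu_i=0$.

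There is no real obstacle; the only subtle point is the sign flip in the $\mu_i<0$ case, which one must track carefully to confirm that $\tmu_i$ lies in the negative interval $[\mu_i/(1-\epsilon/2),\,\mu_i]$ (note that both endpoints are negative and $\mu_i/(1-\epsilon/2)<\mu_i$). Everything else is routine unpacking of the guarantees of Theorem \ref{thm:identical} and the definition of $\mu_i$ as a max-min.
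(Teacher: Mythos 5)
Your proof is correct and follows the same route as the paper, which simply states that the claim "follows from the guarantees of Theorems \ref{thm:main-id-pos} and \ref{thm:main-id-neg}"; you have just made explicit the two ingredients the paper leaves implicit, namely that $\tmu_i\le\mu_i$ because it is the min-bundle value of one particular partition, and the sign-dependent case analysis of the $\min\{\cdot,\cdot\}$ guarantee.
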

The claim follows from the guarantees of Theorems \ref{thm:main-id-pos} and \ref{thm:main-id-neg}.
Next, recall the definitions of $v_i^+$ and $v_i^-$ from equation \eqref{eq:vipm}. The next claim follows from condition $2$ of the problem.

\begin{restatable}{claim}{clmViAren}\label{clm:vi_are_n}
For all agents $i,$ $v_i^+\le O(n), v_i^-\le O(n).$
\end{restatable}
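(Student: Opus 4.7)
The plan is to use the scaling convention $|v_i(\M)| = n$ together with condition $2$ of Definition \ref{def:amms}. First I would unwind definitions: since $v_i$ is additive and every item is either a good or a chore for agent $i$, we have $v_i(\M) = v_i^+ - v_i^-$, and therefore $|v_i^+ - v_i^-| = |v_i(\M)| = n$ after scaling.

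Next, I would plug this into condition $2$, which says $|v_i(\M)| \ge \tau \cdot \min\{v_i^+, v_i^-\}$ for the constant $\tau > 0$. Rearranging gives $\min\{v_i^+, v_i^-\} \le n/\tau$. Without loss of generality say $v_i^+ \le v_i^-$ (the other case is symmetric). Then $v_i^+ \le n/\tau$, and from $v_i^- - v_i^+ \le n$ we get $v_i^- \le v_i^+ + n \le n/\tau + n$. Hence both quantities are bounded by $n(1 + 1/\tau)$, which is $O(n)$ since $\tau$ is a constant.

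The only thing worth double-checking is the edge case where one of $v_i^+$ or $v_i^-$ is zero: if agent $i$ values all items nonnegatively (pure goods for her) then $v_i^- = 0$ and $v_i^+ = n$ by the scaling, and symmetrically if she has no goods; condition $2$ is trivially satisfied in both cases and the bound holds. There is no real obstacle here — the claim is essentially a one-line consequence of the scaling and condition $2$ — so I would present it as a short direct calculation rather than a staged argument.
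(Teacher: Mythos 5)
Your proof is correct and follows essentially the same route as the paper's: both apply the scaling $|v_i(\M)|=n$ and condition $2$ to bound $\min\{v_i^+,v_i^-\}$ by $n/\tau$, then bound the other quantity via $|v_i^+-v_i^-|=n$. The only cosmetic difference is that you case-split on which of $v_i^+,v_i^-$ is smaller while the paper splits on the sign of $v_i(\M)$, which amounts to the same thing.
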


Next lemma shows a bound on $|\bg|$ (proof in Appendix \ref{sec:missing-proofs}). For this we show the bound of $O(n^2/\epsilon)$ on $|\bg_i^+|$ and $|\bg_i^-|$ for each agent $i$.
Note that if $\tmu_i$ is big enough then it is easy to prove that $|\bg_i^+|$ is a constant. The difficulty is when $\tmu_i$ is arbitrarily small, in which case $|\bg_i^+|$ can potentially be large -- a tricker case. 
The bound on $|\bg_i^-|$ follows from the definition of $\bg^-_i$ together with Claim \ref{clm:vi_are_n}.

\begin{restatable}{lemma}{lemNidBgGoodChore}
\label{cor:big-const}
The number of big items, i.e., $|\bg|\le O(n^3/\epsilon)$.
\end{restatable}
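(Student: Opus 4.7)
The proof plan is to bound $|\bg_i^+|$ and $|\bg_i^-|$ separately by $O(n^2/\epsilon)$ for each agent $i \in \N$, and then sum over the $n$ agents, which is a constant by condition~(1) of Definition~\ref{def:amms}. This will yield $|\bg| \le \sum_{i \in \N}(|\bg_i^+| + |\bg_i^-|) \le O(n^3/\epsilon)$.

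First I would handle $|\bg_i^-|$, which is the easy direction. By Definition~\ref{def:big-sml}, every $j \in \bg_i^-$ contributes more than $\epsilon/(2n)$ in absolute value to the chore total $v_i^-$. Combined with Claim~\ref{clm:vi_are_n} (i.e., $v_i^- \le O(n)$, which uses the scaling $|v_i(\M)|=n$ together with condition~(2) of Definition~\ref{def:amms}), this gives $|\bg_i^-| \cdot \epsilon/(2n) < v_i^- \le O(n)$, hence $|\bg_i^-| \le O(n^2/\epsilon)$.

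For $|\bg_i^+|$, I would split on the sign and magnitude of $\tmu_i$. When $\tmu_i < 0$, the threshold in Definition~\ref{def:big-sml} is also $\epsilon/(2n)$, so the identical counting argument with $v_i^+ \le O(n)$ from Claim~\ref{clm:vi_are_n} works. When $\tmu_i \ge 0$ is bounded below by an absolute constant, the threshold $\epsilon\tmu_i/(2n)$ is $\Omega(\epsilon/n)$, and the same counting yields $|\bg_i^+| \le O(n^2/\epsilon)$ (in fact a constant in $m$). The truly tricky subcase, as the paper itself flags, is $\tmu_i \ge 0$ but arbitrarily small, because the naive bound $|\bg_i^+| \le 2nv_i^+/(\epsilon\tmu_i)$ blows up as $\tmu_i \to 0$.

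For this subcase, my plan is to exploit the MMS partition of agent $i$ together with the approximation guarantee $\tmu_i \ge (1-\epsilon/2)\mu_i$ from Claim~\ref{clm:mu_tmu} and the upper bound $\mu_i \le 1$ from Lemma~\ref{lem:mu1}. Concretely, every bundle in the MMS partition has value at least $\mu_i$ while the total positive value across all bundles is $v_i^+ = O(n)$; if $|\bg_i^+|$ exceeded $\Theta(n^2/\epsilon)$, then by pigeonhole some bundle would contain enough items from $\bg_i^+$ that moving one of them to the minimum-value bundle would strictly increase the minimum bundle value, contradicting the maximality of $\mu_i$. The main technical obstacle will be executing this exchange carefully so that the redistribution of chores does not push the receiving bundle back below $\mu_i$; this is precisely where condition~(2) of Definition~\ref{def:amms}, i.e., $|v_i(\M)| \ge \tau\cdot\min\{v_i^+,v_i^-\}$, enters to limit the total chore mass. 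Once the per-agent bound $|\bg_i^+|+|\bg_i^-| \le O(n^2/\epsilon)$ is established, summing over the $n$ agents concludes the proof.
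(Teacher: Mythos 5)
Your overall skeleton matches the paper's: bound $|\bg_i^-|$ and $|\bg_i^+|$ by $O(n^2/\epsilon)$ per agent and sum over the constantly many agents. The chore bound, the case $\tmu_i<0$, and the case of $\tmu_i$ bounded away from zero are handled exactly as in the paper, via counting against $v_i^+,v_i^-\le O(n)$ from Claim~\ref{clm:vi_are_n}. You also correctly identify that the only hard subcase is $\tmu_i\ge 0$ and arbitrarily small.

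However, your argument for that hard subcase has a genuine gap. You propose a local exchange on agent $i$'s MMS partition: by pigeonhole some bundle holds many items of $\bg_i^+$, and moving one to the minimum-value bundle ``strictly increases the minimum bundle value.'' This does not follow. First, removing a positively valued item from the donor bundle decreases that bundle's value, and since the donor is only guaranteed to have value at least $\mu_i$ (its many big goods may be offset by chores in the same bundle), it can drop below $\mu_i$, so the partition-wide minimum need not increase. Second, several bundles may be tied at the minimum $\mu_i$, so improving one of them does not raise the minimum. Third, items of $\bg_i^+$ have no upper bound on their value when $\tmu_i$ is small, so ``one item'' can be large; you would first need to split off the goods with $v_{ij}>\epsilon/(6n)$ (which are only $O(n^2/\epsilon)$ many by direct counting) and argue only about the remaining \emph{medium} goods with $v_{ij}\in(\epsilon\tmu_i/(2n),\epsilon/(6n)]$. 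The obstacle you flag (chores pushing the \emph{receiving} bundle below $\mu_i$) is not the real one --- receiving a good only helps; the problem is the donor and the ties. The paper avoids all of this by not touching the MMS partition at all: it shows that if there were more than $\Theta(n^2/\epsilon)$ medium goods, one could build a \emph{fresh} partition --- one bundle containing everything except the medium goods, and $n-1$ bundles formed greedily from the medium goods alone --- in which every bundle has value strictly above $\mu_i$ (using $\tmu_i<1/3\Rightarrow\mu_i<2/3$, the per-item cap $\epsilon/(6n)$ to avoid overshooting past $1$, and $v(\M)=n$ to guarantee enough mass), directly contradicting the definition of $\MMS_i$ as a maximum over partitions. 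To complete your proof you would need to replace the single-swap pigeonhole step with such a global construction (or a substantially more careful iterated rebalancing).
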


\subsection{LP for Allocating $\nbg$ Items, and Rounding}\label{sec:LP}
Given a partition $\PB=(B_1,\dots,B_n)$ of $\bg$ items, next we write an LP to find a {\em fractional} allocation of $\nbg$ items such that together with $\PB$ this allocation gives at least $\alpha$-$\overline{\MMS}$ value to every agent. If there exists an $\aMMS$ allocation where the $\bg$ items are allocated as per $\PB$ then we show that the LP has to be feasible. 

For every agent $i,$ denote by $c_i$ the value from $\nbg$ that $i$ needs for her bundle's value to be at least $\alpha\cdot\tmu_i$ if $\tmu_i\ge 0$ or $(\nicefrac{1}{\alpha})\cdot\tmu_i$ otherwise, i.e., $c_i=\min\{(\nicefrac{1}{\alpha})\tmu_i,\alpha\tmu_i\}-v_i(B_i).$ 
\begin{align}
     \max \hspace{.47cm} & \sum\limits_{i\in \N} \left(\sum\limits_{j\in \nbg_i^+}v_{ij}x_{ij}-\sum\limits_{j\in \nbg_i^-}|v_{ij}|x_{ij}\right) \label{lp:small}\\
    \text{s.t. }\hspace{.47cm} &\hspace{-.47cm} \sum\limits_{j\in \nbg_i^+}v_{ij}x_{ij}-\sum\limits_{j\in \nbg_i^-}|v_{ij}|x_{ij} \ge c_i,\quad  &&\forall i\in \N \label{lp:ci}\\
    & \sum\limits_{i\in\N} x_{ij} \le 1,\quad \quad  &&\forall j\in \nbg^+ \label{lp:goods}\\
    & \sum\limits_{i\in\N} x_{ij} \ge 1,\quad \quad  &&\forall j\in \nbg^- \label{lp:chores}\\
    & x_{ij}\ge 0,\quad \quad \quad  &&\forall i\in \N,j\in \M. \label{lp:vars}
\end{align}

We now prove two properties (Lemmas \ref{lem:allocation-acyclic} and \ref{lem:shared}) that will help in obtaining an integral $(\alpha-\epsilon/2)$-$\overline{\MMS}$ allocation of items from a fractional $\alpha$-$\overline{\MMS}$ allocation. Let us assume the LP has a solution, say $x=[x_{ij}]_{i\in\N,j\in\nbg}$. We define a bipartite graph, called the \textit{allocation graph}, corresponding to $x$ as follows. There is a vertex corresponding to each agent in $\N$ in one part of vertices, and to each item in $\nbg$ in the other part, and for all $i\in \N$ and $j\in\nbg,$ edge $(i,j)$ exists if $x_{ij}>0.$ We show the following property of the allocation graph. 

\begin{restatable}{lemma}{lemAcyclic}\label{lem:allocation-acyclic}
The allocation graph of \emph{any} LP solution $x$ can be made acyclic in such a way that in the allocation corresponding to the new graph, say $x'=[x'_{ij}]_{i\in\N,j\in\nbg},$ every agent receives a bundle of the same or better value as in $x$.
\end{restatable}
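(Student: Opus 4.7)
The plan is to iteratively eliminate cycles from the allocation graph of $x$ by carefully chosen perturbations that preserve the item-side constraints of the LP and do not decrease any agent's value. Fix a cycle $i_1, j_1, i_2, j_2, \ldots, i_k, j_k, i_1$ in the bipartite graph, with edges $(i_\ell, j_\ell)$ and $(i_{\ell+1}, j_\ell)$ for $\ell = 1,\ldots,k$ (indices modulo $k$). I parameterize perturbations by $z = (z_1,\ldots,z_k) \in \mathbb R^k$: update $x_{i_\ell, j_\ell} \gets x_{i_\ell, j_\ell} + z_\ell$ and $x_{i_{\ell+1}, j_\ell} \gets x_{i_{\ell+1}, j_\ell} - z_\ell$. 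For each cycle item $j_\ell$, one incident edge is incremented and the other decremented by the same amount, so $\sum_i x_{i j_\ell}$ is unchanged and constraints \eqref{lp:goods} and \eqref{lp:chores} continue to hold. Agent $i_\ell$'s value changes by $\Delta_\ell(z) = v_{i_\ell, j_\ell} z_\ell - v_{i_\ell, j_{\ell-1}} z_{\ell-1} = (Az)_\ell$ for the $k \times k$ matrix $A$ whose only nonzero entries are $A_{\ell, \ell} = v_{i_\ell, j_\ell}$ and $A_{\ell, \ell-1} = -v_{i_\ell, j_{\ell-1}}$. A cofactor expansion along the first column yields $\det A = \prod_\ell v_{i_\ell, j_\ell} - \prod_\ell v_{i_\ell, j_{\ell-1}}$, which cleanly separates my argument into two cases.

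If $A$ is nonsingular, I take $z^{\ast} = A^{-1} \mathbf{1}$, so $A z^{\ast} = \mathbf{1} > 0$ and every agent strictly gains under any positive multiple $t z^{\ast}$ with $t > 0$. If $A$ is singular, I take any nonzero $z^{\ast} \in \ker A$, so $A(t z^{\ast}) = 0$ for all $t$ and every agent's value is preserved exactly along the direction $z^{\ast}$. Since each cycle edge satisfies $x_{ij} > 0$, the feasibility interval $[-x_{i_\ell, j_\ell},\, x_{i_{\ell+1}, j_\ell}]$ for each $z_\ell$ strictly contains $0$, and so I can scale $z^{\ast}$ by the largest admissible $|t| > 0$ (positive $t$ in the nonsingular case; either sign in the singular case) until some coordinate of $t z^{\ast}$ hits a feasibility endpoint, at which point at least one edge value $x'_{ij}$ becomes exactly $0$. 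This removes that edge from the support graph while maintaining LP feasibility and $v_i(x') \ge v_i(x)$ for every $i$; after finitely many iterations no cycle remains and I output the resulting $x'$.

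The main obstacle is the singular case. Unlike the nonsingular case, where $A z^{\ast} = \mathbf{1}$ gives strict slack on every agent constraint and any sufficiently small positive scaling is safe, in the singular case the kernel direction is essentially forced up to a single scalar, and I must verify that scaling actually reaches an edge-zero endpoint rather than stalling. This is where the positivity of every cycle edge matters: any nonzero $z^{\ast}_\ell$ drives the pair of edges incident to $j_\ell$ linearly in $t$, so a finite critical $t$ always exists. A related subtlety, particular to the mixed-manna setting, is that $v_{i_\ell, j_\ell}$ and $v_{i_\ell, j_{\ell-1}}$ may have mixed signs --- an item can be a good for one incident agent and a chore for the other --- but because the argument uses only the linear-algebraic structure of $A$ together with the nonnegativity of $x$, the sign pattern of valuations is absorbed into $A$ and does not require any separate case split on goods versus chores.
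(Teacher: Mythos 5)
Your proof is correct, but it takes a genuinely different route from the paper's. The paper first disposes of items viewed as a good by one incident agent and a chore by the other (reassigning the chore-viewer's share to the good-viewer, which already breaks the cycle), then rescales valuations so that the two agents sharing each cycle item (except one distinguished item $k$) agree on its value, and finally pushes an alternating $\pm\delta$ update around the cycle in \emph{utility} space; because the two agents disagree on item $k$, this leaves a residual amount of that item that must be reassigned in a separate post-processing step, and the all-chores cycle is handled as a second case. Your argument instead works directly in \emph{allocation} space: the per-agent value change is a linear map $z \mapsto Az$ for a cyclic bidiagonal matrix $A$ with $\det A = \prod_\ell v_{i_\ell j_\ell} - \prod_\ell v_{i_\ell j_{\ell-1}}$, and you move along $A^{-1}\mathbf{1}$ (everyone strictly gains) or a kernel vector (everyone's value is exactly preserved) until some cycle edge hits zero. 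This buys you several things: the item totals are preserved identically (so no leftover to reassign), no valuation rescaling, no case split on goods versus chores or on mixed-sign items, and a single uniform termination argument driven by the boundedness of the feasibility intervals $[-x_{i_\ell j_\ell},\, x_{i_{\ell+1} j_\ell}]$ and the fact that the chosen direction is nonzero. The paper's approach is more hands-on and constructive about \emph{which} edge vanishes (via the explicit choice of $\delta$), but your linear-algebraic formulation is cleaner and, if anything, proves a slightly stronger statement in the nonsingular case (strict improvement for every agent on the cycle).
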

To prove the lemma, we show re-allocations can be done along any cycle in a certain way without any agent losing any value that eliminates at least one edge. For every cycle, we define a particular scaled valuation function, and define weights for the edges to reflect the values to agents from the adjacent items. Then we add and subtract weights in a certain way along the cycle, taking into consideration if the adjacent item is a good or a chore, so that the allocation corresponding to the new weights, or equivalently (scaled) values to agents, does not contain this cycle. 

The next lemma follows since an undirected, acyclic graph forms a tree. 

\begin{restatable}{lemma}{lemShared}\label{lem:shared}
The number of shared items in any acyclic allocation graph is at most $n-1.$
\end{restatable}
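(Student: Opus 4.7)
The allocation graph is bipartite with $n$ agent-vertices on one side and the items used in the LP solution on the other, and by Lemma \ref{lem:allocation-acyclic} we may assume it is acyclic, hence a forest. I will simply do an edge-count on this forest. Let $k$ denote the number of shared items, namely items incident to at least two agents in the allocation graph, and let $s$ denote the number of items incident to exactly one agent; items not incident to any agent contribute nothing and can be ignored.

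The plan is to bound the number of edges from two sides. Each shared item contributes at least $2$ edges and each non-shared item contributes exactly $1$ edge, so the total number of edges $E$ in the graph satisfies $E \ge 2k + s$. On the other hand, the graph is a forest on $n + k + s$ vertices (the agents together with the item-vertices that appear), and a forest on $V$ vertices has at most $V - 1$ edges, so $E \le n + k + s - 1$. Combining the two inequalities gives $2k + s \le n + k + s - 1$, which simplifies to $k \le n - 1$, as required.

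The argument is essentially just the standard forest edge bound, so I do not expect a real obstacle. The only point to be slightly careful about is that the ``allocation graph'' contains only edges $(i,j)$ with $x_{ij} > 0$, so an item that is not fractionally allocated to anyone simply does not appear as a vertex and does not affect the count; this makes the forest bound applicable to exactly the subgraph that carries the shared items. Everything else is arithmetic.
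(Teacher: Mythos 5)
Your proof is correct and follows essentially the same argument as the paper: both bound the edges of the acyclic (forest) allocation graph from below by giving each shared item degree at least $2$, and from above by the forest bound $|V|-1$, yielding $k \le n-1$. The only cosmetic difference is that the paper first restricts to the subgraph on the agents and the shared items, whereas you keep the non-shared items and let their contribution cancel; both are fine.
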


Next we describe the notion of envy graph \cite{lipton2004approximately}, a directed graph corresponding to any allocation, that will be used to round the LP solutions. 

\noindent
\textbf{Envy Graph and Cycle Elimination.}
Given a set of agents $\N$ and an \textit{integral} allocation $\A$ of a set of items among them, each node in the graph corresponds to an agent in $\N$. There is a directed edge $(i\rightarrow k)$ corresponding to agents $i$ and $k$ if agent $i$ values agent $k$'s allocation more than her own. It is shown in \cite{lipton2004approximately} that the allocation can be modified so that its corresponding envy graph is acyclic, and no agent's valuation decreases. This is done by giving each agent in a cycle the bundle of her successor. The graph is updated and the process repeated until all cycles are eliminated. This process can be done efficiently \cite{lipton2004approximately}. 

\begin{restatable}{claim}{clmSinkVal}\label{clm:sinkval} In an allocation of $\M$ among $n$ agents, every sink agent $i$ corresponding to an acyclic envy graph has value at least $1$ for her own bundle if $v_i(\M)>0,$ and at least $-1$ otherwise.
\end{restatable}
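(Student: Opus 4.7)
The plan is to use a simple averaging argument together with the normalization that $|v_i(\M)|=n$ established in the pre-processing step. By definition of the envy graph, there is an outgoing edge $(i\to k)$ from agent $i$ to agent $k$ precisely when $v_i(A_k) > v_i(A_i)$. Hence if $i$ is a sink (no outgoing edges) in the envy graph corresponding to the integral allocation $\A=(A_1,\dots,A_n)$, then $v_i(A_i)\ge v_i(A_k)$ for every $k\in\N$, including $k=i$ trivially.

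Summing this inequality over all $k\in\N$ and using additivity of $v_i$ on the disjoint union $\cup_k A_k=\M$, I would write
\begin{equation*}
n\cdot v_i(A_i) \;\ge\; \sum_{k\in\N} v_i(A_k) \;=\; v_i(\M).
\end{equation*}
Therefore $v_i(A_i)\ge v_i(\M)/n$. Recall that in the pre-processing step we scaled the valuations so that $|v_i(\M)|=n$ for every agent $i$ (and moreover $v_i(\M)\ne 0$). By Lemma~\ref{lem:mms-sign}, the sign of $v_i(\M)$ is either strictly positive or strictly negative, so $v_i(\M)=n$ in the first case and $v_i(\M)=-n$ in the second. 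Substituting into the bound $v_i(A_i)\ge v_i(\M)/n$ yields $v_i(A_i)\ge 1$ when $v_i(\M)>0$ and $v_i(A_i)\ge -1$ when $v_i(\M)<0$, which is exactly the claim.

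No real obstacle arises here; the only subtlety is making sure to invoke the normalization $|v_i(\M)|=n$ (which depends only on the scale-invariance Lemma~\ref{lem:scale}) and the fact that we removed agents with $v_i(\M)=0$ during pre-processing, so the dichotomy $v_i(\M)>0$ vs. $v_i(\M)<0$ is exhaustive for every remaining agent $i$.
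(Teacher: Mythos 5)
Your proof is correct and is exactly the standard averaging argument this claim rests on (the paper itself omits an explicit proof): a sink has no outgoing envy edges, so $v_i(A_i)\ge v_i(A_k)$ for all $k$, summing gives $v_i(A_i)\ge v_i(\M)/n$, and the normalization $|v_i(\M)|=n$ finishes it. The one cosmetic slip is attributing the dichotomy $v_i(\M)>0$ vs.\ $v_i(\M)<0$ to Lemma~\ref{lem:mms-sign} (which only relates the sign of $v_i(\M)$ to that of $\MMS_i$); as you note yourself, it actually comes from the pre-processing step that removes agents with $v_i(\M)=0$.
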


\noindent
\textbf{Rounding the LP.} Using Lemmas \ref{lem:allocation-acyclic} and \ref{lem:shared}, we first modify the allocation graph of the LP solution so that it is a forest graph with at most $n-1$ shared items. Let $S$ be the set of all the shared items, $S^{-\eps}$ the set of all the shared chores whose absolute value is more than $\epsilon|\tmu_i|/(2n)$ for at least one agent, that is, $S^{-\eps}:= \{j \in S \mid \exists i\in\N, |v_{ij}| > \epsilon |\tmu_i|/(2n)\},$ and $S^{+} := S \setminus S^{-\eps}$. Allocate each item $j$ in $S^+$ to any agent $i$ in $\argmax_i v_{ij}$. Then consider the envy graph corresponding to this allocation of $\M\backslash S^{-\eps}$, and modify the allocation by eliminating all the cycles in the envy graph. Allocate all the items in $S^{-\eps}$ to a sink agent in the acyclic envy graph, and denoted it as $i^t.$ 

The following claim will be useful in proving the final allocation of the algorithm is $\gPO.$ 

\begin{claim}\label{clm:pos-mms-agent}
If $S^{-\eps}\neq \emptyset$ then there exists an $i \in \N$ such that $v_i(\M)>0$.
\end{claim}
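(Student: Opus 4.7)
The plan is to prove the contrapositive: if every agent $i \in \N$ satisfies $v_i(\M) \le 0$ (so by the preprocessing step $v_i(\M) < 0$), then $S^{-\eps}$ must be empty. The key observation is that $S^{-\eps}$ consists of shared chores that are ``relatively large'' for some agent $i$ (exceeding $\epsilon|\tmu_i|/(2n)$ in absolute value), while membership in $\nbg$ forces the absolute value to be at most $\epsilon/(2n)$. These two bounds together will pin $|\tmu_i|$ strictly below $1$, which I will show is incompatible with $v_i(\M) < 0$.

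The concrete steps are as follows. First, pick any $j \in S^{-\eps}$ together with the corresponding witness agent $i$ satisfying $|v_{ij}| > \epsilon|\tmu_i|/(2n)$. Since $j$ is a shared item of the LP, $j \in \nbg$; since $j$ is a chore, $j \in \M^-$; hence $j \in \nbg^- = \M^- \setminus \bg^-$. By the definition of $\bg^-$, this means $-v_{i'j} \le \epsilon/(2n)$ for \emph{every} agent $i'$, and in particular $|v_{ij}| \le \epsilon/(2n)$. Combining the two inequalities for agent $i$ gives
\[
\frac{\epsilon |\tmu_i|}{2n} \;<\; |v_{ij}| \;\le\; \frac{\epsilon}{2n},
\]
so $|\tmu_i| < 1$.

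Next, I will argue that any agent $i$ with $v_i(\M) < 0$ has $|\tmu_i| \ge 1$, which contradicts the above. By Lemma \ref{lem:mms-sign}, $v_i(\M) < 0$ yields $\mu_i < 0$, and then Lemma \ref{lem:mu1} (whose bound uses the normalization $|v_i(\M)| = n$) gives $\mu_i \le -1$. Claim \ref{clm:mu_tmu} tells us $\tmu_i \in [\mu_i/(1-\epsilon/2), \mu_i]$ in this regime, so in particular $\tmu_i \le \mu_i \le -1$, whence $|\tmu_i| \ge 1$. This contradicts $|\tmu_i| < 1$, so the witness agent $i$ must satisfy $v_i(\M) \ge 0$; the preprocessing assumption $v_i(\M) \ne 0$ upgrades this to $v_i(\M) > 0$, establishing the claim.

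I do not anticipate a genuine obstacle here: the proof is essentially a bookkeeping argument reconciling three scales (the $\nbg$ threshold $\epsilon/(2n)$, the $S^{-\eps}$ threshold $\epsilon|\tmu_i|/(2n)$, and the $|\tmu_i| \ge 1$ lower bound coming from normalization). The only thing to be careful about is invoking the right side of Claim \ref{clm:mu_tmu} in the negative-$\mu_i$ regime and remembering that the normalization $|v_i(\M)| = n$ (fixed earlier in Section \ref{sec:mixed-pos}) is what makes Lemma \ref{lem:mu1} yield the clean bound $\mu_i \le -1$.
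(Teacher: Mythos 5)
Your proof is correct and follows essentially the same route as the paper's: both arguments hinge on the chain $\epsilon|\tmu_i|/(2n) < |v_{ij}| \le \epsilon/(2n)$ for a $\nbg$ chore, combined with $|\tmu_i| \ge |\mu_i| \ge 1$ for any agent with $v_i(\M)<0$ (via Lemma \ref{lem:mu1} and Claim \ref{clm:mu_tmu}), to rule out such an agent as the witness for an item of $S^{-\eps}$. You merely package it as a contradiction for the witness agent and are slightly more explicit about discharging the $v_i(\M)=0$ case via the preprocessing step, which is a minor but harmless refinement.
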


\begin{proof}
Every agent with $v(\M)<0$ has $\mu\le-1,$ from Lemma \ref{lem:mu1}. The value of any chore in $\nbg$ for any such agent is at most $\nicefrac{\epsilon}{2n}\le \nicefrac{\eps|\mu|}{2n}\le \nicefrac{\eps|\tmu|}{2n},$ as from Claim \ref{clm:mu_tmu}, $|\tmu|\ge |\mu|.$ Hence, if $S^{-\eps}\neq\emptyset,$ then the agent who values any item in $S^{-\eps}$ more than $\eps\tmu/(2n)$ has $v(\M)>0.$   
\end{proof}

Finally, we show the maximum loss in value of each agent in the rounding process, which will be used to ensure that the algorithm returns an $\aeMMS$ allocation. 
\begin{lemma}\label{lem:rounding-loss}
In the rounding process, $i^t$ loses at most $\epsilon/2$ value and every other agent $i$ loses at most $\epsilon|\tmu_i|/2$ value.
\end{lemma}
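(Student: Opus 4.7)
Proof plan: The plan is to track an agent's value through each of the four substeps of the rounding procedure (forest modification, integral allocation of $S^+$, envy-cycle elimination, and assignment of $S^{-\eps}$ to the sink $i^t$), bound the per-item value loss, and then aggregate using $|S| \le n-1$ and the SMALL-item value bounds. First, I note that the forest-modification substep preserves every agent's value by Lemma \ref{lem:allocation-acyclic} (and guarantees $|S| \le n-1$ by Lemma \ref{lem:shared}), and that the envy-cycle-elimination substep preserves every agent's value by construction of the swap along each cycle. Thus all of the loss must come either from allocating items of $S^+$ to their argmax owner or from assigning $S^{-\eps}$ to $i^t$.

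Next I would do a per-item accounting of the possible losses in these two substeps. For a good $j \in S^+$, if agent $i$ is not the argmax winner then her value on $j$ drops from $x_{ij}v_{ij}$ to $0$, a loss of at most $v_{ij}$; if she is the argmax she only gains. For a chore $j \in S^+$, the argmax (least negative) winner $i$ loses $(1-x_{ij})|v_{ij}| \le |v_{ij}|$, while every other agent gains when her fractional chore is taken away. In the last substep, every $i \ne i^t$ similarly only gains as her fractional share of each $S^{-\eps}$ chore is dropped, whereas $i^t$ absorbs the full chore and loses at most $|v_{i^t j}|$ per item. Non-shared items contribute no loss: non-shared chores have $x_{ij}=1$ by constraint \eqref{lp:chores}, and non-shared goods can only gain their unique LP owner when given integrally. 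So per-item loss is bounded by $|v_{ij}|$ in both substeps.

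It remains to bound $|v_{ij}|$ by a function of $\tmu_i$ using the definitions of SMALL items and of $S^{-\eps}$. For $i \ne i^t$ and $j \in S^+$, I will show $|v_{ij}| \le \epsilon |\tmu_i|/(2n)$: for a chore this is immediate from $j \notin S^{-\eps}$; for a good with $\tmu_i \ge 0$ it is the SMALL bound, and for a good with $\tmu_i < 0$ the SMALL bound gives $v_{ij} \le \epsilon/(2n)$, which is at most $\epsilon|\tmu_i|/(2n)$ since Claim \ref{clm:mu_tmu} and Lemma \ref{lem:mu1} imply $|\tmu_i| \ge |\mu_i| \ge 1$. Summing at most $|S^+| \le n-1$ such items yields a total loss of at most $\epsilon|\tmu_i|/2$. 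For $i = i^t$ and any $j \in S$, I will instead bound $|v_{i^t j}| \le \epsilon/(2n)$: chores are SMALL, goods with $\tmu_{i^t} < 0$ satisfy $v_{i^t j} \le \epsilon/(2n)$ by the SMALL definition, and goods with $\tmu_{i^t} \ge 0$ satisfy $v_{i^t j} \le \epsilon\tmu_{i^t}/(2n) \le \epsilon/(2n)$ because Lemma \ref{lem:mu1} together with Claim \ref{clm:mu_tmu} gives $\tmu_{i^t} \le 1$. Summing over $|S| \le n-1$ items gives a total loss of at most $\epsilon/2$.

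The main obstacle is the bookkeeping required to select the right tight form of the SMALL bound in each case: the asymmetric conclusion (a $|\tmu_i|$ factor for non-sink agents versus an absolute $\epsilon/2$ for $i^t$) comes from leveraging $|\tmu_i| \ge 1$ when $\tmu_i < 0$ and $\tmu_i \le 1$ when $\tmu_i \ge 0$, combined with the observation that only $i^t$ carries the burden of $S^{-\eps}$ while every other agent merely sheds fractional chore shares.
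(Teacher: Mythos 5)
Your proof is correct and follows essentially the same route as the paper's: a per-item accounting of the rounding losses, bounding each shared item's contribution by $\epsilon|\tmu_i|/(2n)$ for non-sink agents (via the $\nbg$ and $S^{-\eps}$ definitions) and by $\epsilon/(2n)$ for $i^t$, then summing over $|S|\le n-1$ items via Lemma \ref{lem:shared}. You are more explicit than the paper in justifying the per-item bounds across the sign cases of $\tmu_i$ and in verifying that the forest-modification, envy-cycle, and non-shared-item steps are lossless, but the argument is the same.
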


\begin{proof}
Every agent except $i^t$, in the worst case, loses all her shared goods and gains all her shared chores in $S^+,$ and has no shared chores in $S^{-\eps},$ as she only gains from the rounding of items in $S^{-\eps}$. Her maximum loss from the items in $S^+$ is at most $(n-1)\cdot \eps\tmu/(2n)\le \eps\tmu/2,$ as $|S^+|\le |S|\le n-1,$ from Lemma \ref{lem:shared}. For agent $i^t,$ her loss from $S$ in the worst case is at most $(n-1)\cdot \eps/(2n)\le \eps,$ as each item in $S$ has absolute value at most $\eps/(2n)$ for her. 
\end{proof}

\subsection{$\PTAS$ for $\aMMS+\PO$}

Algorithm \ref{algo:nonid-mixed-plus} combines the ideas in the previous sections and finds an $\aeMMS$ allocation if an $\aMMS$ allocation exists, else returns an empty allocation to indicate that no $\aMMS$ allocation exists. The algorithm works as follows. First, it finds the approximate $\MMS$ values of all agents using the algorithms from Section \ref{sec:identical} and Appendix \ref{appendix:ptas-neg-id}, and classifies all items as $\bg$ or $\nbg$. Then among all allocations of $\bg$ items where the corresponding LP has a solution, if any, it finds the combined allocation of $\bg$ and $\nbg,$ called $\A,$ with the highest social welfare where $\nbg$ may be fractionally allocated. 

From constraint \ref{lp:ci} of the LP, $\A$ is $\aMMS,$ and as shown in Lemma \ref{lem:aemms}, its rounded allocation, denoted as $\A^r,$ is $\aeMMS.$ To ensure $\A^r$ is also $\gPO,$ for technical reasons we require the sum of absolute values of all agents to be at least $\alpha$ whenever there is at least one agent with $v_i(\M)>0.$ If $\A^r$ does not satisfy this condition, Algorithm \ref{algo:nonid-mixed-plus} ensures a stronger guarantee, namely at least one agent values her own bundle at least $1$, by modifying $\A^r$ as follows. Let $\N^+$ be the set of agents with $v_i(\M)>0$. Note that, for an $i\in \N^+$, $v_i(\M)=n$ but $v_i(A_i^r)<\alpha$, and hence there exists an agent $k\neq i$ such that $i$ values $k$'s bundle more than $1$. We consider two cases based on if $k$ is in $\N^+$ or not. If $k\notin \N^+$, then $k$'s $\MMS$ value is negative, thus even if we re-allocate her bundle to $i$ and give $k$ nothing (lines 18-19), her $\aeMMS$ guarantee is maintained. If no such $(i,k)$ pair is found, then we go to the other case, where $k$ has to be given something. For this (lines 21-22), we construct a graph on $\N^+$ where there is an edge from $i$ to $k$ if $v_i(A^r_i)<\alpha$ and $v_i(A^r_k)\ge 1$. This graph has to have a cycle (See proof of Claim \ref{clm:Ar-agent-1}), and swapping bundles along the cycle gives value more than $1$ to every agent along the cycle.



We will prove the correctness of the algorithm in the remaining section. In what follows, we denote by $\A=[\A_1\cdots,\A_n]$ and $\A^r=[\A^r_1\cdots,\A^r_n]$ respectively the fractional allocation that is rounded after Line \ref{line:to-round} and its rounded allocation. First we show that the algorithm returns an $\aeMMS$ allocation if an $\aMMS$ allocation exists. 

\begin{algorithm}[t]
\caption{$\aeMMS+\gPO$ allocation of mixed items to non-identical agents} \label{algo:nonid-mixed-plus}
\DontPrintSemicolon
  \SetKwFunction{Define}{Define}
  \SetKwInOut{Input}{Input}\SetKwInOut{Output}{Output}
  \Input{Instance $\MMSins$, $\alpha\in(\epsilon,1]$, $\gamma>0$ and $\epsilon>0$}
  \Output{$\aeMMS+\gPO$ allocation or report $\aMMS$ allocation does not exist}
  \BlankLine
$\epsilon\gets \min\{\epsilon,\frac{\gamma\alpha}{(1+\gamma)}\},flag\gets false$\;
    For all $i\in\N,$ $\A^r_i\gets \emptyset$\tcp*{initialize $\A^r$ as the empty allocation}
    $\A\gets$ lowest social welfare allocation, i.e., give every item $j$ to agent $i$ with smallest $v_{ij}$\;
    For all $i,$ $\overline{\MMS_i}\gets (1-\frac{\epsilon}{2})$-$\MMS_i$ value of agent $i$\tcp*{use Algorithm from Section \ref{sec:identical}}
    Define $\bg$ and $\nbg$ according to Definition \ref{def:big-sml}\;
    \For {all allocations $\PB=[B_1,B_2\dotsc,B_n]$ of $\bg$\label{line-enum-big}}{
        Solve the LP (Equations \eqref{lp:small}-\eqref{lp:vars}) for allocating $\nbg$ items\;
     \If{LP has a solution}{
     $flag\gets true$\;
     $\PA\gets$ Allocation of $\nbg$ in optimal LP combined with $\PB$\;
     $\A\gets$ Allocation from $(\A,\PA)$ with higher welfare, i.e., $\sum_i v_i(\A_i)$\label{line:welfare}
     }
    }
    \If{$flag=true$\label{line:to-round}}{
    Make allocation graph of $\A$ acyclic using Lemma \ref{lem:allocation-acyclic}\;
    Round off $\A$ and obtain $\A^r$ by applying the rounding method from Section \ref{sec:LP}\label{line:aemms}\;
    \If{$\sum_i|v_i(\A^r_i)|<\alpha$, and $\exists i:v_i(\M)>0$\tcp*{technical steps to get $\gPO$}\label{Line:gPO-modifyAr}}{
    $\N^+\gets $ set of agents $i$ with $v_i(\M)>0$\;
    \eIf{$\exists i\in\N^+,\ k\notin\N^+:\ v_i(\A^r_k)\ge1$ \label{Line:neg-to-pos}}{
    modify $\A^r$ by giving $\A^r_k$ to $i,$ and giving $k$ nothing\tcp*{note: k has negative $\MMS$}}{
    Construct the following directed graph $G(V,E).$ $V=\N^+,$ directed edge $(i\rightarrow k)\in E$ if $v_i(\A^r_i)<\alpha$ and $v_i(\A^r_k)\ge1$\tcp*{$G$ has a cycle as for all $i\in V:$ $v_i(\A^r_i)<\alpha\le 1$}
    Swap bundles along any $1$ cycle in $G$ by giving every agent her successor's bundle \label{line:cycle-1-alpha}
    }
    }}
    \Return $\A^r$
\end{algorithm}

\noindent 

\begin{lemma}\label{lem:aemms}
If the LP has a solution for any partition of $\bg$, then $\A^r$ is an $\aeMMS$ allocation.
\end{lemma}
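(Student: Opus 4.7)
The plan is to combine the LP feasibility, which already gives a fractional $\alpha$-$\MMS$ allocation, with the rounding loss bound of Lemma~\ref{lem:rounding-loss}. Constraint \eqref{lp:ci} ensures $v_i(A_i) \ge \min\{\alpha\tmu_i,(1/\alpha)\tmu_i\}$ for every $i\in\N$ in the fractional allocation $\A$, and Lemma~\ref{lem:rounding-loss} says that rounding costs each non-sink agent at most $\epsilon|\tmu_i|/2$ of value and the sink agent $i^t$ at most $\epsilon/2$. The task is to show that both losses fit inside the slack between the $\alpha$-$\MMS$ and $\aeMMS$ thresholds.

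First I would handle every $i\ne i^t$. For $\tmu_i\ge 0$, subtracting $\epsilon\tmu_i/2$ from $\alpha\tmu_i$ gives exactly $(\alpha-\epsilon/2)\tmu_i$, which is the $\aeMMS$ target. For $\tmu_i<0$ we must check $(1/\alpha)\tmu_i-\epsilon|\tmu_i|/2 \ge (1/(\alpha-\epsilon/2))\tmu_i$; dividing by the negative quantity $\tmu_i$ this reduces to $1/\alpha+\epsilon/2 \le 1/(\alpha-\epsilon/2)$, and a one-line computation shows $1/(\alpha-\epsilon/2)-1/\alpha=\epsilon/(2\alpha(\alpha-\epsilon/2))\ge \epsilon/2$ because $\alpha,\alpha-\epsilon/2\le 1$.

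The main obstacle is $i^t$, whose loss bound is the absolute $\epsilon/2$ instead of being scaled by $|\tmu_{i^t}|$, so naive subtraction can underflow when $|\tmu_{i^t}|$ is small. I would split on the sign of $\tmu_{i^t}$. If $\tmu_{i^t}<0$, Lemma~\ref{lem:mu1} combined with Claim~\ref{clm:mu_tmu} yields $|\tmu_{i^t}|\ge 1$, so the gap $\epsilon|\tmu_{i^t}|/(2\alpha(\alpha-\epsilon/2))$ between $(1/\alpha)\tmu_{i^t}$ and $(1/(\alpha-\epsilon/2))\tmu_{i^t}$ is at least $\epsilon/2$ and absorbs the loss. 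If $\tmu_{i^t}>0$, the LP bound alone does not suffice (when $\tmu_{i^t}$ is tiny, $\alpha\tmu_{i^t}-\epsilon/2$ could be negative), so I would discard the LP value and instead invoke the envy-graph sink property of Claim~\ref{clm:sinkval} on the allocation of $\M\setminus S^{-\eps}$ that exists just before $S^{-\eps}$ is dumped on $i^t$. Because $S^{-\eps}$ consists only of chores and $v_{i^t}(\M)>0$ in this case, the sink inequality gives $v_{i^t}(\text{bundle at that moment})\ge v_{i^t}(\M\setminus S^{-\eps})/n \ge v_{i^t}(\M)/n = 1$; subsequently assigning $S^{-\eps}$ to $i^t$ subtracts at most $(n-1)\cdot \epsilon/(2n)\le \epsilon/2$, so $v_{i^t}(A^r_{i^t})\ge 1-\epsilon/2 \ge (\alpha-\epsilon/2)\tmu_{i^t}$, using $\alpha\le 1$ and $\tmu_{i^t}\le 1$ from Lemma~\ref{lem:mu1}.

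Putting the three cases together and invoking Definition~\ref{def:mms} shows $\A^r$ is $\aeMMS$. The key insight is that Lemma~\ref{lem:rounding-loss} is almost strong enough on its own, and the single hard case ($\tmu_{i^t}$ small and positive) is precisely the scenario in which the envy-graph sink guarantee of Claim~\ref{clm:sinkval} comes into play as a fallback lower bound.
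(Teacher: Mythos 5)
Your treatment of the rounding step itself is essentially sound and largely mirrors the paper's argument: non-sink agents are handled via constraint \eqref{lp:ci} plus Lemma \ref{lem:rounding-loss}, and the sink $i^t$ via Claim \ref{clm:sinkval}. One genuine (and valid) deviation: for $i^t$ with $\tmu_{i^t}<0$ you fall back on the LP bound together with $|\tmu_{i^t}|\ge 1$ to absorb the absolute $\epsilon/2$ loss, whereas the paper simply uses the sink value of $-1$ from Claim \ref{clm:sinkval} and the bound $\mu_{i^t}\le -1$ from Lemma \ref{lem:mu1}; both routes close that case.

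There is, however, a real gap: your proof stops at the allocation produced by the rounding on Line \ref{line:aemms}, but $\A^r$ in this lemma is the allocation that Algorithm \ref{algo:nonid-mixed-plus} actually returns, and Lines \ref{Line:gPO-modifyAr}--\ref{line:cycle-1-alpha} may further modify it (handing an agent $k\notin\N^+$'s entire bundle to some $i\in\N^+$ and leaving $k$ with nothing, or swapping bundles along a cycle in $G$). You must argue these steps preserve the $\aeMMS$ guarantee. The argument is short but cannot be skipped: in the cycle swap every participating agent's value only increases, and in the reassignment on Line \ref{Line:neg-to-pos} the only agent who loses is $k$ with $v_k(\M)<0$, hence $\mu_k<0$, and her new empty bundle has value $0\ge \frac{1}{\alpha-\epsilon}\mu_k$. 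A second, smaller loose end: you repeatedly treat $(\alpha-\epsilon/2)\tmu_i$ (resp. $\frac{1}{\alpha-\epsilon/2}\tmu_i$) as "the $\aeMMS$ target," but Definition \ref{def:mms} is stated in terms of the true $\mu_i$; you need Claim \ref{clm:mu_tmu} in \emph{all} cases (not just the negative-$\tmu_{i^t}$ one) to pass from $(\alpha-\epsilon/2)\tmu_i\ge(\alpha-\epsilon/2)(1-\epsilon/2)\mu_i\ge(\alpha-\epsilon)\mu_i$, and the analogous chain for negative values. Both of these are routine to repair, but as written the proof does not establish the lemma for the allocation the algorithm returns.
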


\begin{proof}
First, we argue for the allocation obtained after the rounding step on Line \ref{line:aemms}. Consider agent $i^t.$ Since $i^t$ corresponds to a sink node in the envy graph, from Claim \ref{clm:sinkval} and Lemmas \ref{lem:rounding-loss} and \ref{lem:mu1}, her value for her bundle in $\A^r$ is at least  $1-\epsilon/2 \ge 1-\epsilon\ge (\alpha-\epsilon)\mu_{i^t}$ if $\tmu_{i^t}\ge 0,$ and $-1-\epsilon/2 \ge -1-\epsilon\ge (1+\epsilon)\mu_{i^1}\ge \frac{1}{(\alpha-\epsilon)}\mu_{i^1}$ otherwise. Next, every agent $i$ except $i^t$, according to constraint \eqref{lp:ci} of the LP, receives a bundle of value at least $c_i$ from $\nbg$ in the fractional allocation of $\nbg$ corresponding to $\A$. Thus, for all $i\ne i^t,$ their value for their bundle in $\A^r$ is at least $v_i(B_i)+c_i-n\epsilon\cdot|\tmu_i|/(2n)\ge \min\{(\nicefrac{1}{\alpha})\tmu_i,\alpha\tmu_i\}-\epsilon\cdot|\tmu_i|/2,$ from Lemma \ref{lem:rounding-loss} and by definition of $c_i.$ Combined with Claim \ref{clm:mu_tmu}, when $\tmu_i\ge 0,$ this is at least $(\alpha-\epsilon/2)\tmu_i\ge (\alpha-\epsilon/2)(1-\epsilon/2)\mu_i\ge (\alpha-\epsilon)\mu_i.$ When $\tmu_i<0,$ this value is at least $\frac{1}{\alpha}\tmu_i+\epsilon\tmu_i/2$. As $(\frac{1}{\alpha}+\epsilon/2)\le \frac{1}{(\alpha-\epsilon/2)}$, and $\tmu_i<0$, along with Claim \ref{clm:mu_tmu}, $(\frac{1}{\alpha}+\epsilon/2)\tmu_i\ge \frac{1}{(\alpha-\epsilon/2)(1-\epsilon/2)}\mu_i\ge \frac{1}{(\alpha-\epsilon)}\mu_i.$

Any further modifications of the allocation can occur when (a) there is a pair of agents $i,k$ with $v_i(\A^r_k)>1$ and $v_k(\M)<0,$ or when (b) there is a cycle of agents with value at most $\alpha$ for their own bundle and value at least $1$ for the next. The only agents whose value decreases in these steps are those with $v_k(\M)<0,$ who after the swap receive no item. As $v_k(\M)<0,$ then $\mu_k<0,$ hence they receive at least $ \mu_k\ge \frac{1}{(\alpha-\eps)}\mu_k$ valued bundle. As no other agents lose, they still retain an $\aeMMS$ bundle.   
\end{proof}

Note that, the steps after rounding maintains the MMS guarantee. Since by construction, the LP has to be feasible whenever $\bg$ items are allocated as per an $\aMMS$ allocation, 
we get as a corollary,

\begin{restatable}{corollary}{CorrAMMS}\label{corr:aemms-when-exists}
If an $\aMMS$ allocation exists, Algorithm \ref{algo:nonid-mixed-plus} returns an $\aeMMS$ allocation.
\end{restatable}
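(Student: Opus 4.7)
The plan is to reduce the corollary directly to Lemma \ref{lem:aemms} by exhibiting one iteration of the outer loop (Line \ref{line-enum-big}) for which the LP is feasible, whenever an $\aMMS$ allocation exists. Concretely, suppose an $\aMMS$ allocation $\A^*=(A^*_1,\dots,A^*_n)$ exists and let $\PB^*=(B^*_1,\dots,B^*_n)$ be its restriction to the $\bg$ items, i.e.\ $B^*_i=A^*_i\cap \bg$. When Algorithm \ref{algo:nonid-mixed-plus} enumerates $\PB^*$, I will show the LP in \eqref{lp:small}--\eqref{lp:vars} has a feasible solution.

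The candidate feasible solution is just the integral indicator of $\A^*$ on $\nbg$: set $x_{ij}=1$ if $j\in A^*_i\cap\nbg$ and $x_{ij}=0$ otherwise. Constraints \eqref{lp:goods}, \eqref{lp:chores}, \eqref{lp:vars} hold since $\A^*$ is a partition. For \eqref{lp:ci}, note that $v_i(A^*_i)=v_i(B^*_i)+\sum_{j\in \nbg}v_{ij}x_{ij}$, so it suffices to show $v_i(A^*_i)\ge \min\{\alpha\tmu_i,(1/\alpha)\tmu_i\}$ for every $i$. Since $\A^*$ is $\aMMS$, $v_i(A^*_i)\ge \min\{\alpha\mu_i,(1/\alpha)\mu_i\}$. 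By Claim \ref{clm:mu_tmu}, when $\mu_i\ge 0$ we have $\tmu_i\le \mu_i$ (so $\alpha\tmu_i\le \alpha\mu_i$), and when $\mu_i<0$ we have $\tmu_i\le \mu_i<0$ (so $(1/\alpha)\tmu_i\le (1/\alpha)\mu_i$); either way the LP requirement is weaker than the $\aMMS$ guarantee already enjoyed by $\A^*$. Hence constraint \eqref{lp:ci} is satisfied and the LP is feasible on the iteration $\PB=\PB^*$.

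Consequently, $flag$ is set to $true$ in that iteration and the final $\A$ chosen at Line \ref{line:welfare} is the maximum-welfare allocation among some family of LP-feasible combined allocations. Thus the hypothesis of Lemma \ref{lem:aemms} applies to the final $\A$, and the rounded allocation $\A^r$ produced at Line \ref{line:aemms} is $\aeMMS$. The only remaining concern is that the subsequent $\gPO$-driven modifications in lines \ref{Line:gPO-modifyAr}--\ref{line:cycle-1-alpha} could degrade the $\MMS$ guarantee; however, as already argued inside the proof of Lemma \ref{lem:aemms}, both types of modification (the swap on Line \ref{Line:neg-to-pos} and the cycle rotation on Line \ref{line:cycle-1-alpha}) preserve $\aeMMS$, because the only agents whose bundle values drop are those with $\mu<0$ who end up with the empty bundle, and $0\ge (1/(\alpha-\epsilon))\mu$ in that case.

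I expect no substantive obstacle; the content is essentially a feasibility check. The one point that requires care is the direction of the inequality between $\tmu_i$ and $\mu_i$ in both signs of $\mu_i$, which is what Claim \ref{clm:mu_tmu} was designed to deliver, so the check is immediate once that claim is invoked.
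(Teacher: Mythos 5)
Your proposal is correct and follows essentially the same route as the paper: the paper's proof also takes the $\bg$-restriction of the assumed $\aMMS$ allocation, observes that the corresponding integral allocation of $\nbg$ satisfies constraint \eqref{lp:ci} because $\min\{\alpha\tmu_i,(1/\alpha)\tmu_i\}\le\min\{\alpha\mu_i,(1/\alpha)\mu_i\}$, and then invokes Lemma \ref{lem:aemms}. Your write-up is in fact slightly more explicit than the paper's on the sign cases of Claim \ref{clm:mu_tmu} and on why the post-rounding modifications preserve the guarantee, but the argument is the same.
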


Finally, we show the approximate Pareto optimality of the $\aeMMS$ allocation returned. This is the more involved part. For this, we use the notion of \textit{social welfare} of any allocation, defined as the sum of values of all agents. Formally, for an allocation $A=[A_1\cdots,A_n]$ among $n$ agents, define its social welfare as $w(A)=\sum_i v_i(A_i).$

\begin{lemma}\label{lem:po-when-exists}
If an $\aMMS$ allocation exists, then $\A$ has the highest welfare among all the $\aMMS$ allocations of $\M$ among $\N$ obtained by allowing $\nbg$ to be fractionally allotted.
\end{lemma}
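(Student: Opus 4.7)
The plan is to show that the algorithm's enumeration over $\bg$-partitions, combined with the per-partition LP, implicitly maximizes welfare over a feasible set that contains every fractional $\aMMS$ allocation; the lemma then follows immediately. I would fix an arbitrary fractional $\aMMS$ allocation $\A^*$, and since $\bg$ items are integrally assigned in $\A^*$ by hypothesis, decompose $\A^*=(\PB^*,y^*)$, where $\PB^*=(B_1^*,\ldots,B_n^*)$ is the induced $\bg$-partition and $y^*=[y^*_{ij}]_{i\in\N,\,j\in\nbg}$ records the fractional assignment of $\nbg$ under $\A^*$. The proof then reduces to two checks: (a)~$y^*$ is feasible for the LP of Section~\ref{sec:LP} instantiated with $\PB=\PB^*$; and (b)~the LP objective at any feasible $x$ differs from the welfare $w$ of the combined allocation by the constant $\sum_i v_i(B_i^*)$, so maximizing the LP on each iteration of Line~\ref{line-enum-big} and then selecting the welfare-best combined allocation on Line~\ref{line:welfare} globally maximizes welfare across $(\PB,y)$.

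For (a), constraints \eqref{lp:goods}, \eqref{lp:chores}, and \eqref{lp:vars} follow immediately from $\A^*$ being a valid (fractional) allocation of $\M$. Constraint \eqref{lp:ci} is the only substantive check, and it is where Claim~\ref{clm:mu_tmu} enters: the claim gives $\tmu_i\le\mu_i$ in all three sign cases of $\mu_i$, and a case split on whether $\mu_i\ge 0$ or $\mu_i<0$ (needed since the definition of $c_i$ switches between $\alpha\tmu_i$ and $(1/\alpha)\tmu_i$ accordingly) yields
\[
\min\{\alpha\tmu_i,\,(1/\alpha)\tmu_i\}\;\le\;\min\{\alpha\mu_i,\,(1/\alpha)\mu_i\}\;\le\;v_i(A_i^*),
\]
where the final inequality is the $\aMMS$ property of $\A^*$. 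Subtracting the constant $v_i(B_i^*)$ from both ends and rewriting the left side in the LP's expanded form establishes constraint \eqref{lp:ci} at $y^*$ with parameter $c_i$, so $y^*$ is LP-feasible for $\PB^*$.

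For (b), by construction the LP objective \eqref{lp:small} equals $w(\A')-\sum_i v_i(B_i^*)$ for any allocation $\A'$ extending $\PB^*$ via an $\nbg$-assignment $x$, so the LP optimum at the iteration with $\PB=\PB^*$ produces a combined allocation $\A'$ with $w(\A')\ge w(\A^*)$ (using that $y^*$ is LP-feasible). Because $|\bg|\le O(n^3/\eps)$ by Lemma~\ref{cor:big-const} and both $n$ and $\eps$ are constants, the number of partitions of $\bg$ into $n$ bundles is at most $n^{|\bg|}$, which is constant; hence Line~\ref{line-enum-big} enumerates all of them, and Line~\ref{line:welfare} retains the combined allocation of highest welfare. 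Therefore $w(\A)\ge w(\A')\ge w(\A^*)$, and since $\A^*$ was an arbitrary fractional $\aMMS$ allocation, $\A$ attains the maximum welfare among all such allocations. The only subtle step I expect is (a)---the sign-case translation from the $\aMMS$ bound (with $\mu_i$) to the LP bound (with $\tmu_i$)---and Claim~\ref{clm:mu_tmu} makes even that essentially mechanical; the rest is bookkeeping about the enumeration and the decomposition of welfare into its $\bg$ and $\nbg$ parts.
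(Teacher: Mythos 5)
Your proposal is correct and follows essentially the same route as the paper's proof: the LP objective equals the combined welfare up to the fixed constant $\sum_i v_i(B_i)$, every $\aMMS$ allocation restricted to a given $\bg$-partition is feasible for that partition's LP, and the enumeration on Line~\ref{line-enum-big} together with the welfare comparison on Line~\ref{line:welfare} then yields the global maximum. The only difference is that you spell out the feasibility of constraint~\eqref{lp:ci} via the sign-case inequality $\min\{\alpha\tmu_i,(1/\alpha)\tmu_i\}\le\min\{\alpha\mu_i,(1/\alpha)\mu_i\}$ from Claim~\ref{clm:mu_tmu}, a step the paper leaves implicit when it identifies LP-feasible partitions with those admitting fractional $\aMMS$ allocations.
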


\begin{proof}
Let $\Set^B \subseteq \Pi_n(\bg)$ be the set of all partitions of $\bg$ corresponding to which there is a fractional $\aMMS$ allocation, or in other words, for which the LP has a solution. For every partition $\PB$ in $\Set^B,$ the objective function of the LP ensures that the allocation of $\nbg$ returned by the algorithm has the highest social welfare among all allocations that satisfy the LP constraints. Hence, among all $\aMMS$ allocations where the partition of $\bg$ is $\PB,$ the allocation returned, say $\PA,$ has the highest social welfare. Formally, let $\Set^{A,\PB}$ be the set of all $\aMMS$ allocations corresponding to the partition $\PB.$ Then $\PA\in \argmax_{A\in \Set^{A,\PB}}\sum_i w(A).$ 

Let the set of allocations $\PA,$ one corresponding to each partition $\PB\in\Set^B,$ be $\Set^{A}.$ From Line \ref{line:welfare} of the algorithm, $\A$ has the highest social welfare among all. Formally, $\A$ is in $\argmax_{A\in\Set^{A}}\{w(A)\}.$ Combining with the above characterization of the allocations in $\Set^A,$ we have,
$$\A \in \max_{\PB\in \Set^B}\argmax_{A\in\Set^{A,\PB}}\left\{w(A)\right\},$$
thus proving the lemma.
\end{proof}

Next, we prove two key properties (Lemmas \ref{lem:gpdom-amms} and \ref{lem:gpdom-welfare}) of any \textit{integral} allocation that $\gamma$-Pareto dominates $\A^r.$ Suppose $\A^*$ is such an allocation. 
\begin{lemma}\label{lem:gpdom-amms}
$\A^*$ is an integral $\aMMS$ allocation. 
\end{lemma}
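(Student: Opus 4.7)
\textbf{Proof Proposal for Lemma \ref{lem:gpdom-amms}.}

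The plan is to combine the $(\alpha-\epsilon)$-$\MMS$ guarantee of $\A^r$ (Lemma \ref{lem:aemms}) with the $\gamma$-Pareto dominance of $\A^*$ over $\A^r$, exploiting the pre-processing choice $\epsilon \le \frac{\gamma\alpha}{1+\gamma}$. This choice immediately yields the key inequality $(\alpha-\epsilon)(1+\gamma) \ge \alpha$, equivalently $\frac{\alpha-\epsilon}{1+\gamma} \cdot (1+\gamma)^2 \ge \alpha$, and this one inequality is what makes the $\gamma$-slack in Pareto dominance precisely enough to upgrade the $(\alpha-\epsilon)$-$\MMS$ bound to a full $\alpha$-$\MMS$ bound.

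I would proceed by a case analysis on each agent $i \in \N$, distinguishing the sign of $\mu_i = \MMS_i$. If $\mu_i \ge 0$, then by Lemma \ref{lem:mms-sign} we have $v_i(\M) \ge 0$, and by Lemma \ref{lem:aemms} combined with the post-rounding modifications of Algorithm \ref{algo:nonid-mixed-plus} (which only reassign bundles of agents whose $v(\M)$ is negative or who are on a beneficial cycle), one checks that $v_i(\A^r_i) \ge 0$. Since $\A^*$ $\gamma$-Pareto dominates $\A^r$, we then get $v_i(\A^*_i) \ge (1+\gamma)\, v_i(\A^r_i) \ge (1+\gamma)(\alpha-\epsilon)\mu_i \ge \alpha\mu_i$, which is the $\aMMS$ bound for this agent. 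If $\mu_i < 0$ and $v_i(\A^r_i) \ge 0$ (e.g., agent $k$ in line 19 who ends up with an empty bundle), then $v_i(\A^*_i) \ge (1+\gamma)\, v_i(\A^r_i) \ge 0 \ge \mu_i/\alpha$ since $\mu_i < 0$. If $\mu_i < 0$ and $v_i(\A^r_i) < 0$, then Pareto dominance gives $v_i(\A^*_i) \ge \frac{v_i(\A^r_i)}{1+\gamma} \ge \frac{\mu_i}{(\alpha-\epsilon)(1+\gamma)} \ge \frac{\mu_i}{\alpha}$, using $\mu_i < 0$ and dividing the negative quantity $\mu_i$ by the larger positive quantity $(\alpha-\epsilon)(1+\gamma) \ge \alpha$ (which only makes it larger, i.e., less negative). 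In every case, $v_i(\A^*_i) \ge \min\{\alpha\mu_i, \mu_i/\alpha\}$, which is precisely the $\aMMS$ requirement in Definition \ref{def:mms}.

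The main obstacle I anticipate is the bookkeeping in the case $\mu_i \ge 0$, where we must verify that $v_i(\A^r_i) \ge 0$ after \emph{all} of the post-rounding modifications (lines 17--22 of the algorithm), not merely after the envy-cycle rounding of line 14. The lines 18--22 only reassign bundles, and a careful examination shows that an agent with $\mu_i \ge 0$ either keeps her own bundle (which by the analysis in Lemma \ref{lem:aemms} has value at least $(\alpha-\epsilon)\mu_i \ge 0$), or trades along a cycle in $\N^+$ where her new bundle has value at least $1$, or receives a bundle from an agent in $\N \setminus \N^+$ that she herself values at least $1$. In all three sub-situations her value under $\A^r$ is non-negative, so the Pareto-dominance inequality $v_i(\A^*_i) \ge (1+\gamma) v_i(\A^r_i)$ (rather than the weaker chores-style inequality) is what applies, and the argument above goes through.
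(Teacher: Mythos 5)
Your proposal is correct and follows essentially the same route as the paper's proof: both case-split on the sign of $\mu_i$ (with a sub-split on the sign of $v_i(\A^r_i)$ when $\mu_i<0$), and both hinge on the single inequality $(1+\gamma)(\alpha-\epsilon)\ge\alpha$ derived from the pre-processing choice $\epsilon\le\frac{\gamma\alpha}{1+\gamma}$. The extra bookkeeping you flag for the $\mu_i\ge 0$ case is already absorbed into Lemma \ref{lem:aemms}, which guarantees $v_i(\A^r_i)\ge(\alpha-\epsilon)\mu_i\ge 0$ after all post-rounding modifications, so no further verification is needed.
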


\begin{proof}
We will use the following relation between $\epsilon, \alpha$ and $\gamma.$ We have,
\begin{equation}\label{eqn:eps-alp-gam}
  \epsilon\le \frac{\gamma\alpha}{(1+\alpha)}\Rightarrow \epsilon\le \gamma(\alpha-\epsilon)\Rightarrow \gamma\ge \frac{\epsilon}{(\alpha-\epsilon)}\Rightarrow (1+\gamma)\ge \frac{\alpha}{(\alpha-\epsilon)}.  
\end{equation}
We know that $\A^r$ is an $\aeMMS$ allocation. Hence, agents $i$ with $\mu_i\ge 0$ get a bundle of value at least $(\alpha-\epsilon)\mu_i\ge 0$ in $\A^r,$ hence get a bundle of value at least $(1+\gamma)(\alpha-\epsilon)\mu_i$ in $\A^*.$ From equation \eqref{eqn:eps-alp-gam}, this is at least $\alpha\mu_i.$
Next, consider agents with $\mu_i<0.$ If they receive a bundle of positive value in $\A^r,$ then they also receive a positive valued, hence a bundle of value more than $\mu_i\ge \frac{1}{\alpha}\mu_i,$ in $\A^*.$ And if they get a negative valued bundle of value at least  $\frac{1}{(\alpha-\epsilon)}\mu_i$ in $\A^r,$ then they get a bundle of value at least $\frac{1}{(1+\gamma)(\alpha-\epsilon)}\mu_i,$ which from equation \eqref{eqn:eps-alp-gam} and the fact that $\mu_i<0,$ is at least $\frac{1}{\alpha}\mu_i.$ Hence, $\A^*$ is an integral $\aMMS$ allocation.
\end{proof}

The next property of $\A^*$ is that it will have higher social welfare than (fractional allocation) $\A.$ To prove this, we first prove two technical claims. 

\begin{claim}\label{clm:ArWelfareThanA}
$w(\A^r)\ge w(\A)-\epsilon$ if $S^{-\eps}\ne\emptyset,$ else $w(\A^r)\ge w(\A).$
\end{claim}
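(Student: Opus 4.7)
The plan is to decompose the welfare change $w(\A^r)-w(\A)$ across the stages of the rounding procedure in Section \ref{sec:LP} and show that every stage is welfare non-decreasing except the final one that hands $S^{-\eps}$ to $i^t$, which incurs a loss of at most $\eps$.

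First, I would apply Lemma \ref{lem:allocation-acyclic} to conclude that converting the LP's allocation graph into an acyclic one weakly increases every agent's value, hence total welfare. Next, for the stage that assigns each $j\in S^+$ to some $i^*\in\argmax_i v_{ij}$, the per-item welfare contribution changes from $\sum_i x_{ij}v_{ij}$ to $v_{i^*j}$: for a good, $\sum_i x_{ij}\le 1$ together with $v_{i^*j}\ge v_{ij}$ and $v_{i^*j}\ge 0$ gives $\sum_i x_{ij}v_{ij}\le v_{i^*j}$; for a chore, $\sum_i x_{ij}\ge 1$ and $v_{i^*j}$ being the least negative value again yield $\sum_i x_{ij}v_{ij}\le v_{i^*j}$. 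Envy-cycle elimination on $\M\setminus S^{-\eps}$ then, by its defining property, weakly increases each agent's value and hence welfare.

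If $S^{-\eps}=\emptyset$, these three stages constitute the entire rounding and $w(\A^r)\ge w(\A)$ is immediate. Otherwise, the final stage changes welfare by $\Delta=\sum_{j\in S^{-\eps}}\Delta_j$ with $\Delta_j=v_{i^tj}-\sum_i x_{ij}v_{ij}$, where $x$ denotes the LP fractions surviving into the last stage. The crucial observation I would use is that for every $j\in\nbg$ and every agent $i$, $|v_{ij}|\le \eps/(2n)$: this is direct from Definition \ref{def:big-sml} when $j\in\nbg^-_i$ or when $\tmu_i<0$, and for $j\in\nbg^+_i$ with $\tmu_i\ge 0$ it follows by combining $v_{ij}\le \eps\tmu_i/(2n)$ with $\tmu_i\le \mu_i\le 1$ (from Claim \ref{clm:mu_tmu} and Lemma \ref{lem:mu1}).

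Combined with the bound $|S^{-\eps}|\le |S|\le n-1$ from Lemma \ref{lem:shared}, I would split by item type to bound $\Delta_j$. For a chore $j\in\nbg^-$, $-\sum_i x_{ij}v_{ij}=\sum_i x_{ij}|v_{ij}|\ge 0$ gives $\Delta_j\ge v_{i^tj}\ge -\eps/(2n)$. For a good $j\in\nbg^+$, $\sum_i x_{ij}\le 1$ and $|v_{ij}|\le \eps/(2n)$ yield $\bigl|\sum_i x_{ij}v_{ij}\bigr|\le \eps/(2n)$, so $\Delta_j\ge -\eps/n$. Summing across the at most $n-1$ items of $S^{-\eps}$ gives $\Delta\ge -(n-1)\eps/n\ge -\eps$, and combining with the monotonicity of the earlier stages yields $w(\A^r)\ge w(\A)-\eps$. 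The main subtlety is the per-item welfare bookkeeping in the last stage: it requires simultaneously invoking the LP normalization (either $\sum_i x_{ij}\le 1$ or $\ge 1$, depending on good vs.\ chore) and the smallness of every $\nbg$ item for every agent, so that neither the mass placed on $S^{-\eps}$ nor $i^t$'s own valuation of those items can contribute more than $\eps/(2n)$ per item to the loss.
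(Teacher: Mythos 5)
Your handling of the rounding procedure itself is correct and follows essentially the same decomposition as the paper: acyclicity via Lemma \ref{lem:allocation-acyclic}, the $\argmax$ assignment of $S^+$, and envy-cycle elimination are each welfare non-decreasing, and your per-item bookkeeping for the $S^{-\eps}$ stage (using $|v_{ij}|\le\eps/(2n)$ for every $\nbg$ item and every agent, together with $|S^{-\eps}|\le n-1$) is a more explicit version of the paper's appeal to Lemma \ref{lem:rounding-loss}; both routes give a loss of at most $\eps$, and yours is if anything tighter.

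The genuine gap is that you stop at the allocation produced by the rounding step on Line \ref{line:aemms}, whereas the claim must hold for the $\A^r$ that Algorithm \ref{algo:nonid-mixed-plus} actually returns, which may be further modified by the steps on Lines \ref{Line:gPO-modifyAr}--\ref{line:cycle-1-alpha}: either some agent $k\notin\N^+$ hands her entire bundle to an agent $i\in\N^+$ with $v_i(\A^r_k)\ge 1$, or bundles are swapped along a cycle of the graph $G$. The claim is invoked in Lemma \ref{lem:gpdom-welfare} precisely for that final allocation, so these modifications must be accounted for. The missing argument is short but necessary: in the first modification only the items of $\A^r_k$ change owner, and the step is executed only when $v_k(\A^r_k)<\alpha$ and $v_i(\A^r_k)\ge 1$, so welfare changes by $v_i(\A^r_k)-v_k(\A^r_k)>1-\alpha\ge 0$; in the cycle swap every agent on the cycle trades a bundle she values below $\alpha$ for one she values at least $1$, gaining more than $1-\alpha\ge 0$, while all agents off the cycle are untouched. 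Without these two observations your chain of inequalities certifies only the intermediate allocation, not the one the downstream $\gPO$ argument uses.
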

\begin{proof}
We consider each step in Algorithm \ref{algo:nonid-mixed-plus} that changes the allocation from $\A$ to $\A^r$, and see how it changes the social welfare. The first step is making the allocation graph of $\A$ acyclic. Here every agent's value, hence the social welfare also remains the same. The next step is the rounding process. Here, first the items in $S^+$ are allotted to the agents with the highest value for them, hence the sum of values of the items from $S^+$ in $\A^r$ is at least as much as that in $\A.$ As no other item's allocation changes, the social welfare from them remains the same. Hence this part only improves the social welfare. Then the envy graph cycle elimination only improves the value of every agent, hence does not reduce the social welfare. At this point, $w(\A^r)\ge w(\A).$ If $S^{-\eps}=\emptyset,$ the rounding process ends, hence this inequality holds, otherwise from Lemma \ref{lem:rounding-loss}, allocating the items from $S^{-\eps}$ reduces the sink agent's value, hence the social welfare, by at most $\epsilon,$ giving $w(\A^r)\ge w(\A)-\epsilon.$ Now we show the next part of the algorithm does not reduce the social welfare of $\A^r,$ hence these relations remain true, thus proving the claim. 

First consider the if statement on Line \ref{Line:neg-to-pos}. For agents $i,k$ when $k$'s bundle is given to $i,$ only the allocation of items in $\A^r_k$ changes and the allocation of $\M\backslash\A^r_k$ remains the same. Now $v_k(\A^r_k)<\alpha$ and $v_i(\A^r_k)>1,$ otherwise this step would not be executed. Hence the social welfare changes by $v_i(\A^r_k)-v_k(\A^r_k)>1-\alpha\ge 0$. Finally, suppose some bundles are swapped along some cycle by executing Line \ref{line:cycle-1-alpha}. Every agent had value at most $\alpha$ for their bundle, and received a bundle of value at least $1.$ Thus, the value of every agent in the cycle changes by at least $1-\alpha>0,$ and every other agent's bundle, hence its value, remains the same. Thus, the social welfare does not decrease in this step as well.
\end{proof}

\begin{claim}\label{clm:Ar-agent-1}
If there is an agent $i$ with $v_i(\M)>0,$ then there exists some agent $i'$ with $v_{i'}(\A^r_{i'})\ge \alpha.$ 
\end{claim}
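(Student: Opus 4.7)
My plan is to split on whether the if-block on Line~\ref{Line:gPO-modifyAr} of Algorithm~\ref{algo:nonid-mixed-plus} is entered, i.e., on whether $\sum_i|v_i(\A^r_i)|<\alpha$. In the case that the block \emph{is} entered, I would argue that each of the two branches produces an agent whose own bundle has value at least $1\ge\alpha$, so the claim holds with that agent as $i'$. For the if-branch on Line~\ref{Line:neg-to-pos}, agent $i\in\N^+$ absorbs $\A^r_k$ with $v_i(\A^r_k)\ge 1$; since $i\in\N^+$ gives $\mu_i\ge 0$ via Lemma~\ref{lem:mms-sign} and the $\aeMMS$ property (Lemma~\ref{lem:aemms}) yields $v_i(\A^r_i)\ge(\alpha-\epsilon)\mu_i\ge 0$, her new bundle has value at least $1$. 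For the else-branch on Lines~21--22, I would show the graph $G$ must contain a directed cycle, so the swap gives every agent along it value at least $1$.

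The cycle-existence argument I have in mind is the following. Because the if-condition on Line~\ref{Line:gPO-modifyAr} holds, $\sum_i|v_i(\A^r_i)|<\alpha$, so in particular $v_i(\A^r_i)<\alpha\le 1$ for every $i\in\N^+$. Since $|v_i(\M)|=n$ after scaling and $v_i(\M)>0$ for $i\in\N^+$, pigeonhole on $\sum_{k\ne i}v_i(\A^r_k)>n-1$ produces a $k\ne i$ with $v_i(\A^r_k)>1$. The \emph{else}-branch condition rules out $k\notin\N^+$, so $k\in\N^+$ and $(i\to k)\in E$. Thus every vertex of the finite graph $G$ has out-degree at least one and a directed cycle exists. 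Swapping bundles along it raises each participating agent's value from below $\alpha$ to at least $1$, establishing the claim in this case.

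In the case that the if-block is \emph{not} entered, the claim's premise forces $\sum_i|v_i(\A^r_i)|\ge\alpha$ (since the other way to fail the condition would require $\N^+=\emptyset$, contradicting the hypothesis). Here I would appeal to the structure of the rounding step from Section~\ref{sec:LP}: starting from any $i\in\N^+$ and following envy edges of the acyclic envy graph to a sink $i^t$, Claim~\ref{clm:sinkval} yields a bundle of value at least $1$ from $i$'s perspective at some point along the path, and specifically $v_{i^t}(\A^r_{i^t})\ge 1$ whenever $v_{i^t}(\M)>0$. Combined with Lemma~\ref{lem:rounding-loss}'s bound of at most $\epsilon/2$ extra loss from the $S^{-\epsilon}$ assignment (since items in $S^{-\epsilon}$ are $\nbg$ chores, with absolute value at most $\epsilon/(2n)$ for $i^t$ and $|S^{-\epsilon}|\le n-1$), the sink's own value is at least $1-\epsilon/2\ge \alpha$.

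The main obstacle I anticipate lies in this last case: the condition $\sum_i|v_i(\A^r_i)|\ge\alpha$ does not directly hand us an individual bundle of value $\ge\alpha$, because that sum can be inflated by agents with strongly negative valuations (those in $\N\setminus\N^+$ whose $\aeMMS$ guarantee is $v_i(\A^r_i)\ge\mu_i/(\alpha-\epsilon)$). The crux is therefore to sidestep this sum altogether and instead exploit the envy-graph structure surviving into $\A^r$: one must carefully identify a sink that lies in $\N^+$ (or reach one by walking from the hypothesized $\N^+$ agent along envy edges) so that Claim~\ref{clm:sinkval}'s guarantee of a bundle of value $\ge 1$ applies and survives the $\epsilon/2$ loss from $S^{-\epsilon}$.
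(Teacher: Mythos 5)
Your handling of the case where the modification block on Line~\ref{Line:gPO-modifyAr} is entered matches the paper's proof: in the branch of Line~\ref{Line:neg-to-pos} the absorbing agent ends with value $v_i(\A^r_i)+v_i(\A^r_k)\ge 0+1$, and in the else-branch your pigeonhole argument ($v_i(\A^r_i)<\alpha\le 1$ together with $\sum_k v_i(\A^r_k)=n$ forces an out-edge from every vertex of $G$, with no self-loops, hence a directed cycle) is exactly the intended argument, spelled out more carefully than the paper does.

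The gap is in the case where the block is \emph{not} entered. Your envy-graph route does not close: the acyclic envy graph has \emph{a} sink, but nothing forces any sink to lie in $\N^+$ --- walking along envy edges from an agent of $\N^+$ may terminate at a sink $s$ with $v_s(\M)<0$, for which Claim~\ref{clm:sinkval} only yields $v_s(\A^r_s)\ge -1$. And even when a sink in $\N^+$ exists and happens to be the agent $i^t$ that receives $S^{-\eps}$, your closing inequality $1-\eps/2\ge\alpha$ is simply false whenever $\alpha>1-\eps/2$ (e.g.\ $\alpha=1$), so the $\eps/2$ loss cannot be absorbed this way. For comparison, the paper dispatches this case with the single sentence that if the claim already holds after rounding then that allocation is returned; this implicitly conflates ``no individual bundle has value $\ge\alpha$'' with the actual guard ``$\sum_i|v_i(\A^r_i)|<\alpha$,'' which are not equivalent because agents with very negative bundles inflate the sum of absolute values. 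The clean resolution is to note that the only consequence of Claim~\ref{clm:Ar-agent-1} used downstream (in Lemma~\ref{lem:gpdom-welfare}) is $\sum_i|v_i(\A^r_i)|\ge\alpha$: in the ``block not entered'' case this follows immediately from the negation of the guard (since $\N^+\ne\emptyset$ by hypothesis), and in the ``block entered'' case it follows from the agent of value $\ge 1$ that you correctly exhibit. So you correctly located the soft spot, but the sink-based repair does not work; weakening the conclusion to the sum form is what actually closes the argument.
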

\begin{proof}
If the claim is true for the allocation obtained after rounding $\A,$ then this is the allocation returned, hence we are done. Otherwise, the if condition of Line \ref{Line:gPO-modifyAr} is executed. If the condition of Line \ref{Line:gPO-modifyAr} is true, then as the allocation before this line was $\aeMMS,$ $v_i(\A^r_i)\ge 0,$ and after obtaining $k$'s bundle, $v_i(\A^r_i \cup \A^r_k)\ge 0+1=1$. 

Otherwise, for every agent $i$ in $\N^+$ we have $v(\M)=n,$ thus there is at least one agent $k\in \N^+$ such that $v_i(\A^r_k)\ge 1.$ The graph $G(V,E)$ has an edge $i\rightarrow k$ in this case. As $\sum_i |v_i(\A^r_i)|<\alpha,$ for every $i$ we have $v_i(\A^r_i)<\alpha\le 1.$ Thus for every edge $i\neq k.$ Hence, $G$ has at least one cycle. After swapping along a cycle, all agents along the cycle receive the bundle of their successor, hence have value at least $1$ for their own bundle. Thus, the final allocation has some agent with value at least $1\ge\alpha$ for her bundle.
\end{proof}

\begin{lemma}\label{lem:gpdom-welfare}
$\A^*$ has higher social welfare than $\A$. 
\end{lemma}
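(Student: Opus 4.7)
\begin{proof-idea}
The plan is to bound the welfare of $\A^*$ from below using two ingredients: the already-established relationship between $w(\A^r)$ and $w(\A)$ from Claim \ref{clm:ArWelfareThanA}, and the multiplicative gains that $\gamma$-Pareto dominance forces $\A^*$ to deliver over $\A^r$. The technical tension is that the rounding can lose at most $\epsilon$ of welfare in passing from $\A$ to $\A^r$ (only when $S^{-\eps}\ne\emptyset$), while we must recover strictly more than $\epsilon$ from the $\gamma$-dominance. Overcoming this tension is exactly what the preprocessing choice $\epsilon\le \nicefrac{\gamma\alpha}{(1+\gamma)}$ was designed for, and it will be the crux of the argument.

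First I would establish the easy monotonicity: for every agent $i$, $v_i(\A^*_i)\ge v_i(\A^r_i)$. This is a direct case check on the sign of $v_i(\A^r_i)$ using the $\gamma$-Pareto dominance definition: if $v_i(\A^r_i)\ge 0$ then $v_i(\A^*_i)\ge (1+\gamma)v_i(\A^r_i)\ge v_i(\A^r_i)$, and if $v_i(\A^r_i)<0$ then $v_i(\A^*_i)\ge \tfrac{1}{1+\gamma}v_i(\A^r_i)\ge v_i(\A^r_i)$ since multiplying a negative number by $\tfrac{1}{1+\gamma}<1$ brings it closer to zero. Summing these inequalities gives $w(\A^*)\ge w(\A^r)$.

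Next I would split into two cases according to whether any agent has $v_i(\M)>0$. In the first case, where no such agent exists, Claim \ref{clm:pos-mms-agent} (contrapositively) implies $S^{-\eps}=\emptyset$, so Claim \ref{clm:ArWelfareThanA} yields $w(\A^r)\ge w(\A)$; combined with the strict inequality guaranteed at some coordinate by the definition of $\gamma$-Pareto dominance (each strict form implies a strict numerical gain in $v_i$, again by the sign case check), one gets $w(\A^*)>w(\A^r)\ge w(\A)$. In the second case, Claim \ref{clm:Ar-agent-1} furnishes an agent $i'$ with $v_{i'}(\A^r_{i'})\ge \alpha>0$, so $\gamma$-dominance forces $v_{i'}(\A^*_{i'})-v_{i'}(\A^r_{i'})\ge \gamma\, v_{i'}(\A^r_{i'})\ge \gamma\alpha$. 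Combined with monotonicity at the other coordinates, this upgrades the inequality to $w(\A^*)\ge w(\A^r)+\gamma\alpha$.

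Finally I would close the argument by invoking Claim \ref{clm:ArWelfareThanA} in its worst form $w(\A^r)\ge w(\A)-\epsilon$, to obtain
\[
w(\A^*)\ \ge\ w(\A)-\epsilon+\gamma\alpha.
\]
The preprocessing choice $\epsilon\le \tfrac{\gamma\alpha}{1+\gamma}$ (equivalently, $\gamma\alpha-\epsilon\ge \gamma\epsilon>0$, cf.\ equation \eqref{eqn:eps-alp-gam}) ensures $\gamma\alpha>\epsilon$, and therefore $w(\A^*)>w(\A)$ as required. The only delicate point is verifying that the improvement at the single coordinate $i'$ is not erased by any degradation elsewhere, which is exactly what the monotonicity step in the second paragraph rules out; this is the main obstacle and the reason why a multiplicative strengthening at one coordinate suffices globally.
\end{proof-idea}
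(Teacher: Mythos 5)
Your proof is correct and follows essentially the same route as the paper: both rest on Claims \ref{clm:ArWelfareThanA}, \ref{clm:pos-mms-agent}, and \ref{clm:Ar-agent-1} together with the preprocessing relation $\epsilon\le\frac{\gamma\alpha}{(1+\gamma)}$, and both show that the multiplicative gain forced by $\gamma$-dominance at an agent valuing her bundle at least $\alpha$ outweighs the at-most-$\epsilon$ rounding loss. The only cosmetic difference is that you extract the gain $\gamma\alpha$ from the single agent $i'$ and use plain coordinatewise monotonicity elsewhere, whereas the paper aggregates the gains over all agents into the bound $\frac{\gamma}{1+\gamma}\left(\sum_i|v_i(\A^r_i)|-\alpha\right)\ge 0$; both arguments are valid and yield the same conclusion.
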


\begin{proof}
For cleaner exposition, we will denote $v_i(A_i)$ by $v_i(A)$ for any allocation $A.$

By definition of a $\gamma$-Pareto dominating allocation, the social welfare of allocation $\A^*$ is,
\begin{equation}\label{eq:wA-gpo}
    \begin{split}
        w(\A^*) &> (1+\gamma)\cdot \sum\limits_{i:v_i(\A^r)\ge 0}v_i(\A^r) + \frac{1}{1+\gamma}\cdot \sum\limits_{i:v_i(\A^r)< 0}v_i(\A^r)\\
        &= \sum_i v_i(\A^r) + \gamma\cdot\sum\limits_{i:v_i(\A^r)\ge 0}v_i(\A^r) - \frac{\gamma}{(1+\gamma)}\cdot\sum\limits_{i:v_i(\A^r)< 0}v_i(\A^r)\\
        &=w(\A^r)+ \gamma\cdot\sum\limits_{i:v_i(\A^r)\ge 0}v_i(\A^r) + \frac{\gamma}{(1+\gamma)}\cdot\sum\limits_{i:v_i(\A^r)< 0}|v_i(\A^r)|. 
    \end{split}
\end{equation}
Let $\mathds{1}_{S^{-\eps}}$ be an indicator variable for whether $S^{-\eps}$ is non-empty. Substituting the relation between $w(\A)$ and $w(\A^r)$ from Claim \ref{clm:ArWelfareThanA} in equation \eqref{eq:wA-gpo} we get,
$w(\A^*)>w(\A)-\epsilon\cdot \mathds{1}_{S^{-\eps}}+ \gamma\cdot \sum_{i:v_i(\A^r)\ge 0}v_i(\A^r) + \frac{\gamma}{(1+\gamma)}\cdot\sum_{i:v_i(\A^r)< 0}|v_i(\A^r)|.$ Hence, to prove the lemma, it suffices to show, 
\begin{equation}\label{eq:gpo-proof}
  \gamma\cdot\sum_{i:v_i(\A^r)\ge 0}v_i(\A^r) + \frac{\gamma}{(1+\gamma)}\cdot\sum_{i:v_i(\A^r)< 0}|v_i(\A^r)|-\epsilon\cdot \mathds{1}_{S^{-\eps}}\ge 0.
\end{equation}
If $S^{-\eps}=\emptyset,$ we are done, as all values on the left hand side in the above equation, say L, are non-negative. Hence, we now prove the equation when $S^{-\eps}\ne\emptyset,$ that is, $\mathds{1}_{S^{-\eps}}=1.$

We know $\epsilon\le \frac{\gamma\alpha}{(1+\gamma)},$ and as $\gamma\ge 0,$ $\gamma\ge \frac{\gamma}{1+\gamma}.$ Substituting these relations in L, we have, $$L\ge \frac{\gamma}{1+\gamma}\cdot\sum_{i:v_i(\A^r)\ge 0}v_i(\A^r) + \frac{\gamma}{(1+\gamma)}\cdot\sum_{i:v_i(\A^r)< 0}|v_i(\A^r)|-\frac{\gamma\alpha}{(1+\gamma)}=\frac{\gamma}{1+\gamma}\left(\sum_i |v_i(\A^r)|-\alpha \right).$$  But as $S^{-\eps}\ne \emptyset,$ from Claim \ref{clm:pos-mms-agent}, there is some agent $i$ with $v_i(\M)>0.$ Then from Claim \ref{clm:Ar-agent-1}, $v_{i'}(\A^r)\ge \alpha$ for at least one agent $i'$. Hence, $\frac{\gamma}{1+\gamma}\left(\sum_i |v_i(\A^r)|-\alpha \right)\ge 0,$ thus proving equation \eqref{eq:gpo-proof}, hence the lemma.  
\end{proof}

\begin{corollary}\label{corr:gpo-when-exists}
If an $\aMMS$ allocation exists, Algorithm \ref{algo:nonid-mixed-plus} returns a $\gPO$ allocation.
\end{corollary}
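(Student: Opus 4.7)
The plan is to establish this corollary via a short proof by contradiction that chains together Lemmas \ref{lem:po-when-exists}, \ref{lem:gpdom-amms}, and \ref{lem:gpdom-welfare}; all the substantive work is already in those lemmas, and what remains is a clean argument tying them together.

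Assume an $\aMMS$ allocation exists. Then during the enumeration on Line \ref{line-enum-big} the LP is feasible for at least one partition of $\bg$, so the flag becomes true and rounding is performed; in particular, by Corollary \ref{corr:aemms-when-exists}, the returned allocation $\A^r$ is $\aeMMS$. Suppose, for contradiction, that $\A^r$ is not $\gPO$. Then by definition there exists an integral allocation $\A^*$ that $\gamma$-Pareto dominates $\A^r$.

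From here I will first apply Lemma \ref{lem:gpdom-amms} to conclude that $\A^*$ is itself an integral $\aMMS$ allocation, and then apply Lemma \ref{lem:gpdom-welfare} to conclude $w(\A^*) > w(\A)$, where $\A$ is the (possibly fractional) allocation constructed in the algorithm just before rounding. The contradiction will then come from Lemma \ref{lem:po-when-exists}: since $\A$ attains the highest welfare among all $\aMMS$ allocations in which $\nbg$ items may be fractionally allotted, and every integral $\aMMS$ allocation trivially lies in this class (the integral assignment itself is LP-feasible for the corresponding partition of $\bg$), one gets $w(\A) \ge w(\A^*)$, contradicting $w(\A^*) > w(\A)$. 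Hence no $\gamma$-Pareto dominating allocation exists, and $\A^r$ is $\gPO$.

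The only subtlety I foresee is a bookkeeping one: I need to make sure the partition of $\bg$ induced by $\A^*$ is actually considered by the loop on Line \ref{line-enum-big}, so that $\A^*$ is a genuine competitor in the optimization of Lemma \ref{lem:po-when-exists}. This is handled automatically by the fact that the loop iterates over \emph{all} partitions of $\bg$ items; for the partition induced by $\A^*$, the LP is feasible (witnessed by $\A^*$ itself), and hence a fractional $\aMMS$ allocation of welfare at least $w(\A^*)$ is placed into the pool from which $\A$ is selected as the welfare maximizer on Line \ref{line:welfare}. With this in place, the contradiction goes through and the corollary follows.
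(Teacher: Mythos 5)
Your proof is correct and follows essentially the same route as the paper: assume a $\gamma$-Pareto dominating $\A^*$ exists, invoke Lemma \ref{lem:gpdom-amms} to get that $\A^*$ is an integral $\aMMS$ allocation, Lemma \ref{lem:gpdom-welfare} to get $w(\A^*)>w(\A)$, and contradict Lemma \ref{lem:po-when-exists}. The extra bookkeeping you flag (that the partition of $\bg$ induced by $\A^*$ is enumerated and its LP is feasible, so $\A^*$ is a genuine competitor for $\A$) is exactly the content of Lemma \ref{lem:po-when-exists} and Corollary \ref{corr:aemms-when-exists} in the paper, so nothing is missing.
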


\begin{proof}
For contradiction, suppose $\A^r$ is not $\gPO.$ Then there is another allocation, say $\A^*,$ that $\gamma$-Pareto dominates $\A^r.$ From Lemmas \ref{lem:gpdom-amms} and \ref{lem:gpdom-welfare}, $\A^*$ is an integral $\aMMS$ allocation with higher social welfare than $\A.$ From Lemma \ref{lem:po-when-exists}, this is a contradiction.\end{proof}

Using Corollaries \ref{corr:aemms-when-exists} and \ref{corr:gpo-when-exists}, the next theorem obtains the main result. 
\begin{restatable}{thm}{thmNonId}\label{thm:non-identical}
Given an instance $\MMSins$ and constants $\alpha,\epsilon,\gamma>0,$ that is, an instance of the $\aMMS+\PO$ problem, Algorithm \ref{algo:nonid-mixed-plus} returns an $\aeMMS+\gPO$ allocation if an $\aMMS$ allocation exists, else reports it does not exist, in time $O(2^{O((n^3\log n)/\min\{\epsilon^2,\gamma^2/2\})}m^3)$ where $n=|\N|$ is a constant. Thus, it is a $\PTAS$. 
\end{restatable}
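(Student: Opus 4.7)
The plan is to assemble the theorem from three ingredients already developed in this section: the $\aeMMS$ guarantee (Corollary~\ref{corr:aemms-when-exists}), the $\gPO$ guarantee (Corollary~\ref{corr:gpo-when-exists}), and a bound on the number of big-item partitions (Lemma~\ref{cor:big-const}). The remaining work is to argue that the $flag$-mechanism correctly certifies non-existence, and then to multiply together the costs of the main loop.

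First I would handle the two correctness cases. If an $\aMMS$ allocation $\A^*$ exists, then taking $\PB$ to be the partition of $\bg$ induced by $\A^*$ and setting $x_{ij}=1$ whenever item $j\in\nbg$ is assigned to $i$ in $\A^*$ produces a feasible integral point of the LP in \eqref{lp:small}--\eqref{lp:vars}, so the loop on Line~\ref{line-enum-big} will encounter at least one iteration with $flag=true$; at that point Corollaries~\ref{corr:aemms-when-exists} and~\ref{corr:gpo-when-exists} guarantee that $\A^r$ is $\aeMMS+\gPO$. Conversely, if no partition $\PB$ yields a feasible LP, then $flag$ stays $false$, so $\A^r$ is returned as the empty allocation initialized on Line~2, and the algorithm correctly reports non-existence. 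For the contrapositive we also need that failure of the LP for every $\PB$ implies no $\aMMS$ allocation exists; this is immediate because, as above, any $\aMMS$ allocation would give a feasible integral LP solution for its own induced $\PB$.

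Next I would bound the running time. By the pre-processing the effective $\eps$ is $\min\{\eps,\gamma\alpha/(1+\gamma)\}$, which is $\Theta(\min\{\eps,\gamma\})$ when $\alpha,\gamma$ are constants, so Lemma~\ref{cor:big-const} yields $|\bg|\le O(n^3/\min\{\eps,\gamma\})$. The for-loop on Line~\ref{line-enum-big} enumerates at most $n^{|\bg|}=2^{O((n^3\log n)/\min\{\eps,\gamma\})}$ partitions of $\bg$. For each, the LP has $O(nm)$ variables and $O(n+m)$ constraints and is solved in time $\mathrm{poly}(m,n)$; the acyclification of the allocation graph (Lemma~\ref{lem:allocation-acyclic}), the envy-cycle elimination (as in \cite{lipton2004approximately}), and the bookkeeping of Lines~16--22 all run in time $\mathrm{poly}(m,n)$. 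On top of this, computing the $\overline{\MMS}_i$ values on Line~4 uses the PTAS of Theorem~\ref{thm:identical}, whose cost contributes the $\eps^2$ (and analogously $\gamma^2/2$) in the denominator of the exponent stated in the theorem. Multiplying the enumeration cost by the per-iteration cost and adding the MMS-computation cost yields the claimed bound $O\bigl(2^{O((n^3\log n)/\min\{\eps^2,\gamma^2/2\})}\,m^3\bigr)$.

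The obstacle, if any, is purely bookkeeping: one must verify that the $\eps$ redefined on Line~1 is used consistently everywhere (so that the $\aeMMS$ guarantee is with respect to the \emph{original} $\eps$, and the relation $\eps\le \gamma\alpha/(1+\gamma)$ needed in the proof of Lemma~\ref{lem:gpdom-amms} holds), and that the running time is reported in terms of the \emph{original} $\eps$ and $\gamma$, which is exactly where the $\min\{\eps^2,\gamma^2/2\}$ in the exponent comes from. Since $n$ is constant by assumption, the overall dependence on $m$ is polynomial and the algorithm is a $\PTAS$.
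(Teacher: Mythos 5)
Your proposal is correct and follows essentially the same route as the paper: correctness is delegated to Corollaries~\ref{corr:aemms-when-exists} and~\ref{corr:gpo-when-exists} (with the LP-feasibility of any existing $\aMMS$ allocation certifying non-existence when $flag$ stays false), and the running time is the product of the $2^{O((n^3\log n)/\zeta)}$ partition enumeration from Lemma~\ref{cor:big-const} with a $\mathrm{poly}(m,n)$ per-iteration cost, plus the $2^{O(1/\eps^2)}$ cost of the $\MMS$-value computation. The only loose spot is attribution: the $\gamma^2/2$ term enters solely through the redefinition $\eps\gets\min\{\eps,\gamma\alpha/(1+\gamma)\}\ge\min\{\eps,\gamma^2/2\}$ (affecting $|\bg|$ and the enumeration), not ``analogously'' through the identical-agents PTAS, which contributes only the $\eps^2$ --- but your closing remark already captures this, so the argument stands.
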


\subsection{$\PTAS$ for $\optaMMS+\PO$}
Our final goal is to solve $\optaMMS+\PO$ problem, that is to find $\aMMS+\PO$ allocation for the highest possible $\alpha$. In this section we design a $\PTAS$ for this problem: given $\MMSins$, and constants $\epsilon,\gamma>0$, we design a polynomial-time algorithm to find $\aeMMS+\gPO$ allocation for the highest possible $\alpha$. For this we will use the $\PTAS$ for $\aMMS+\PO$ problem described in the previous section. 

Note that, for a given $\alpha$, Algorithm \ref{algo:nonid-mixed-plus} either returns $\aeMMS+\gPO$ allocation, or returns an {\em empty allocation}. And by Theorem \ref{thm:non-identical}, whenever it returns an empty allocation no $\aMMS$ allocation exists. Using this, we run a simple binary search to find the highest value of $\alpha\in [\epsilon,1]$ 
(up to a polynomial precision)
for which Algorithm \ref{algo:nonid-mixed-plus} returns a non-empty allocation. If empty allocation is returned for every $\alpha$ in our search, then we only need to ensure $\PO$ and therefore we return the social welfare maximizing allocation obtained by giving every item to the agent who values it the most. 

We stop when the range of $\alpha$ values under consideration is $[c-\delta, c+\delta]$ for $\delta=\frac{1}{2^{poly(m)}}$ for some constant $c$, where $m=|\M|$. Clearly the number of iterations the binary search will take to get within such a range is at most $poly(m)$. Each iteration runs Algorithm \ref{algo:nonid-mixed-plus} once, and hence finishes in $O(m^3)$ time (Theorem \ref{thm:non-identical}). Thus the overall running time of the algorithm is $poly(m)$, and the next theorem follows.
\begin{restatable}{thm}{ThmNonidmain}\label{thm:nonid-optammspo}
Given an instance $\MMSins$ and constants $\epsilon,\gamma>0,$ that is, an instance of the $\optaMMS+\PO$ problem, there is a $\PTAS$ that runs for $(2^{O(1/\min\{\epsilon^2,\gamma^2\})}poly(m))$ time and returns an $\aeMMS+\gPO$ allocation such that for any $\alpha'>\alpha+\frac{1}{2^{m^c}},$ no $\alpha'$-$\MMS$ allocation exists, where $c>0$ is a constant.
\end{restatable}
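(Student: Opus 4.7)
The plan is to reduce $\optaMMS+\PO$ to repeated queries of Algorithm \ref{algo:nonid-mixed-plus} via a standard bisection over $\alpha$. By Theorem \ref{thm:non-identical}, for any fixed $\alpha \in (0,1]$, Algorithm \ref{algo:nonid-mixed-plus} either outputs an $\aeMMS + \gPO$ allocation, or correctly certifies that no $\alpha$-$\MMS$ allocation exists, in time $2^{O(n^3\log n/\min\{\epsilon^2,\gamma^2/2\})} m^3$; since $n$ is constant this is $2^{O(1/\min\{\epsilon^2,\gamma^2\})}\mathrm{poly}(m)$. I will use this as a black box.

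First I would handle the degenerate case $\alpha \le \epsilon$: since $(\alpha-\epsilon)^+ = 0$, any allocation is vacuously $(\alpha-\epsilon)^+$-$\MMS$, so the welfare-maximizing allocation obtained by assigning every item $j$ to $\arg\max_{i} v_{ij}$ is $\PO$ and hence $\gPO$; I keep it as a fallback. I then initialize $\alpha_L \gets \epsilon$, $\alpha_H \gets 1$, and run bisection: at each step query Algorithm \ref{algo:nonid-mixed-plus} on $\alpha_M = (\alpha_L+\alpha_H)/2$; if it returns a valid $(\alpha_M-\epsilon)^+$-$\MMS$+$\gPO$ allocation, update $\alpha_L \gets \alpha_M$ and save this allocation as the current best; otherwise update $\alpha_H \gets \alpha_M$. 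I stop when $\alpha_H - \alpha_L \le \delta := 1/2^{m^c}$ for the chosen constant $c>0$, and return the saved allocation (or the fallback if no query ever succeeded).

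Correctness rests on the loop invariant: the saved allocation is always $(\alpha_L-\epsilon)^+$-$\MMS + \gPO$ (by the guarantee of Algorithm \ref{algo:nonid-mixed-plus}, or by the fallback when $\alpha_L$ is still $\epsilon$), while every $\alpha > \alpha_H$ admits no $\alpha$-$\MMS$ allocation (by the negative certification at some earlier midpoint). At termination $\alpha_H \le \alpha_L + \delta$, which is precisely the $\alpha' > \alpha + 1/2^{m^c}$ guarantee. For running time, the number of bisection steps is $O(\log(1/\delta)) = O(m^c)$, and each step costs $2^{O(1/\min\{\epsilon^2,\gamma^2\})}\mathrm{poly}(m)$ by Theorem \ref{thm:non-identical}, yielding the claimed total $2^{O(1/\min\{\epsilon^2,\gamma^2\})}\mathrm{poly}(m)$.

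The main conceptual work has already been absorbed into Theorem \ref{thm:non-identical}, so no new algorithmic machinery is required; the only things to be careful about are (i) verifying that the fallback branch still outputs a $\PO$ (and thus $\gPO$) allocation when the optimal $\alpha^* < \epsilon$, so that the theorem's guarantee is satisfied vacuously on the $\MMS$ side, and (ii) ensuring the invariant is preserved across the edge iterations where $\alpha_M$ is extremely close to $0$ or $1$, which follows directly from the unconditional nature of Algorithm \ref{algo:nonid-mixed-plus}'s two return modes.
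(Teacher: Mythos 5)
Your proposal is correct and follows essentially the same route as the paper: a binary search over $\alpha\in[\epsilon,1]$ using Algorithm \ref{algo:nonid-mixed-plus} as a black box (with Theorem \ref{thm:non-identical} supplying both the per-query guarantee and the per-query running time), terminating once the interval has width $1/2^{\mathrm{poly}(m)}$, and falling back to the welfare-maximizing ($\PO$) allocation when every query fails. The monotonicity of $\aMMS$ existence in $\alpha$, which your loop invariant implicitly relies on, does hold by Definition \ref{def:mms}, so no gap remains.
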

This completes the discussion of the $\optaMMS+\PO$ problem. Next, we describe a $\PTAS$ for finding the $\MMS$ value of an agent for distributing a set of items $\M$ into $n$ bundles according to a valuation function $v:\M\rightarrow \mathbb{R}$. We refer to this problem as the $\aMMS$ problem with ($n$) identical agents. Using Lemma \ref{lem:mms-sign} we can find the sign of the $\MMS$ value. Hence, we describe two algorithms, one for each case when $\MMS\ge 0$ and otherwise. The following section discusses the algorithm for the former case.

\section{Finding $\MMS$ values of agents when $\MMS\ge 0$}\label{sec:identical}



In this section we prove Theorem \ref{thm:identical} for the case when $\MMS\ge 0$, i.e., given an instance $\MMSidins,$ we describe an algorithm to find a $(1-\epsilon)$-$\MMS$ allocation for any constant $\epsilon>0$. 
Using scale invariance (Lemma \ref{lem:scale}), here on we assume $v(\M)=n$ without loss of generality. Due to Lemma \ref{lem:avg}, this implies $\MMS\le v(\M)/n=1$. 

The high level ideas used in the algorithm are as follows. First is a classification of all the items into two sets, $\Bigi$ and $\Sml,$ based their value (Section \ref{sec:id-big-small}). Using this we prove that $|\Bigi|$ is constant, which allows the enumeration of all allocations of $\Bigi,$ referred as \textit{partitions} of $\Bigi$ to avoid confusion with allocations of $\M.$ Next in Section \ref{sec:id-val-inval} we explain a short procedure which allows us to characterize partitions of $\Bigi$ as \textit{valid} or \textit{invalid}. We show that there is at least one valid partition corresponding to which there is an $\MMS$ allocation, hence all invalid partitions can be discarded. Finally in Section \ref{sec:bag-fill}, we describe a sub-routine called Bag-Fill, that greedily allocates or `fills' the items from $\Sml$ upon partitions or `bags' of $\Bigi$ that satisfy certain constraints to obtain a $(1-\epsilon)$-$\MMS$ allocation. The main algorithm (Section \ref{sec:id-ptas-pos}) enumerates all partitions of $\Bigi,$ discards the invalid partitions, and applies Bag-Fill if its constraints are satisfied. If the constraints are not satisfied, we show that we can apply the $\PTAS$ for obtaining $\MMS$ allocations with identical agents for a goods manna, and obtain a $(1-\eps)$-$\MMS$ allocation. 

We now discuss these key ideas formally followed by the algorithm in separate subsections.
\subsection{Big and Small items}\label{sec:id-big-small}
Given an instance $\MMSidins$ and a constant $\epsilon>0$, let $\Bigi$ be the set of items in $\M$ which have absolute value higher than $\epst$, i.e., $$ \Bigi = \{ j \in \M: |v_{j}| \ge \epst \}. $$ Let $\Bigi^+$ and $\Bigi^-$ respectively be the sets of the goods and the chores in $\Bigi$, i.e., $\Bigi^+=\Bigi\cap \M^+$, and $\Bigi^-=\Bigi\cap \M^-$. Let $\Sml$ be the set of small items, i.e., $\M \setminus \Bigi,$ and similarly define the sets of goods $(\Sml^+)$ and chores $(\Sml^-)$ in $\Sml$. 

We abuse notation slightly and call items in the set $\Bigi$ (or $\Sml$) as $\Bigi$ (resp. $\Sml$) items. 
\begin{lemma}\label{lem:big-const}
$|\Bigi|=O(n/ \epsilon).$
\end{lemma}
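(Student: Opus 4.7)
The plan is to bound the total absolute value of items in the instance and then use the defining threshold $\epsilon/2$ on big items to count.

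First I would exploit the normalization already set up in the paragraph preceding the lemma: by scale invariance (Lemma~\ref{lem:scale}), we may assume $v(\M)=n$. Since we are in the $\MMS\ge 0$ case, Lemma~\ref{lem:mms-sign} gives $v(\M)\ge 0$, so with $v(\M)=v^+-v^-=n$ we immediately get $v^+\ge v^-$ and hence $\min\{v^+,v^-\}=v^-$.

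Next I would invoke condition (2) of Definition~\ref{def:amms} specialized to identical agents: $|v(\M)|\ge \tau\cdot \min\{v^+,v^-\}$. Substituting the above yields $n\ge \tau v^-$, so $v^-\le n/\tau$, and therefore $v^+ = n + v^- \le n(1+1/\tau)$. Since $\tau>0$ is a constant, this gives $v^+ + v^- = O(n)$.

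Finally I would count: every item $j\in \Bigi$ satisfies $|v_j|\ge \epsilon/2$ by definition, so
\begin{equation*}
\frac{\epsilon}{2}\cdot |\Bigi|\ \le\ \sum_{j\in\Bigi}|v_j|\ \le\ \sum_{j\in\M}|v_j|\ =\ v^+ + v^-\ =\ O(n),
\end{equation*}
which rearranges to $|\Bigi|=O(n/\epsilon)$, as claimed. There is no real obstacle here; the only subtlety is remembering to use condition~(2) of the problem to bound $v^-$ (and hence $v^+$) by $O(n)$, since without that assumption the chores could have arbitrarily large total absolute value while still giving $v(\M)=n$, and the bound on $|\Bigi^-|$ would fail.
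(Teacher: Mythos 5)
Your proof is correct and follows essentially the same route as the paper's: normalize $v(\M)=n$, use condition (2) to deduce $v^-\le n/\tau$ and $v^+\le n(1+1/\tau)$, and divide the $O(n)$ total absolute value by the threshold $\epsilon/2$. The only cosmetic difference is that the paper bounds $|\Bigi^+|$ and $|\Bigi^-|$ separately rather than summing $v^++v^-$ at once.
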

\begin{proof}
As $v(\M)=n$, we have $n = v(\M^+) - |v(\M^-)|.$ Then, as $v(\M^+)\ge (1+\tau)|v(\M^-)|,$
\begin{equation}
    \label{eq:bound-on-M^+}
     n \ge v(\M^+) - \frac{v(\M^+)}{1+\tau} \implies v(\M^+) \le \frac{n(1+\tau)}{\tau} \ .
\end{equation}
Finally, by the definition of $\Bigi^+$ we have \begin{equation*}
    \label{eq:iden2}
    |\Bigi^+| \le \frac{v(\M^+)}{\epst} \le  \frac{2n(1+\tau)}{\epsilon\tau}  \ .
\end{equation*}
Similarly, we have 
\begin{equation}
\label{eq:bound-on-M^-}
n\ge (1+\tau)|v(\M^-)|-|v(\M^-)|\Rightarrow |v(\M^-)| \le n/\tau.    
\end{equation}
Thus, the number of $\Bigi$ chores is bounded as $|\Bigi^-| \le  \frac{2n}{\epsilon\tau}  $. Hence, $|\Bigi|=|\Bigi^+\cup\Bigi^-|=O(\epsilon)$.
\end{proof}

As $n$ and $\epsilon$ are constant, Lemma \ref{lem:big-const} implies that all partitions of $\Bigi$ can be enumerated in constant time.

\subsection{Valid and Invalid partitions of $\Bigi$}\label{sec:id-val-inval}

Given an allocation $\PA=[A_1\cdots,A_n]$ of $\M,$ denote by $\PB=[B_1\cdots, B_n]$ the allocation from $\M\backslash \Sml^+.$ Classify the bundles of $\PA$ based on their value from $\PB$ into sets $\B_1$, $\B_2$, $\B_3$, and $\B_4,$ together called the $\B$-sets, as follows.
\begin{equation}\label{eq:b1-4}
\begin{array}{llll}
    \B_1 &:= \{A_k \in \PB: v(B_k) > 1 \} &
    \B_2 &:= \{A_k \in \PB:1-\epsilon \le v(B_k) \le 1 \} \\
    \B_3 &:= \{A_k \in \PB: 0 \le v(B_k) < 1-\epsilon \} &
    \B_4 &:= \{A_k \in \PB:   v(B_k) < 0 \} 
\end{array} 
\end{equation}

We will abuse notation to denote all items in the sets in any $\B_i,$ i.e. $\cup_{\A_k \in \B_i} \A_k,$ by $\B_i.$

Given a partition of $\Bigi$ $\PB,$ we now explain a procedure using which we classify the partition as valid or invalid. First classify the bundles of $\PB$ into four sets as per equation \eqref{eq:b1-4}. Initially, all $\Sml$ items are unallocated. Then while $\B_1\ne \emptyset$ and $\Sml^-\ne \emptyset,$ assign any item from $\Sml^-$ to any agent that has a bundle from $\B_1.$ Re-classify the bundles using equation \eqref{eq:b1-4} and remove the assigned item from $\Sml^-$ after every assignment. This procedure ends when either $\B_1$ or $\Sml^-$ becomes empty, or both. If in the end $\B_1\ne \emptyset$, and we also have (a) $\B_4\ne \emptyset$ and (b) $v(\B_3\cup\B_4\cup\Sml^+)<(1-\epst)(|\B_3|+|\B_4|)$ then we call $\PB$ \textit{invalid}. All partitions of $\Bigi$ that are not invalid are called valid.  

\begin{lemma}
\label{lem:continue}
There exists an $\MMS$ allocation where the $\Bigi$ items are allocated according to a valid partition.
\end{lemma}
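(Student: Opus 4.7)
My plan is to take an arbitrary MMS allocation $\PA=[A_1,\ldots,A_n]$, define its induced Big partition $\PB=[A_1 \cap \Bigi,\ldots,A_n\cap\Bigi]$, and show that either $\PB$ is already valid or $\PA$ can be locally modified (preserving MMS) into another MMS allocation whose Big partition is valid. Let $\B_1,\B_2,\B_3,\B_4$ be the classification returned by running the procedure of Section~\ref{sec:id-val-inval} on $\PB$, and write $B_k = A_k \cap \Bigi$.

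The first step is a structural observation: every bundle that ends in $\B_3\cup\B_4$ contains only Big items. Indeed, the procedure only places Small chores into bundles currently in $\B_1$, and a Small chore---having absolute value strictly less than $\epsilon/2<\epsilon$---can only push a $\B_1$-bundle into $\B_2$, never into $\B_3$ or $\B_4$, because a single chore drops a bundle's value by less than $\epsilon/2$ while the gap between $\B_1$ (values $>1$) and $\B_3$ (values $<1-\epsilon$) is at least $\epsilon$. Hence for every $k$ with $B_k \in \B_3\cup\B_4$ we have $A_k = B_k\cup(A_k\cap\Sml)$. Combining $v(A_k) \ge \MMS$ with $\sum_k v(A_k\cap\Sml)\le v(\Sml^+)$ (Small chores contribute non-positively) and summing over these $k$ yields the key bound
\begin{equation*}
v(\B_3\cup\B_4\cup\Sml^+) \;=\; v(\B_3\cup\B_4)+v(\Sml^+) \;\ge\; (|\B_3|+|\B_4|)\,\MMS.
\end{equation*}
If $\B_1=\emptyset$ or $\B_4=\emptyset$, $\PB$ is already valid. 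If both are non-empty and $\MMS\ge 1-\epsilon/2$, the displayed inequality directly negates invalidity condition~(b), and $\PB$ is valid here too.

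The remaining case is $\B_1,\B_4\ne\emptyset$ together with $\MMS<1-\epsilon/2$, where I plan to modify $\PA$ by a local trade. Pick $k_1$ with $B_{k_1}\in\B_1$---so the procedure leaves $k_1$ in $\B_1$, forcing $v(B_{k_1})>1$---and $k_4$ with $B_{k_4}\in\B_4$; since $v(B_{k_4})<0$ and $B_{k_4}$ consists only of Big items, it contains at least one Big chore $c$. Transfer $c$ from $A_{k_4}$ into $A_{k_1}$, and to restore MMS at $k_1$, reroute a carefully chosen subset of Small chores out of $A_{k_1}$ into bundles of $\B_1\cup\B_2$ (whose Big surplus absorbs them) and push Small goods out of $A_{k_1}$ into bundles of $\B_3\cup\B_4$ (whose MMS slack, certified by the displayed bound, accommodates them). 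Each such trade either empties $\B_1$ or strictly shrinks $|\B_4|$, so iterating terminates in an MMS allocation with valid Big partition.

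The main obstacle I anticipate is keeping every bundle at value at least $\MMS$ throughout the chain of trades: a naive single move of $c$ can drop $v(A_{k_1})$ below $\MMS$ when $|c|$ exceeds the slack $v(A_{k_1})-\MMS$. The resolution exploits (i) the Big surplus $v(B_{k_1})-1>0$, which together with $\MMS<1-\epsilon/2$ gives slack strictly greater than $\epsilon/2$ inside $k_1$, (ii) the capacity of $\B_1\cup\B_2$-bundles to absorb rerouted Small chores because their Big values exceed $\MMS$ by a wide margin, and (iii) the Small-goods slack available to $\B_3\cup\B_4$-bundles from the displayed MMS bound. Formalizing this compensating trade---together with a termination argument via the monotone potential $|\B_1|+|\B_4|$---is the central technical step of the proof.
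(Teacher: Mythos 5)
Your setup is sound and matches the paper's: reducing to the case $\B_1,\B_4\neq\emptyset$, and showing that invalidity forces $\MMS<1-\epsilon/2$ (your displayed bound, derived from the fact that end-state $\B_3\cup\B_4$ bundles receive no Small chores, is a clean contrapositive of the paper's appeal to Lemma~\ref{lem:avg}). The genuine gap is in the exchange step. Transferring a single Big chore $c$ from $A_{k_4}$ to $A_{k_1}$ and ``rerouting Small items'' fails for several concrete reasons: (i) $A_{k_1}$ may contain no Small items at all, since its Big part alone has value $>1>\mu$, so there is nothing to reroute, and pushing Small \emph{goods} out of $A_{k_1}$ moves its value in the wrong direction anyway; (ii) the $\B_1\cup\B_2$ bundles meant to absorb rerouted Small chores have no certified slack above $\MMS$ (a $\B_2$ bundle can sit exactly at value $\mu$), and your displayed bound only certifies that the $\B_3\cup\B_4$ bundles collectively \emph{meet} $\MMS$, not that they can give anything up; (iii) a Big chore can have absolute value far exceeding the $\epsilon/2$ of slack you extract from $v(B_{k_1})>1>\mu+\epsilon/2$, so the move can sink $A_{k_1}$ with no identified source of compensation; and (iv) even when $\MMS$ is preserved, one chore transfer need not change the classification of either endpoint ($B_{k_4}$ may stay negative, $B_{k_1}$ may stay above $1$), so the potential $|\B_1|+|\B_4|$ need not decrease. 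You flag (iii) as the central obstacle, but the three resources you list do not resolve it.

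The paper's trade is structurally different and sidesteps all of this. It merges the \emph{entire} $\B_1$ bundle $B$ (value $1+x$) with the \emph{entire} $\B_4$ bundle $B'$ (value $-y$), and observes that since $B'$ has negative value, the $\MMS$ allocation must pair $B'$ with Small goods $\Set^+$ of total value at least $y+\mu$ --- this is the compensating resource your proposal never identifies. It then splits $\Set^+$ greedily, always adding to the lighter side, between the merged bundle and a fresh empty bundle; since each Small good is worth at most $\epsilon/2$, the two resulting bundles differ by at most $\epsilon/2$ and sum to at least $1+x+\mu$, so both clear $\mu$ precisely because $\mu<1-\epsilon/2$. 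The fresh bundle contains no Big item, so $|\B_1|+|\B_4|$ strictly drops, contradicting minimality. To salvage your route you would essentially have to rediscover this: the only guaranteed surplus lives in the Small goods attached to the $\B_4$ bundle, and it must be redistributed in a balanced way rather than through per-item compensations.
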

\begin{proof}
Let $\PA$ be an $\MMS$ allocation with the lowest value of $|\B_1|+|\B_4|,$ and let $\PB$ be its corresponding partition of $\Bigi$. Suppose $\PB$ is invalid. Then the agents with bundles from $\B_3$ or $\B_4$ can only receive items from $\B_3\cup \B_4\cup\Sml^+$ in $\PA$. From Lemma \ref{lem:avg} and the definition of invalid partitions giving $v(\B_3 \cup \B_4 \cup \Sml^+) < (1-\epst)(|B_3| + |B_4|),$ we have $\mu < (1-\epst)$. 

As both $\B_1,\B_4\ne \emptyset,$ consider any bundles $B$ and $B'$ respectively from $\B_1$ and $\B_4$. Let $v(B) = (1+x)$ and $v(B')=-y$ for some $x,y >0$. In $\PA$, $B'$ must be bundled with a set of items from $\Sml^+$, denoted as $\Set^+,$ of value at least $y+\mu$. We form two new bundles $A_1$ and $A_2$ of at least $\MMS$ value using $B,$ $B'$ and $\Set^+$ as follows. First merge $B$ and $B'$ into one bundle (with value $1+x-y$). If $1+x-y\ge\mu$, call this bundle $A_1$ and add all the remaining items from $\Set^+$ to $A_2$. Each bundle thus has value at least $\mu$. Otherwise, if $1+x-y < \mu$, add items from $\Set^+$ one by one, each time to the bundle with the lower value before adding the item. Let $A_1$ and $A_2$ be the resulting bundles after adding all items in $\Set^+$. Without loss of generality, let $v(A_1) \ge v(A_2)$. 
As each item in $\Set^+$ has value at most $\epst$ and is always added to the lower valued bundle, $v(A_1) - v(A_2) \le \epst$. Thus, 
\begin{align*}
    v(A_1) + v(A_2) \ge 1+x-y+y+\mu=  1+x+\mu \text{ and }v(A_1) - v(A_2) \le \epst\\
    \implies v(A_2) \ge \tfrac{1}{2}(1+x+\mu-\epst) >  \tfrac{1}{2}(1+\mu-\epst) > \mu \implies v(A_1) > \mu.
\end{align*} 

No item from $\Bigi$ is assigned to $A_2.$ Thus, $A_2\notin \B_1\cup\B_4,$ and 
$A_1$ and $A_2$ combined with the allocations of the remaining agents who did not get $B$ or $B'$ in $\PA$ form an $\MMS$ allocation with a smaller value of $|\B_1|+|\B_4|$ than $\PA,$ a contradiction. Thus, $\PB$ is valid.
\end{proof}

\subsection{Algorithm Bag-Fill}\label{sec:bag-fill} In this section we design the algorithm Bag-Fill (Algorithm \ref{algo:Bag-Filling}) that generalizes algorithms in \cite{ghodsi2018fair,garg2018approximating,garg2019improved} to the mixed setting. Bag-Fill (Algorithm \ref{algo:Bag-Filling}) takes as input an $\MMS$ instance $\MMSidins$, and a partition of the $\Bigi$ items of $\M$, denoted by $\PB=[B_1,B_2,\dots,B_n]$ such that they satisfy one of the two condition sets \eqref{eq:cond-set-1} or \eqref{eq:cond-set-2}. It outputs an allocation of items $\PA=[A_1,\dots,A_n]$ where $v(A_i) \ge 1-\epsilon$, for all $i\in [n]$.
\begin{multicols}{2}
\noindent
\begin{equation}\label{eq:cond-set-1}
    \begin{split}
    & v(\Sml) + \sum_{k=1}^n v(B_k) \ge n. \\
    & v(B_k) \le 1 \ \ \forall k \in [n] . \\
    & |v_j| < \epsilon  \ \ \forall j \in \Sml.
    \end{split}
\end{equation}
\begin{equation}\label{eq:cond-set-2}
    \begin{split}
    & v(\Sml) + \sum_{k=1}^n v(B_k) \ge n (1-\epst).\\
    & v(B_k) \le 1-\epst \ \ \forall k \in [n] .\\
    & |v_j| <\epst  \ \ \forall j \in \Sml.     
    \end{split}
\end{equation}
\end{multicols}
Algorithm \ref{algo:Bag-Filling} works as follows. It has $n-1$ rounds. Each round starts with a bundle (`bag') from $\PB$. If the bag is valued at least $(1-\epsilon)$, then it is assigned to some agent. If not, we first add all the unallocated $\Sml$ chores to this bag. Then one by one we add the unallocated goods from $\Sml$ until it is valued at least $(1-\epsilon),$ and assign to some agent. After all rounds are done, in the last step, all remaining items from $\Sml$ are added to the bag $B_n$. The next lemma proves the correctness of the algorithm. 

\begin{algorithm}[tbh!]
\caption{Bag-filling to find $(1-\epsilon)$-$\MMS$ allocation of identical agents} \label{algo:Bag-Filling}
\DontPrintSemicolon
  \SetKwFunction{Define}{Define}
  \SetKwInOut{Input}{Input}\SetKwInOut{Output}{Output}
  \Input{$\MMSidins$, Partition of $\Bigi\subseteq \M$: $\PB=[B_1,B_2,\dots,B_n].$ Input satisfies Condition set \eqref{eq:cond-set-1} or \eqref{eq:cond-set-2}}
  
  \Output{$\PA = \{A_1,\dots,A_n \}$ such that $v(A_i) \ge 1-\epsilon,\ \forall i\in [n].$}
  \BlankLine
  $\PA \gets \emptyset $, $\Sml\gets \M\backslash \Bigi$\;
  \For{$k \in \{1,\dots,n-1 \}$} {
  \While{$v(B_k) < 1- \epsilon$}{
  $j \gets \argmin_{j \in \Sml} v_j$\;
  $B_k \gets B_k \cup \{j\}, \Sml \gets \Sml \setminus \{j\} $\;
  }
  $\PA \gets \PA \cup \{B_k \}$\;
   }
   $B_n \gets B_n \cup \Sml,\ \ \PA \gets \PA \cup \{B_n \}$\;
  \Return $A$

\end{algorithm}
\begin{lemma}
\label{lem:bag-filling}
If an $\MMS$ instance with identical agents satisfies condition set \eqref{eq:cond-set-1} or \eqref{eq:cond-set-2}, then Algorithm \ref{algo:Bag-Filling} gives a $(1-\epsilon)$-Allocation.
\end{lemma}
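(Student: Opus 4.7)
My plan is to show that Algorithm~\ref{algo:Bag-Filling} produces $n$ bundles each of value at least $1-\epsilon$, analyzing the first $n-1$ ``filled'' bundles $B_1,\ldots,B_{n-1}$ and the residual bundle $B_n$ separately: the former by a termination-plus-counting argument, the latter by a direct accounting using the total-value hypothesis in each condition set.

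First I would record two structural properties of a single round $k$. The $\argmin$ rule forces items to be added to $B_k$ in order of increasing value, so all remaining chores in $\Sml$ (items with $v_j<0$) are inserted before any good is considered; in particular, after round $1$ no chores remain in $\Sml$. Moreover, whenever the while loop terminates with $v(B_k)\ge 1-\epsilon$, the last item added must be a good---adding a chore can only decrease $v(B_k)$ and so cannot push it above $1-\epsilon$---and just before that addition $v(B_k)<1-\epsilon$. Combined with the item-size hypotheses $|v_j|<\epsilon$ (resp.\ $|v_j|<\epst$) this gives the uniform upper bound
\begin{equation*}
v(B_k) \;<\; U, \qquad \text{where $U=1$ under~\eqref{eq:cond-set-1} and $U=1-\epst$ under~\eqref{eq:cond-set-2}},
\end{equation*}
which remains valid in the degenerate case $v(B_k^{\mathrm{init}})\ge 1-\epsilon$ thanks to the initial upper bound on $v(B_k)$ prescribed by each condition set.

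The only other way the while loop can exit is $\Sml$ becoming empty while $v(B_{k^{\star}})<1-\epsilon$ for some $k^{\star}<n$; I would rule this out by contradiction. If it happens, all small items sit in $B_{k^{\star}}$ and every other $B_k$ retains only its initial big items, so
\begin{equation*}
v(\M) \;=\; v(B_{k^{\star}}) \;+\; \sum_{k\ne k^{\star}} v(B_k^{\mathrm{init}}).
\end{equation*}
Under~\eqref{eq:cond-set-1} the right-hand side is strictly less than $(1-\epsilon)+(n-1)=n-\epsilon$, contradicting $v(\M)\ge n$; under~\eqref{eq:cond-set-2} it is strictly less than $(1-\epsilon)+(n-1)(1-\epst)<n(1-\epst)$, contradicting $v(\M)\ge n(1-\epst)$. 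Hence each of the first $n-1$ rounds exits successfully and the upper bound on $v(B_k)$ of the previous paragraph applies.

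Finally I would dispatch $B_n$. Since the residual $\Sml$ is dumped into $B_n$,
\begin{equation*}
v(B_n) \;=\; v(\M) \;-\; \sum_{k=1}^{n-1} v(B_k) \;\ge\; v(\M)-(n-1)U,
\end{equation*}
which gives $v(B_n)\ge n-(n-1)=1\ge 1-\epsilon$ under~\eqref{eq:cond-set-1}, and $v(B_n)\ge n(1-\epst)-(n-1)(1-\epst)=1-\epst\ge 1-\epsilon$ under~\eqref{eq:cond-set-2}. The main technical care goes into the strict inequality $v(B_k)<U$: it is essential that in a successful round the last item added is a good of size $<\epsilon$ (resp.\ $<\epst$), which in turn rests on the $\argmin$ rule ordering all chores before any good. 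Once that is pinned down, both halves of the argument are routine bookkeeping.
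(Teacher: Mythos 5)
Your argument follows essentially the same route as the paper's: each of the first $n-1$ rounds terminates with a bundle in $[1-\epsilon,U]$ because the last item added in a successful round is a small good, the loop never stalls for lack of goods, and the last bundle inherits enough value by total-value accounting. The paper packages this as an induction and rules out stalling by directly lower-bounding the value of the unallocated $\Sml$ items by $\epsilon$ (resp.\ $\epsilon/2$), whereas you argue by contradiction; the content is the same.

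One step of your contradiction argument is stated incorrectly, though it is easily repaired. If the loop stalls at round $k^{\star}>1$, it is \emph{not} true that all small items sit in $B_{k^{\star}}$ and that every other bundle retains only its initial big items: rounds $1,\dots,k^{\star}-1$ may already have consumed small items, so the displayed identity $v(\M)=v(B_{k^{\star}})+\sum_{k\ne k^{\star}}v(B_k^{\mathrm{init}})$ is false in general. The correct accounting is $v(\M)=\sum_{k<k^{\star}}v(B_k)+v(B_{k^{\star}})+\sum_{k>k^{\star}}v(B_k^{\mathrm{init}})$, and the contradiction still goes through with the same numerics because the earlier filled bundles satisfy the bound $v(B_k)<U$ you have already established (and the untouched ones satisfy the initial bound from the condition set). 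Two smaller imprecisions: the claim that no chores remain after round $1$ should be ``after the first round whose while-loop body executes'' (if $v(B_1)\ge 1-\epsilon$ initially, nothing is added to $B_1$); and in the degenerate case the bound is $v(B_k)\le U$ rather than strict, which is harmless for the final inequality.
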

\begin{proof}
By induction on $k\in\{0,1,2,\cdots,n-1\}$, we prove that the value of each assigned bundle after $k$ rounds is in $[1-\epsilon, 1]$ if the instance satisfies condition set \eqref{eq:cond-set-1}, and in $[1-\epsilon,1-\epsilon/2]$ if it satisfies condition set \eqref{eq:cond-set-2}. The base case when $k=0$ is trivial.

First consider the case when condition set \eqref{eq:cond-set-1} is satisfied. Assume the value of all bundles assigned to the first $k-1$ agents are in this range. Now $v(B_j)\le 1$ for all $j\in \{k,..,n\}$. If $v(B_k)\ge 1-\epsilon,$ we are done. If not, then while the value of $B_k$ is less than $1-\epsilon,$ the value of the unallocated items from $\Sml$ is at least the value of all items minus that of all the allocated bundles and unallocated bags of $\Bigi$ items. This can be bounded as, 
$$v(\M)-\sum_{i< k} v(A_i)-v(B_k)-\sum_{i>k} v(B_i) > n - (k-1)-  (1-\epsilon) -(n-k)>\epsilon.$$ 
Hence, there is at least one unallocated $\Sml$ good. Before adding the last $\Sml$ good to $A_k$, its value was strictly less than $1-\epsilon$. Adding the last item increases the value by at most $\epsilon$. Hence, the value of $A_k$ is at most $1$. Thus, $v(A_k)\in [1-\epsilon,1],$ for all $k\in[n-1].$ As the total value of all items is at least $n$, and the total value of the $n-1$ assigned bundles is at most $n-1$, the last agent also gets a bundle of value at least $1$. 

Now suppose the instance satisfies condition set \eqref{eq:cond-set-2}. In every round, while this is not true, the value of unallocated goods is at least $\epsilon/2.$ Thus, there is at least one $\Sml$ good. Finally, after assigning $n-1$ bundles, the total value remaining is at least $n(1-\epsilon/2)-(n-1)(1-\epsilon/2),$ hence the last bundle also has value at least $(1-\epsilon)$.
\end{proof}





\subsection{The $\PTAS$}\label{sec:id-ptas-pos} 
We use the notions from the previous subsections and derive the $\PTAS,$ shown in Algorithm \ref{algo:MMS-value}. The $\PTAS$ works as follows. It first enumerates all the partitions of $\Bigi$. For each partition $\PB$, it first classifies the bundles into the $\B$-sets as per equation \eqref{eq:b1-4}. If $\B_1$ is not empty, then add items from $\Sml^-$ to any bag in $\B_1,$ re-defining the sets and removing the assigned item from $\Sml^-$ after each assignment. This process ends when either $\B_1=\emptyset$ or $\Sml^-=\emptyset$. In the first case, condition set \eqref{eq:cond-set-1} of the Bag-Fill algorithm is satisfied, and we run Algorithm \ref{algo:Bag-Filling} (Line \ref{line:bf2}) and return its output. 

Otherwise when $\B_1\ne \emptyset$, if $|\B_4|=0$, then reduce to the following goods manna $\aMMS$ problem instance $(\N',\M',v')$. $\N'$ is the set of agents who received bundles from $\B_3$ or $\B_4$. $\M'$ has (a) $\Sml^+,$ with each item having the same value in $v'$ as in $v$, and (b) for each bundle $B\in \B_3,$ $\M'$ has a new item $b$ with value $v'_b = v(B)$. 
Run the $\PTAS$ from \cite{Jansen2016} on $(\N',\M',v')$ to find a $(1-\epsilon)$-$\MMS$ allocation of $\M'$ among the $n'=n-(|\B_1|+|\B_2|)$ agents, and store its output in $\ASet$. 

For the final case when $\B_1\ne \emptyset$ and $|\B_4|>0$, first check if the remaining unallocated goods and agents in $\B_3 \cup \B_4$ fulfill the condition set \ref{eq:cond-set-2}. If they do, apply the Bag-Fill and return the $(1-\epsilon)$-$\MMS$ allocation. If not, then $\PB$ is invalid, hence discarded. After enumerating all the partitions, the algorithm returns the best allocation from $\ASet$.

\begin{algorithm}[tbh!]
\caption{$(1-\epsilon)$-$\MMS$ allocation for identical agents with $\MMS\ge 0$} \label{algo:MMS-value}
\DontPrintSemicolon
  \SetKwFunction{Define}{Define}
  \SetKwInOut{Input}{Input}\SetKwInOut{Output}{Output}
  \Input{$\MMSidins$ such that $v(\M)=n$, $\epsilon\in[0,1]$}
  \Output{$(1-\epsilon)$-$\MMS$ Allocation}
  \BlankLine
  $\ASet \gets \emptyset.$ $\Pi_n(\Bigi)\gets $ all partitions of $\Bigi$ into $n$ sets. \label{line:initial} \;
  \For{$\PB \in \Pi_n (\Bigi)$ \label{line:enumerate}} {
  Define $\B$-sets as per equation \eqref{eq:b1-4}. \label{line:empty-A}\;
  \While {$\B_1 \ne \emptyset$ and $\Sml^-\ne \emptyset$ \label{line:coverB1}} {
  Remove any item $j$ from $\Sml^-$ and assign to any agent with a $\B_1$ bundle\;
  Re-define the $\B$-sets for the new allocation
}
\If{$\B_1 = \emptyset$}{
$\PA \gets $ Bag-Fill($(\N,\M,v)$, $\B$-sets) \label{line:bf2} \;
  \Return $\PA$\;
  }
$\A^{1,2}\gets$ allocation of all bundles from $\B_1$ and $\B_2$ to distinct agents\;
$\N'\gets$ set of remaining agents, $\M' \gets \cup_{B\in\B_3\cup\B_4}B \cup \Sml^+$, $n'=|\N'|$\;
  \eIf{$\B_4\ne \emptyset$}{
   \uIf{$v(\M') \ge n' (1- \epst) $}{ \label{line:enough}
  $\PA \gets \A^{1,2} \ \cup$ Bag-Fill($(\N',\M',v),\B-sets=\B_3\cup\B_4$) \label{line:bf3}\;
  \Return $\PA$ \label{line:return-bf1}\;
  }
  \Else{
  \textbf{continue}\tcp*{$\PB$ is invalid}
  }
  }{
  $\M'\gets \Sml^+$\;
  \For {$B \in \B_3$}{
  introduce a new good $b$ with $v(b)=v(B);$ $\M' \gets \M' \cup \{b \}$}
  $\PA \gets \A^{1,2} \cup (1-\epsilon)$-$\MMS$ allocation for $(\N',\M',v')$ using the algorithm in \cite{Jansen2016}\label{line:good-only}\;    $\ASet \gets \ASet \cup \{\PA\}$ \label{line:store}
    }
  }
  \Return $\argmax_{\PA\in \ASet} \min_{A_i \in \PA} v(A_i)$ \label{line:last}
\end{algorithm}

Let us now discuss the analysis of the $\PTAS.$ To prove correctness when $\B_1\ne \emptyset$ and $\B_4=\emptyset$, we first show in Lemma \ref{lem:valid-reduction} a relation between the $\MMS$ values of the given instance and the reduced goods manna instance. Let $\PA^*$ be some $\MMS$ allocation, and $\PB^*=\{B^*_1,B^*_2,\cdots,B^*_n\}$ be the allocation of $\Bigi$ items according to $\PA^*$.

\begin{lemma} \label{lem:valid-reduction}
If for $\PB^*,$ the subsequent allocation of $\Bigi\cup\Sml^-$ in Algorithm \ref{algo:MMS-value} has $\B_4= \emptyset,$ then, 
$$\MMS^n(\M) \le_p  \MMS^{n- |\B_1|-|\B_2|} (\bigcup_{B \in \B_3 } B \cup \Sml^+)\ . $$
\end{lemma}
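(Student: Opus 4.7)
The plan centers on a structural observation about the chore-distribution loop, followed by a straightforward restriction of an $\MMS$ allocation $\PA^*$ to the reduced instance.

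First I would prove that whenever $\B_4=\emptyset$ at the end of the loop, every bundle $\tilde{B}_i$ that ends up in $\B_3$ in fact equals the original $B^*_i$, i.e., no chore was ever added to it. The key observation is that every item in $\Sml$ has absolute value strictly less than $\epsilon/2$, so adding one small chore to any $\B_1$-bundle (value $>1$) produces a new value strictly greater than $1 - \epsilon/2 > 1-\epsilon$, which still lies in $\B_1 \cup \B_2$. Hence a $\B_1$ bundle can transition only within $\B_1 \cup \B_2$, never directly into $\B_3$ or $\B_4$. Since $\B_2,\B_3$ bundles never receive chores from the loop by construction, any $\tilde{B}_i \in \B_3$ at the end must have started in $\B_3$ with $\tilde{B}_i = B^*_i$.

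Second, set $\mu = \MMS^n(\M)$ and fix an $\MMS$ allocation $\PA^* = [A_1^*, \dots, A_n^*]$ with $v(A_k^*) \ge \mu$ for all $k$. Let $\N' = \{i : \tilde{B}_i \in \B_3\}$ and $\M' = \bigcup_{B \in \B_3} B \cup \Sml^+$. For each $i \in \N'$ define $A'_i := A_i^* \cap \M'$. By the first step, $B_i^* \in \B_3$, so $B_i^* \subseteq \M'$; combined with $\Sml^+ \subseteq \M'$, the only items of $A_i^*$ that are removed to form $A'_i$ lie in $\Sml^-$ (they are small chores, since the original $\Bigi$ part of $A_i^*$ is exactly $B_i^*$). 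Because removing chores cannot decrease value, $v(A'_i) \ge v(A_i^*) \ge \mu$.

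Third, the residual set $R := \M' \setminus \bigcup_{i\in\N'} A'_i = \bigcup_{j\notin\N'}(A_j^* \cap \M')$ consists entirely of small goods: for any $j\notin\N'$, $B_j^* \in \B_1 \cup \B_2$ is disjoint from every $B_i^*$ with $i \in \N'$, and $A_j^* \cap \Bigi = B_j^*$, so $A_j^* \cap \M' = A_j^* \cap \Sml^+$. Assigning all of $R \subseteq \Sml^+$ to any fixed agent in $\N'$ only increases that agent's value, so the resulting allocation of $\M'$ among $n' = n - |\B_1| - |\B_2|$ agents still gives every agent value at least $\mu$, yielding $\MMS^{n'}(\M') \ge \mu$ as required.

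The main obstacle I expect is the structural claim in the first step; once it is established that no $\B_3$ bundle ever absorbed a chore during the loop, the remaining argument is purely set-theoretic manipulation: the restriction $A_i^* \cap \M'$ preserves value because only chores are discarded, and the leftover items are automatically goods because no $\Bigi$ item of a $\B_1 \cup \B_2$-bundle can land in $\M'$.
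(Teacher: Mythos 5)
Your proof is correct and follows essentially the same route as the paper's: restrict the $\MMS$ allocation $\PA^*$ to $\M'$ and the agents holding $\B_3$ bundles, and observe that only $\Sml^-$ chores are discarded so no bundle loses value. Your argument is in fact more careful than the paper's on two points it leaves implicit — that the chore-distribution loop never pushes a bundle into $\B_3$ (so $\B_3$ bundles are pristine subsets of $\Bigi$), and that the leftover items of $\M'$ held by $\B_1\cup\B_2$ agents are all $\Sml^+$ goods and can be dumped on one agent harmlessly.
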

\begin{proof}
We form an allocation of $\cup_{B\in\B_3} B\cup\Sml^+$ among $n-|\B_1|-|\B_2|$ agents with the smallest bundle's value at least $\MMS^n(\M),$ thus proving the lemma. Consider the allocation of $\Sml$ in $\PA^*$. Allocate the items from $\M'=\cup_{B\in\B_3}B\cup \Sml^+$ among the set $\N'$ of agents who have received bundles in $\B_3$, as they are allocated in $\PA^*$. Call this allocation $\PA'$. Now the allocation $\PA^*$ may also have some $\Sml$ chores assigned to agents in $\N'$, but no other goods. The lowest valued bundle in $\PA'$ thus has value at least that of the lowest valued bundle in $\PA^*$ (since no $\nbg$ chore is added to these bundles in $\PA'$). The $\MMS$ value of agents in $\N'$, when partitioning $\M'$ among them, is at least that of the lowest valued bundle of $\PA'$, hence is at least $\MMS^n(\M)$.
\end{proof} 

Next we state and prove the main theorem of this section.

\begin{theorem}\label{thm:main-id-pos}
Given an instance $\MMSidins$ with $\MMS\ge 0$, Algorithm \ref{algo:MMS-value} returns a $(1-\epsilon)$-$\MMS$ allocation in $O(m)$ time.
\end{theorem}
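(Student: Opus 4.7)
My approach is to establish two things: first, that every allocation Algorithm~\ref{algo:MMS-value} might return is $(1-\epsilon)$-$\MMS$, and second, that the loop is guaranteed to produce such an output. Throughout I will exploit the normalization $v(\M)=n$, which together with Lemma~\ref{lem:avg} gives $\mu := \MMS^n(\M) \le 1$, so a bundle of value at least $1-\epsilon$ automatically meets $(1-\epsilon)\mu$ — this is the glue that lets both Bag-Fill and the Jansen~\PTAS\ deliver $(1-\epsilon)$-$\MMS$ allocations with respect to $\mu$.

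For the first task I would do a case analysis on the three branches inside the main loop. If $\B_1=\emptyset$ after the chore-absorbing while loop on Line~\ref{line:coverB1}, then all remaining bundles have $v(B_k)\le 1$, every $\Sml$ item has absolute value below $\epsilon/2 < \epsilon$, and the total value is still $n$ since only items have been moved around; condition set~\eqref{eq:cond-set-1} thus holds and Lemma~\ref{lem:bag-filling} yields a $(1-\epsilon)$-$\MMS$ allocation on Line~\ref{line:bf2}. If $\B_1 \ne \emptyset$, $\B_4 \ne \emptyset$, and the condition on Line~\ref{line:enough} holds, then the restricted instance $(\N',\M',v)$ satisfies condition set~\eqref{eq:cond-set-2} (bundles in $\B_3 \cup \B_4$ have value at most $1-\epsilon \le 1-\epsilon/2$, small items are bounded by $\epsilon/2$), so Bag-Fill returns a $(1-\epsilon)$-$\MMS$ allocation for $\N'$; combined with the already-$(1-\epsilon)$-valued bundles held by $\B_1 \cup \B_2$ agents this finishes the branch. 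In the reduction branch ($\B_1 \ne \emptyset$, $\B_4 = \emptyset$), the $\PTAS$ of \cite{Jansen2016} delivers value at least $(1-\epsilon)\MMS^{n'}(\M')$ per agent on the goods-only instance $(\N',\M',v')$.

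The main obstacle is showing that some successful branch is actually reached — the algorithm may short-circuit via any early return, so I must argue correctness regardless of which partition triggers the return. If any iteration early-returns through Bag-Fill, the previous paragraph finishes the proof. Otherwise every enumerated partition is either skipped as invalid or stored via the reduction. Lemma~\ref{lem:continue} guarantees that some enumerated $\PB^*$ is valid and underlies an $\MMS$ allocation. Validity forbids $\PB^*$ from being skipped in case~2, and the no-early-return assumption forbids $\PB^*$ from landing in the $\B_1=\emptyset$ branch or in the Bag-Fill sub-branch of $\B_4\ne\emptyset$; hence $\PB^*$ hits the reduction branch with $\B_4=\emptyset$. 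Lemma~\ref{lem:valid-reduction} then yields $\MMS^{n'}(\M') \ge \mu$, so the Jansen allocation delivers at least $(1-\epsilon)\mu$ to each $\N'$-agent, while $\B_1,\B_2$ bundles trivially exceed $(1-\epsilon)\mu$ since $\mu \le 1$. Therefore the entry of $\ASet$ corresponding to $\PB^*$ is $(1-\epsilon)$-$\MMS$, and the $\argmax$ on Line~\ref{line:last} returns such an allocation.

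For the $O(m)$ time bound, Lemma~\ref{lem:big-const} gives $|\Bigi| = O(n/\epsilon)$, so $|\Pi_n(\Bigi)| \le n^{O(n/\epsilon)}$, a constant when $n$ and $\epsilon$ are constants. Each iteration's $\B$-set classification, the while loop, and the call to either Bag-Fill (Algorithm~\ref{algo:Bag-Filling}) or the Jansen $\PTAS$ each run in $O(m)$ time for constant parameters, and multiplying by the constant number of partitions yields the stated $O(m)$ bound.
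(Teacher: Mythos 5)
Your proposal is correct and follows essentially the same route as the paper's proof: Lemma~\ref{lem:continue} supplies a valid partition $\PB^*$ underlying an $\MMS$ allocation, the branch analysis uses Lemma~\ref{lem:bag-filling} together with $\MMS\le 1$ (from the normalization $v(\M)=n$ and Lemma~\ref{lem:avg}) for the Bag-Fill returns, Lemma~\ref{lem:valid-reduction} handles the goods-only reduction, and the running time follows from $|\Bigi|=O(n/\epsilon)$. Your explicit handling of the early-return structure (either some iteration returns via Bag-Fill, or $\PB^*$ necessarily lands in the reduction branch and is stored in $\ASet$) is a slightly cleaner organization of the same argument, not a different proof.
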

\begin{proof}
First we prove the correctness of the algorithm. Note that no valid partition is discarded, as the procedure before deciding to discard a partition is exactly the procedure to determine if the partition is invalid. Consider a valid partition $\PB^*$ corresponding to an $\MMS$ allocation $\PA^*$, and its $\B$-sets as per \eqref{eq:b1-4}. From Lemma \ref{lem:continue}, such a partition exists. After executing the while loop on Line \ref{line:coverB1}, as every $\Sml$ chore has absolute value at most $\epsilon/2,$ upon adding the last chore before the value falls below $1,$ the value of every bundle to which a chore was added is still at least $1-\epsilon/2.$ After this, one of the cases based on which conditions from $\B_1=\emptyset$ and $\B_4=\emptyset$ are true gets executed. In every case, there is some allocation generated, as the partition is valid. 

If the Bag-Fill algorithm is called, then every agent gets a bundle of value $1-\epsilon.$ As $\MMS\le 1,$ the allocation returned is $(1-\epsilon)$-$\MMS.$

If the $\PTAS$ of \cite{Jansen2016} is called, then first, the agents receiving bundles from $\B_1,$ $\B_2,$ by definition of these sets, have value at least $1-\eps\ge (1-\epsilon)$-$\MMS$ for their bundle. Also, as $\PB^*$ corresponds to an $\MMS$ allocation, the $\MMS$ value for allocating the remaining items among the remaining agents, from Lemma \ref{lem:valid-reduction}, is at least the original $\MMS$ value. Hence, a $(1-\eps)$-$\MMS$ allocation of the goods manna instance, combined with the allocations to the agents with the $\B_1$ and $\B_2$ bundles, is $(1-\epsilon)$-$\MMS.$ 

As $\PB^*$ is considered when enumerating all the $\Bigi$ item partitions, this allocation will be stored in $\ASet.$ Hence, the allocation returned has value at least $(1-\epsilon)$-$\MMS$ for the smallest valued bundle.

For running time, note that every iteration of the for loop first allocates all $\Sml$ chores, then either runs a bag-filling algorithm which takes $O(m)$ time, discards the iteration, or runs the $\PTAS$ of \cite{Jansen2016} which takes $O(2^{\tilde{O}(1/\epsilon)}n\log m)= o(2^{(1/\epsilon^2)}n\log m)$ time. In the worst case, every iteration takes $O(m + O(2^{1/\epsilon^2}n\log m))$ time. The for loop runs for $n^{|\bg|}$ iterations, which from Lemma \ref{lem:big-const} is $O(n^{n/\tau \epsilon})=2^{O(n\log n/\tau \epsilon)}.$ Hence, the total run time of the algorithm is $O(2^{n\log n/\tau \epsilon}(2^{1/\epsilon^2}n\log m+m))=O(m)$ time.
\end{proof}

\noindent
\textbf{Acknowledgments.} We would like to thank Prof. Jugal Garg for several valuable discussions. 

\bibliographystyle{alpha}{}
\bibliography{literature}

\appendix
\section{Missing proofs\label{sec:missing-proofs}}

\subsection{Section \ref{sec:prelims}}

\Sign*
\begin{proof}
If the sum of valuations of all items $v_i(\M)$ is negative, there can be no allocation where every bundle has non-negative valuation. Hence, $\MMS_i$ is negative. If the sum of valuations is positive, then adding all items to one bundle and no item in other bundles makes the least-valued bundle have zero value. Thus, in this case, $\MMS_i\ge 0$. 
\end{proof}

\Average*
\begin{proof}
If $\MMS_i > v_i(\M)/n$, it implies that there exists a partition of items in $\Pi_n(\M)$ where all bundles have value greater than $v_i(\M)/n$. Therefore, $v_i(\M) \ge n. \MMS_i > n.\frac{v_i(\M)}{n}  = v_i(\M),$ which is a contradiction.
\end{proof}

\ScaleInv*
\begin{proof}
For any agent $i,$ the value of any bundle of items $\Set$ according to the two valuation function are related as $v'_i(\Set)=c_i\cdot v_i(\Set).$ Thus, by definition of $\MMS,$ her $\MMS$ values according to the two valuation functions are also related as $\MMS'_i=c_i\cdot \MMS_i,$ where $\MMS'_i$ is agent $i'$s $\MMS$ value according to $v'_i.$ 

This implies that a set of items has $\aMMS$ value for $i$ according to $(v_i)_{i\in\N}$ if and only if it has $\aMMS$ value according to $(v'_i)_{i\in\N}.$ Hence all $\aMMS$ allocations according to valuations $(v_i)_{i\in\N}$ are also $\aMMS$ according to $(v'_i)_{i\in\N}$ and vice versa.  

The $\aMMS+\PO$ allocation according to $(v_i)_{i\in\N}$, say $\A,$ is $\aMMS$ and must also be $\PO$ according to $(v'_i)_{i\in\N},$ as otherwise the Pareto dominating allocation will Pareto dominate $\A$ according to $(v_i)_{i\in\N}$ too.   
\end{proof}

\subsection{Section \ref{sec:mixed-pos}}
\clmMuTmu*
\begin{proof}
We know from condition $2$ of the $\aMMS+\PO$ problem that $v_i(\M)\ge \tau\cdot \min\{v_i^+,v_i^-\}.$ After scaling, we have $|v_i(\M)|=n.$ For agents where $v_i(\M)\ge 0,$ $n\ge \tau\cdot v_i^- \Rightarrow v_i^-\le n/\tau=O(n).$ Also, $n=v_i^+-v_i^-\Rightarrow v_i^+\le n(1+1/\tau)=O(n).$ Analogously we prove the claim when $v_i(\M)<0.$
\end{proof}

\lemNidBgGoodChore*
\begin{proof}
Since $\bg = \cup_{i\in \N} \bg^+_i \cup_{i\in \N} \bg^-_i$, to prove the lemma it suffices to show that the number of $\bg$ goods and chores of every agent $i\in \N$ is $|\bg_i^+|, |\bg_i^-|\le O(n^2/\epsilon)$. Fix an agent $i\in \N$. First we will show bound on $|\bg_i^+|$.

\noindent{\em Case 1: $\tmu_i\ge 0$.} 
\begin{equation}
\label{eq:big+in-non-ident}
\text{If $\tmu_i\ge 1/3,$ }|\bg_i ^+| \le |\{j \in \M : v_{ij} > \epsilon/(6n)  \} | \le \frac{v_i^+}{\epsilon/(6n)} \le \frac{6n}{\epsilon}O(n) = O(n^2/\epsilon)
\end{equation}
The last inequality follows by Claim \ref{clm:vi_are_n}.
Otherwise, if $\tmu_i < 1/3 $, then $\mu_i\le \tmu_i/(1-\epsilon/2)<1/(3-3\epsilon/2)< 2/3.$ Divide $\bg^+_i$ into two sets as follows. 
\begin{equation}
\label{eq:bigs-small-mu}
    \bg^+_i = \{j: v_{ij } > \epsilon/(6n) \} \cup \{j: \epsilon \tmu_i/(2n) <  v_{ij} \le \epsilon/(6n) \} .
\end{equation}
Let us call the first set in Equation \eqref{eq:bigs-small-mu} $\bl$ and the second set $\bs$. Similarly as for the case of $\tmu_i\ge 1/3$, we can prove the size of $\bl$ is at most $O(n^2/\epsilon)$. We now prove that the number of items in $\bs$ is at most $\frac{2n(n-1)}{(1-\epsilon/2)\epsilon}+(n-2)$. We show that if this is not true then there is a partition of all the items where all parts have value strictly more than $\mu_i$ for agent $i$ which is a contradicts that $\mu_i$ being her $\MMS$ value. 
The partition is as follows. Add all items except the goods from $\bs$ to the first bundle. If $i$'s value for this bundle is more than $1$, divide $\bs$ to make $n-1$ bundles with at least $2n/((1-\epsilon/2)\epsilon)+1$ items in each. Then all the remaining bundles have value at least, $$ \left( \frac{2n}{(1-\epsilon/2)\epsilon} +1\right) \left( \frac{\epsilon}{2n} \right) \tmu_i > \frac{\tmu_i}{(1-\epsilon/2)} \ge \mu_i. $$  

If the value of the first bundle is less than $1$ for $i$,  then we add enough goods from $\bs$ to each bundle one by one (first bundles and all remaining empty bundles) so that their value is at least $2/3 > \mu_i.$ Since every item in $\bs$ has value at most $\epsilon/(6n) $, the value of the each bundle is less than $ 2/3$ before adding the last item and less than $2/3 + \epsilon/(6n) < 1 $ later. As each bundle's value is at most $1$ and $v(\M)=n,$ there are enough items to make $(n-1)$ bundles, each of value at least $2/3$ which is greater than $\mu_i$. This is a contradiction to definition of $\MMS_i$.

Therefore, $|\bs|\le \frac{2n(n-1)}{(1-\epsilon/2)\epsilon}+(n-2)=O(n^2/\epsilon)$. Hence $|\bg^+_i|=|\bl|+|\bs|=O(n^2/\epsilon)$, for any $i$ with $\tmu_i\ge 0$.
\medskip

\noindent{\em Case 2: $\tmu_i<0$.} Then by the definition of a $\bg$ good for this case, $|\bg^+_i|\le v_i^+/(\epsilon/(2n))=(2n/\epsilon)O(n)=O(n^2/\epsilon).$

Next we show the bound on $|\bg_i^-|$. 
By definition of a $\bg$ chore, $|\bg_i^-|\le 2n\cdot v_i^-/\epsilon = O(n^2/\epsilon),$ as from Claim \ref{clm:vi_are_n} we have $v_i^-\le O(n)$. 
\end{proof}

\lemAcyclic*
\begin{proof}
First we show how to eliminate one cycle, say $C,$ in the allocation graph of $x$. That is, we define a new allocation $x',$ that removes one cycle without reducing the value of any agent. Let there be $k$ agents and $k$ items in $C,$ with the edges as,
$$ a^1 ~\mbox{---}~ o^1 ~\mbox{---}~ a^2 ~\mbox{---}~ o^2 ~\mbox{---}~ \cdots~\mbox{---} ~ o^{i-1} ~ \mbox{---}~  a^i ~\mbox{---}~ o^i ~\mbox{---}~ a^{(i+1)} \cdots ~\mbox{---}~ o^{k-1} ~\mbox{---}~ a^k ~\mbox{---}~ o^k ~\mbox{---}~ a^1.$$ 
Each agent $a^i$ is partially assigned items $o^i$ and $o^{i-1}$ (by setting $0 \equiv k$) and each item $o^i$ is partially assigned to agents $a^{i}$ and $a^{i+1}$ (by setting $k+1 \equiv 1$). Without loss of generality, we may assume that items in $C$ are solely considered as either a good or a chore by both agents sharing them, otherwise, we can break the cycle by allocating the share of the other agent for this item to the one who considers it as a good. We call an item a good if both the agents sharing it consider it so, else a chore.

First we argue the case when there is at least one good. Without loss of generality we assume $o^k$ is a good. Let $X^C = [x_{11}, x_{21}, x_{22}, \dots ,x_{k(k-1)} ,x_{kk}, x_{1k}]$ be the allocation vector of cycle $C$. Also let $\tilde{V}^C=[\tilde{v}_{11}, \tilde{v}_{21}, \tilde{v}_{22}, \dots ,\tilde{v}_{k(k-1)}, \tilde{v}_{kk}, \tilde{v}_{1k}]$ be the vector representing the \emph{scaled} values of the agents for the items assigned to them in $C,$ defined as, $$\tilde{v}_{ij}=\begin{cases}
 v_{ij} & \text{if } i=1 \\
 v_{ij} \left( \frac{v_{(i-1)(i-1)}}{v_{i(i-1)}} \right) & \text{otherwise.}
 \end{cases}$$
In $\tilde{V}^C$ the valuations of the agents are scaled in a way so that agents sharing an item in $C$ have the same value for that item (except for item $k$). Without loss of generality we assume $\tilde{v}_{1k} \le \tilde{v}_{kk}$. Let $U^C = [u_{11}, u_{21}, u_{22}, \dots ,u_{k(k-1)} ,u_{kk}, u_{1k}]$ be the utility vector of $C$ where $u_{ij}= \tilde{v}_{ij}x_{ij}$. Let $\delta$ be the minimum of smallest positive $u_{ij}, i \neq j$ (even indexes of $U^C$) and smallest $|u_{ii}|, u_{ii}< 0$ (odd indexes of $U^C$). Define $${u'}_{ij}:= \begin{cases}
u_{ij} - \delta & \text{if } i\neq j \\
u_{ij} + \delta & \text{otherwise.} \\
\end{cases}$$
Then the desired $x'$ is defined as, $$x'_{ij}:= \begin{cases}
{u'}_{ij}/\tilde{v}_{ij}  & \text{if } i,j \in C \\
x_{ij} & \text{otherwise. }
\end{cases}$$

By choice of $\delta$, at least one $x'_{ij}$ with $x_{ij} >0$ will be 0 and no new edge is added to the allocation graph so the cycle $C$ is removed. We need to show that the new $x'_{ij}$'s present a feasible allocation. By choice of $\delta$, we can see that $u'_{ij} \ge 0$ when $j$ is a good for agents sharing it in $C$, $u'_{ij} \le 0$ otherwise.  For all agents $a^i$ in $C$, $u_{i(i-1)}+u_{ii} = u'_{i(i-1)}+u'_{ii}$ (by setting $1-1 = k$ for agent $a^1$) so each agent will get the same utility before removing the cycle. Also, we have $u_{ii}+u_{i(i+1)} = u'_{ii}+u'_{i(i+1)}$ and since for all items $O^i, i\in [k-1]$, $\tilde{v}_{ii} = \tilde{v}_{i(i+1)}$ we have $x_{ii}+x_{i(i+1)} = x'_{ii}+x'_{i(i+1)}$. For item $k$ we have, \begin{align*}
    &u'_{kk} = u_{kk} + \delta  \implies  x'_{kk} =  x_{kk} +  \frac{\delta}{\tilde{v}_{kk}} \\ 
    &u'_{1k} = u_{1k} - \delta \implies  x'_{1k} =  x_{1k} -  \frac{\delta}{\tilde{v}_{1k}} \\
    \implies &  x'_{kk} +  x'_{1k} \le x_{kk} +  x_{1k} \ .
\end{align*}
The last inequality hold because $\tilde{v}_{1k} \le \tilde{v}_{kk}$. Therefore, all agents receive the same utility in the new allocation. But there may be an extra amount of good $k$ available; we assign it to the agent who has the highest share of good $k$.

If all items in $C$ are chores, we define $\tilde{V}$ and $U$ similarly as for the previous case. Without loss of generality, we assume $\tilde{v}_{1k} \le \tilde{v}_{kk}$ and we choose $\delta$ to be the smallest $|u_{ii}|, u_{ii}< 0$ (odd indexes of $U^C$). With the same analysis we get $u_{ii}+u_{i(i+1)} = u'_{ii}+u'_{i(i+1)}$ for all agents $i$, $x_{ii}+x_{i(i+1)} = x'_{ii}+x'_{i(i+1)}$ for items $i \neq k$ and $ x'_{kk} +  x'_{1k} \ge x_{kk} +  x_{1k}$. Therefore, agents get the same utility with an extra amount of chore $k$ assigned to some agent. We improve the utility of the agent who gets this share of chore $k$ by reducing her share from chore $k$ by making, $\sum_{i \in \N} x_{ik} =1$.

We repeat this process for every cycle, removing at least one edge with every removal. Hence, in polynomial time, we get an acyclic allocation graph.
\end{proof}

\lemShared*
\begin{proof}
Suppose there are $k$ shared goods. Consider the subgraph of the allocation graph with the $n+k$ nodes corresponding to all the buyers and only the shared goods. As this graph is acyclic, there are at most $n+k-1$ edges. Further, each item is shared, meaning there are at least two edges incident to each node representing a good. Thus, there are at least $2k$ edges. The inequality $n+k-1\ge 2k$ is satisfied only when $k\le n-1$, hence there are at most $n-1$ shared goods.  
\end{proof}

\CorrAMMS*
\begin{proof}
If an $\alpha$-$\MMS$ allocation exists, then for the partition of $\bg$ corresponding to this allocation, say $\PB=[B_1\cdots,B_n],$ there is an integral allocation of $\nbg$ where every agent $i$ gets value $\alpha\cdot\tmu_i-v_i(B_i)\ge c_i$ from $\nbg.$ Thus, the LP will have a (fractional) solution. From Lemma \ref{lem:aemms}, the resulting allocation obtained by rounding the LP solution is $\aeMMS.$
\end{proof}

\thmNonId*
\begin{proof}
From Corollaries \ref{corr:aemms-when-exists} and \ref{corr:gpo-when-exists} the correctness of Algorithm \ref{algo:nonid-mixed-plus} follows. Next we analyze the running time. 

The time to compute the approximate $\MMS$ values is $O(n\cdot 2^{(n\log n)/\epsilon}(2^{1/\epsilon^2}n\log m+m)),$ from the proofs of Theorems \ref{thm:main-id-pos} and \ref{thm:main-id-neg}. Since $|\bg|\le O(n^3/\tau \epsilon)$ by Lemma \ref{cor:big-const}, the number of iterations in the for loop enumerating all the allocations of the $\bg$ items is $O(2^{O((n^3\log n)/\epsilon)})$. Note that we re-define $\eps$ as $\min\{\eps,\frac{\alpha\gamma}{(1+\gamma)}\}\ge \min\{\eps,\frac{\gamma^2}{2}\} =:\zeta,$ thus $|\bg|\le O(n^{3}/\zeta).$ Each iteration solves an LP of $mn$ variables and $O(mn)$ constraints, hence takes time some polynomial function in $(m,n)$ less than $O((mn)^{3})$~\cite{LeeSZ19}. Finding a cycle in the allocation graph requires time linear in the number of edges, at most $O(mn).$ Eliminating the cycle requires time $O(mn),$ and deletes at least one edge. Repeating the process until the graph is acyclic takes at most $O(mn)$ iterations, hence the making the allocation acyclic and rounding it steps take time at most $O(m^2n^2)$. Hence the total time for the algorithm in the worst case is,
$$O(n\cdot 2^{n\log n/\epsilon}(2^{1/\epsilon^2}n\log m+m))+ O(2^{O(n^3\log n/\zeta)}m^3n^3+m^2n^2)\le O(2^{O((n^3\log n)/ \min\{\eps^2,\zeta\})}m^3),$$
$$=O(2^{O((n^3\log n)/ \min\{\eps^2,\gamma^2/2\})}m^3)=O(m^3),$$ as $n,$ $\alpha,\ \gamma$ and $\epsilon$ are constant. 
\end{proof}

\section{Non-existence of $\aMMS$ allocations}
\label{sec:nonexist}

In this section, we show an instance for which there is no $\aMMS$ allocation for any $\alpha>0$. Our instance is a modification of the instance in \cite{kurokawa2016can} that shows that an $\MMS$ allocation in a goods only manna does not always exist. We take their exact instance, and add three chores to $\M,$ each of absolute value equal to a small constant less than the agent's $\MMS$ values. For completeness, we discuss all details of the instance.

Let $\N= \{1,2,3\}$, $\M^+ = \{(j,k): j \in [3], k \in [4] \} $, $ \M^- =\{(1),(2),(3) \}$, and $\M = \M^+ \cup \M^-$ respectively be the set of agents, goods, chores, and all items. In order to define the valuations of the agents for each of these items, we first define matrices $ O, E^{(1)}, E^{(2)}$, and $ E^{(3)}$ as follows.
\begin{equation*}
O= \begin{bmatrix}
17 & 25 & 12 &1\\
2 & 22 & 3& 28\\
11 & 0 & 21 & 23
\end{bmatrix} 
\end{equation*}
\begin{equation*}
E^{(1)}= \begin{bmatrix}
3 & -1 & -1 & -1\\
0 & 0 & 0& 0\\
0 & 0 & 0& 0
\end{bmatrix} \hspace{1cm}
E^{(2)}= \begin{bmatrix}
3 & -1 & 0 & 0\\
-1 & 0 & 0& 0\\
-1 & 0 & 0& 0
\end{bmatrix} \hspace{1cm}
E^{(3)}= \begin{bmatrix}
3 & 0 & -1 & 0\\
0 & 0 & -1 & 0\\
0 & 0 & 0 & -1
\end{bmatrix} \ .
\end{equation*}

The valuation of each agent $i$ for each good $(j,k)$ is,  $v_i(\{(j,k)\}) = 10^6 + 10^3 \cdot O_{jk} + E^{(i)}_{jk},$ and their value for each chore is $-4054999.75$. 

From \cite{kurokawa2016can}, every agent can divide all the goods in this instance into three bundles of value $4055000$ each. Adding one chore to each of these makes every bundle's value $0.25.$ It can be verified that the average value of all items is $0.25$ for every agent. As $\MMS$ cannot be higher than the average, the above allocation shows that every agent's $\MMS$ value is $0.25.$ 
\cite{kurokawa2016can} also show that there is no allocation of the goods were all agents get at least $4055000,$ and that the sum of any $3$ goods is less than $4055000$. As the values of goods are integers, every agent must get at least $4$ goods for every chore in order to receive a positive valued bundle. If every agent is to get a positive valued bundle, the agent receiving less than $4055000$ from the goods must not receive any chore, and must get at least one good. But then there are $3$ chores and at most $11$ goods remaining to be allotted. Hence, at least one agent will receive a negative valued bundle. Therefore, there is no allocation that can guarantee every agent a positive valued bundle, and the best $\alpha$ for which an $\aMMS$ allocation exists is at most zero.



\section{Computing $\MMS$ when $\MMS<0$}\label{appendix:ptas-neg-id}

In this section we introduce the algorithm that finds a $(1-\epsilon)$-$\MMS$ allocation of an agent with $\MMS<0$ for an instance $\MMSidins$ and a constant $ \epsilon>0,$ or equivalently, a $(1-\epsilon)$-$\MMS$ allocation of $\MMSins$ when there are identical agents with valuation function $v$ (Algorithm \ref{algo:MMS-value-ve}). From Lemmas \ref{lem:avg} and the normalization $v(\M)=-n,$ we have $\MMS\le -1.$ 

From Definition \ref{def:mms}, a $(1-\epsilon)$-$\MMS$ allocation gives each agent a bundle with value at least $(1/(1-\epsilon)) \MMS$. Let $\sigma := \frac{1}{1-\epsilon}-1$. Algorithm \ref{algo:MMS-value-ve} obtains an allocation where each agent gets a bundle of value at least $(1+\sigma)\MMS= (1/(1-\epsilon))\MMS.$ The high level idea of the algorithm is as follows. First we scale the valuations so that $v(\M)=-n$, and classify items as $\Bigi$ or $\Sml$. Then similarly as in Algorithm \ref{algo:MMS-value}, we enumerate all partitions of $\Bigi$. While there are unallocated $\Sml$ goods, we add them one by one to the bundle with the least value. Once all the $\Sml$ goods are exhausted, we iteratively add $\Sml$ chores to the bundle with the highest value.

\begin{algorithm}[tbh!]
\caption{$(1-\epsilon)$-$\MMS$ Allocation for identical agents with $\MMS<0$} \label{algo:MMS-value-ve}
\DontPrintSemicolon
  \SetKwFunction{Define}{Define}
  \SetKwInOut{Input}{Input}\SetKwInOut{Output}{Output}
  \Input{$\MMSidins$, a constant $\epsilon$}
  \Output{$(1-\epsilon)$-$\MMS$ Allocation}
  \BlankLine
  Normalize the valuations so that $v(\M)=-n.$ \label{line:norm-ve}\;
  $\sigma \gets \frac{1}{1-\epsilon}-1,\ \ASet\gets \emptyset $ \label{line:initial-ve} \;
  $\Bigi := \{j \in X: |v_j| \ge \sigma \}$, $\Sml\coloneqq \M\backslash \Bigi $, $\Sml^+=\Sml\cap \M^+,\Sml^-=\Sml\cap \M^-$\;
  \For{$\B \in \Pi_n (\Bigi)$ \label{line:enumerate-ve}} {
    \While {$\Sml^+\ne \emptyset$ \label{line:coverB1-ve}} {
    add any $\Sml$ good to a bundle with the lowest value
    }
    \While{$X \neq \emptyset$}{
    add any $\Sml$ chore to a bundle with the highest value
  }
     store the allocation to a set $\ASet$ \label{line:store-ve}
    }
  \Return $\argmax_{A\in \ASet} \min_{A_i \in A} v(A_i)$\tcp*{return allocation with highest maximin value} \label{line:last-ve}
\end{algorithm}

\begin{theorem}\label{thm:main-id-neg}
Algorithm \ref{algo:MMS-value-ve} gives a $(1-\epsilon)$-$\MMS$ allocation when $\MMS<0$ in $O(m)$ time.
\end{theorem}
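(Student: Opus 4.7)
The plan is to mirror the template of Theorem~\ref{thm:main-id-pos}: bound $|\Bigi|$ so the outer enumeration of $\Bigi$-partitions is constant in $m$, argue that for the partition $\PB^*=(B_1^*,\dots,B_n^*)$ induced by any $\MMS$ allocation $\PA^*$ the greedy distribution of $\Sml$ already produces a $(1-\eps)$-$\MMS$ allocation, and appeal to Line~\ref{line:last-ve} (which returns the max-min allocation across all iterates) to close correctness. To bound $|\Bigi|$, I would adapt Lemma~\ref{lem:big-const}: every item of $\Bigi$ has absolute value at least $\sigma=\eps/(1-\eps)$, and condition~(2) with the normalization $|v(\M)|=n$ on Line~\ref{line:norm-ve} yields $v(\M^+),|v(\M^-)|=O(n)$ as in Claim~\ref{clm:vi_are_n}, so $|\Bigi|\le(v(\M^+)+|v(\M^-)|)/\sigma=O(n/\eps)$. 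The for-loop then runs for $n^{|\Bigi|}=2^{O(n\log n/\eps)}$ iterations, each performing the greedy scan in $O(m)$ with a priority queue for the running min/max, for an overall $O(m)$ running time.

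For correctness, let $\PA$ denote greedy's output on $\PB^*$ and let $A_{\min}$ be the minimum-value bundle in $\PA$. In Case~A, $A_{\min}$ receives at least one chore during phase~$\Sml^-$; let $c$ be the last such chore. Immediately before its addition $A_{\min}$ was the maximum bundle, so every other bundle had value at most $v(A_{\min})+|c|\le v(A_{\min})+\sigma$ at that instant. Any subsequent chore in phase~$\Sml^-$ goes to some other bundle and only decreases its value, so every final bundle satisfies $v(A_l)\le v(A_{\min})+\sigma$. Summing across the $n$ bundles and using $\sum_l v(A_l)=-n$ yields $v(A_{\min})\ge-1-\sigma$; combined with $\MMS\le-1$ from Lemma~\ref{lem:mu1} this gives $v(A_{\min})\ge(1+\sigma)(-1)\ge(1+\sigma)\MMS=\MMS/(1-\eps)$, matching the $(1-\eps)$-$\MMS$ guarantee.

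In Case~B, $A_{\min}$ receives no chore, so $v(A_{\min})$ equals its value at the end of phase~$\Sml^+$; moreover, since chores in phase~$\Sml^-$ only decrease bundle values, $A_{\min}$ must itself attain the minimum $M_1$ at the end of phase~$\Sml^+$. To lower-bound $M_1$ I would use a witness derived from $\PA^*$: drop all $\Sml^-$ items from $\PA^*$ to form the allocation $\PA^{*+}$ that gives agent $l$ the bundle $B_l^*\cup(A_l^*\cap\Sml^+)$; dropping chores cannot decrease values, so $v(\PA^{*+}_l)\ge v(A_l^*)\ge\MMS$ for every $l$, certifying that the pure max-min load-balancing of $\Sml^+$ over initial loads $(v(B_l^*))_l$ has optimum $M^*\ge\MMS$. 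A short \emph{tier argument} then shows $M_1\ge M^*-\sigma$: by tracking the last greedy addition, any bundle that ever received a small good has final value at most $M_1+\sigma$, so the ``high-tier'' bundles $U=\{l:v(B_l^*)>M_1+\sigma\}$ receive no greedy goods and greedy absorbs the entire budget $S^+=v(\Sml^+)$ into the remaining bundles; comparing greedy's bound $\tilde v_l\le M_1+\sigma$ on $l\notin U$ with the witness's bound $v(\cdot)\ge M^*$ on $l\notin U$, both using at most $S^+$ in total, forces $M^*\le M_1+\sigma$ (with the degenerate $|U|=n$ case immediately contradicting $v(A_{\min})=M_1$). Finally, $\MMS-\sigma\ge(1+\sigma)\MMS$ is equivalent to $\MMS\le-1$, so $v(A_{\min})=M_1\ge\MMS-\sigma\ge(1+\sigma)\MMS$, completing both cases. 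The main obstacle is formalizing the tier argument cleanly in the mixed-manna regime, where the initial loads $v(B_l^*)$ can take arbitrary signs and large absolute values; otherwise the analysis is a direct translation of the $\MMS\ge 0$ case.
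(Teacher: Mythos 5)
Your proposal is correct and follows essentially the same route as the paper: enumerate the $n^{|\Bigi|}=2^{O(n\log n/\eps)}$ partitions of $\Bigi$, run the two greedy phases, and certify the iterate corresponding to the partition $\PB^*$ induced by an exact $\MMS$ allocation via the witness inequality $v(\Sml^+)\ge\sum_{B\in\Set}\bigl(\MMS-v(B)\bigr)$ (obtained by dropping the $\Sml$ chores from $\PA^*$) together with $\MMS\le-1$. The paper packages your tier argument as a direct contradiction (if the post-goods minimum were below $(1+\sigma)\MMS$, only deficient bundles would ever be the running minimum and at least $\sigma$ worth of $\Sml^+$ would remain unallocated) and your Case~A as the observation that the maximum bundle exceeds $-1$ whenever a chore is still unallocated, but these are the same estimates.
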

\begin{proof}
We first prove a helpful lower bound on the value of all $\Sml$ goods. Let $\PB^*$ be a partition of the $\Bigi$ items corresponding to an $\MMS$ allocation. There are enough $\Sml$ goods to add to each part in $\PB^*$ so that every part has at least $\MMS$ value. Specifically, for the set $\Set=\{B \in \PB^*: v(B)< \MMS\}$ we have, 
\begin{equation}
    \label{eq:bound-small+}
    v(\Sml^+) \ge \MMS\cdot |\Set| - v(\PB^*)
\end{equation}
Now, let $\PA= \{A_1,\dots,A_n \}$ be the output of Algorithm \ref{algo:MMS-value-ve}. Suppose for contradiction there exists some $A_i \in A$ such that $v(A_i) < (\frac{1}{1-\epsilon})\MMS = (1+\sigma)\MMS$. Consider each $A_k \supseteq B_k$ with $B_k\in\Set$. Note that the algorithm adds $\Sml$ goods to the bundle with the least value. Because of $A_i$, before adding the last $\Sml$ good to any bundle, its value is less than $(1+\sigma)\MMS$. The last good added has value at most $\sigma$. Therefore, all the $A_k$s have value at most $\MMS$. From, \eqref{eq:bound-small+} and the fact that the algorithm adds goods to the least valued bundle, we have,
\begin{equation}
    \label{eq:bound-small-sigma}
    v (\Sml^+ \setminus ( \bigcup_{k \in [n]} A_k  ) ) \ge \sum_{B \in \Set} \MMS - v(B)+\sigma = \sigma, 
\end{equation}
which is a contradiction.

Now we prove that after adding the $\Sml$ chores the value of all the bundles is at least $(1+\sigma)\MMS$. This is true because while there exists an unallocated chore, the value of the highest valued bundle is greater than $-1,$ because $v(\M)=-n$. Adding a chore to such bundle will decrease the value by at most $\sigma$. Therefore, the value of such bundle is at least $-(1+\sigma ) \ge (1+\sigma )\MMS$. By definition of $\sigma, (1+\sigma)=1/(1-\epsilon).$

Finally, $|\bg|=O(n/\sigma)=O(n/\eps),$ from the definition of $\bg$ and $\sigma.$ As every iteration corresponding to a partition of $\bg$ takes $O(m)$ time, Algorithm \ref{algo:MMS-value-ve} runs for $O(m\cdot 2^{O(n\log n/\eps)})=O(m)$ time.
\end{proof}

\section{Hardness of Approximation}\label{sec:hardness}

The $\aMMS+\PO$ problem makes two assumptions. First, the number of agents is assumed to be a constant. Second, the sum of absolute values of all the items for every agent is assumed to be at least $\tau$ times the minimum of this sum for the goods and the chores, for some constant $\tau>0$. In this section we show that relaxing either of these two assumptions makes the $\aMMS$ problem $\classNP$-hard for any $\alpha\in (0,1],$ even when agents are identical.

When agents are identical, the allocation that decides the $\MMS$ value of the agents is also an $\MMS$ allocation for the instance. Thus, for $\alpha=1$, the $\aMMSg$ problem should return an $\MMS$ allocation. Furthermore, given $v(\M)>0,$ we are guaranteed to have $\MMS\ge 0$ due to Lemma \ref{lem:mms-sign}. However next we show that when either assumption of problem $\aMMS$ is dropped, {\em deciding} if the inequality is indeed strict is $\classNP$-hard. 

We separate Theorem \ref{thm:hard} as two $\classNP$-hardness results in Theorems \ref{thm:hard-const} and \ref{thm:hard-gc-rel}. To prove both, we reduce from the known $\classNP$-hard {\Parti} problem.

\noindent{\bf {\Parti} Problem.} Given a set of non-negative integers $E=\{e_1,\dots,e_m\}$, output YES if there exists a division of the elements into two sets of equal weight, otherwise output NO.





\begin{restatable}{thm}{genconsthard}\label{thm:hard-const}
Given an instance $\MMSidins$ with constantly many (two) identical agents and $v(\M)>0$, checking if $\MMS>0$ is $\classNP$-hard.
\end{restatable}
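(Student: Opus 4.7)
The plan is to reduce from the classical $\classNP$-hard \Parti\ problem. Given an instance $E=\{e_1,\ldots,e_m\}$ of \Parti, I would first handle the trivial case: if $\sum_i e_i$ is odd, output any canonical ``no'' fair-division instance with $v(\M)>0$ and $\MMS\le 0$. Otherwise, write $\sum_i e_i = 2T$ and construct $(n=2,\M,v)$ with $m$ goods of values $2e_1,\ldots,2e_m$ together with exactly two chores, each of value $-(2T-1)$. Doubling the goods is the key design choice: it guarantees that every subset-of-goods sum is \emph{even}. The total valuation is $v(\M)=4T-2(2T-1)=2>0$, so the instance satisfies the hypothesis of the theorem, and the construction is obviously polynomial.

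I would then prove the equivalence $\MMS>0$ iff $E$ admits a balanced partition. The forward direction is direct: if $S\subseteq[m]$ satisfies $\sum_{i\in S}e_i=T$, partition the goods accordingly into two bundles of value $2T$ each and place one chore in each bundle; both bundles then have value $2T-(2T-1)=1$, so $\MMS\ge 1>0$.

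For the converse, assume an allocation $(A_1,A_2)$ with $\min_k v(A_k)>0$ and split into two cases by how the chores are distributed. In Case~A (both chores in $A_1$), letting $x$ denote the goods-value in $A_1$, positivity gives $x>4T-2$ and $4T-x>0$, forcing $x=4T-1$; but $x$ is even, contradiction. In Case~B (one chore in each bundle), positivity of $v(A_k)=(\text{goods in }A_k)-(2T-1)$ combined with evenness of each goods-sum sharpens ``$>2T-1$'' to ``$\ge 2T$''; since the goods-sums add to $4T$, both must equal exactly $2T$, yielding a balanced partition of $E$ (divide by two). Hence $\MMS>0$ decides \Parti, proving $\classNP$-hardness.

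The main obstacle—and essentially the only delicate point—is the parity argument that rules out near-miss allocations. Without doubling, a Case~A allocation with goods-split $(2T-1,1)$ could give $\MMS=1$ without any genuine balanced partition of $E$; doubling blocks exactly this loophole by making every goods-subset-sum even, so that the strict inequalities forced by $v(A_k)>0$ pin the goods-sums to precise values that correspond to a \Parti\ solution.
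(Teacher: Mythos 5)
Your proof is correct and follows essentially the same route as the paper: a reduction from \Parti\ with the elements as goods and two heavy chores, followed by a case analysis on whether the chores land in the same bundle or not. The only difference is the integrality gadget — the paper offsets each chore's value by $\beta=\tfrac14$ so that strict positivity forces the (integer) goods-sums up to $\tfrac12\sum_i e_i$, whereas you double the goods and use chores of value $-(2T-1)$ so that parity does the same job; both correctly rule out the near-miss allocations, and your explicit handling of the odd-sum case is a small point the paper leaves implicit.
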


\begin{proof}
We reduce an instance of {\Parti} to an $\MMS$ instance $\MMSidins$ with two identical agents. 
%
%
%
Let $\N=\{1,2\}$. $\M=[m+2]$, where the first $m$ items are goods and the last two are chores. The valuation function $v$ is defined as follows, where $\beta=1/4$.
\begin{equation*}
  v_{j}=\begin{cases}
  e_j, &\forall\ j\in [m]\\
  -(\sum_i e_i/2) + \beta, & j\in\{m+1,m+2\}.  
  \end{cases}
\end{equation*}
 
That is, the goods correspond to {\Parti} elements, and have the same value as the weight of the element, and the chores are $\beta$ more than the negated weight of each set in an equal distribution of the elements. Note that, $(a)$ the trivial partition where all items are in the same bundle has the smaller bundle valued zero, and $(b)$ the average of values of all items is $\beta$, and $\MMS$ cannot be higher than the average (Lemma \ref{lem:avg}). Hence, $0\le \MMS\le \beta$.


We prove the correctness of the reduction in the following two claims. 
\begin{claim}\label{clm:no}
{\Parti} has a solution $\Rightarrow \MMS\ge \beta$.
\end{claim}
\begin{proof}
 Divide the goods into two bundles as per the {\Parti} solution, and add one chore to each set. This gives us two bundles of equal value $\beta$, implying that $\MMS\ge \beta$.
\end{proof}

\begin{claim}\label{clm:yes}
$\MMS>0 \Rightarrow$ {\Parti} has a solution.
\end{claim}
\begin{proof}
We prove the contrapositive by contradiction. Suppose {\Parti} does not have a solution. but $\MMS>0$ for the instance $\MMSidins$. 
Let $\PA=(A_1,A_2)$ be the allocation achieving the $\MMS$ value, and let $u_1=v(A_1)$ and $u_2=v(A_2)$. Then we have $u_1,u_2>0$. 

First we prove that both the chores cannot be in the same bundle. If they are, and if all goods are not in this bundle, then the value of the bundle with chores is at most the sum of all except the smallest good. This is $(-\sum_i e_i + 2\beta) + (\sum_i e_i - \min_i e_i) \le  1/4 - 1 < 0.$ If every good and chore is in the same bundle, the value of the other bundle is $0$. But $v_1>0$, hence the chores are in separate bundles. 

But then the value of the goods in each bundle is at least the total value minus the chore's value, i.e., for $i=1,2$, $v(A_i \cap \M^+) = u_i - (-\frac{1}{2} \sum_{i\in[m]} e_i + \beta) \ge \MMS - \beta +\frac{1}{2}\sum_i e_i > \frac{1}{2}\sum_i e_i - \beta$. Since $\beta=1/4$ while $v(A_i \cap \M^+)$ and $\frac{1}{2}\sum_i e_i$ are integers, it follows that $v(A_i\cap \M^+)\ge \frac{1}{2}\sum_i e_i$. Then partition $(A_1\cap \M^+, A_2\cap \M^+)$ of $E=(e_1,\dots,e_m)$ is a solution of the {\Parti} problem, a contradiction.
\end{proof}
 Claims \ref{clm:no} and \ref{clm:yes} show that $\MMS >0 \iff $ there is a solution to {\Parti}. 





When agents are identical, they agree on every item if it is a good or a chore, and therefore $\M^{gc}=\emptyset$. Therefore, $v^+_i$ and $v^-_i$ as defined in Definition \ref{def:amms} are same as $v(\M^+)$ and $|v(\M^-)|$ respectively.
\end{proof}


\begin{restatable}{thm}{genRelGcHard}\label{thm:hard-gc-rel}
Given a fixed constant $\tau>0$, even if an instance $\MMSidins$ with identical agents satisfies $|v(\M)| \ge \tau\cdot\min\{ v(\M^+),|v(\M^-)|\}$, checking if $\MMS>0$ is $\classNP$-hard.
\end{restatable}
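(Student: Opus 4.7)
My plan is to adapt the \Parti-based reduction of Theorem~\ref{thm:hard-const} by padding the instance with dummy unit-valued goods so that the ratio condition $|v(\M)| \ge \tau\cdot\min\{v(\M^+),|v(\M^-)|\}$ holds, while still forcing any allocation with $\MMS>0$ to encode a partition. Because the number of dummies needs to scale linearly with the total chore mass, a polynomial-time reduction calls for a strongly \classNP-hard source, so I will reduce from {\sf 3-PARTITION}: given $3k$ integers $a_1,\dots,a_{3k}$ with $B/4<a_i<B/2$ and $\sum_i a_i=kB$, decide whether the $a_i$'s partition into $k$ triples of sum $B$. Scaling all integers by a constant if necessary, I assume $B\ge 5$; crucially, $B$ is polynomial in $k$ under strong \classNP-hardness, so the padding will remain polynomial.

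I construct an instance with $n=k+\ell$ identical agents, where $\ell=\lceil\tau k(B-1)\rceil$; the items are the $3k$ goods with values $a_i$, $k$ chores each of value $-(B-1)$, and $\ell$ dummy goods each of value $1$. Then $v(\M^+)=kB+\ell$, $|v(\M^-)|=k(B-1)$, and $|v(\M)|=k+\ell$, so by the choice of $\ell$ the ratio condition is satisfied. Since all values are integers and $v(\M)/n=1$, combining Lemma~\ref{lem:mms-sign} and Lemma~\ref{lem:avg} yields $\MMS\in\{0,1\}$; hence deciding $\MMS>0$ is equivalent to deciding $\MMS=1$.

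For the forward direction, a {\sf 3-PARTITION} solution gives $k$ bundles by pairing each triple with one chore (value exactly $1$) together with $\ell$ singleton-dummy bundles, so $\MMS=1$. For the converse, assume $\MMS=1$: every bundle has value exactly $1$, since bundle values are integers summing to $n$. I will then argue in two steps. First, any chore-free bundle contains no partition good---each such good has value $>B/4\ge 1$---so it must be a single dummy; counting the $\ell$ available dummies against the $n-s$ chore-free bundles (where $s$ is the number of chore-occupied bundles) gives $k+\ell-s\le \ell$, so $s\ge k$, and together with $s\le k$ this forces $s=k$, i.e., each chore sits alone. Second, since all $\ell$ dummies are consumed by the chore-free bundles, each chore bundle consists solely of partition goods summing to $B$; the range $B/4<a_i<B/2$ then permits exactly three such goods per bundle, recovering a {\sf 3-PARTITION} solution.

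The main obstacle is balancing the ratio condition against the polynomial-time requirement of the reduction: forcing $|v(\M)|$ to be large enough requires $\Omega(\tau kB)$ padding, which is only affordable because of the strong \classNP-hardness of {\sf 3-PARTITION}. The secondary subtlety---ruling out mixed chore bundles or dummies sneaking inside a chore bundle---is handled by the two-step counting argument above, which hinges on the tight separation $B/4<a_i<B/2$ and on the unit of value being $1$.
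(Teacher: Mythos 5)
Your proof is correct, but it takes a genuinely different route from the paper's. The paper also proves this theorem by padding a \Parti{} reduction: it keeps the two chores of value $-(\sum_i e_i)/2+\beta$ with $\beta=1/4$ from Theorem~\ref{thm:hard-const}, adds $(n-2)$ goods of value $\beta$, and chooses $n$ just large enough that $v(\M^+)\ge(1+\tau)|v(\M^-)|$; the converse then forces the $(n-2)$ padding goods into singleton bundles and reuses the integrality-versus-$\beta$ argument of Claim~\ref{clm:yes} on the two remaining bundles. You instead reduce from {\sf 3-PARTITION}, pad with unit-valued dummies and use $k$ chores of value $-(B-1)$, recovering the triples via a counting argument on which bundles can contain a chore. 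The substantive difference is exactly the point you flag: the paper's padding requires $n=\Theta(\tau\sum_i e_i)$ agents and items, which is pseudo-polynomial for a binary-encoded \Parti{} instance, so that reduction is polynomial only if the $e_i$ are polynomially bounded (in which case \Parti{} is solvable in polynomial time and the reduction proves nothing). Your switch to the strongly \classNP-hard {\sf 3-PARTITION} makes the $\Theta(\tau kB)$ padding genuinely polynomial and thus buys a reduction that is sound as stated; the price is a slightly more delicate converse (ruling out dummies or multiple chores inside a chore bundle), which your two-step counting argument, resting on $B/4<a_i<B/2$ and $B\ge 5$, handles correctly.
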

\begin{proof}
Again, we give a reduction from {\Parti}. Let $E=\{e_1,e_2, \cdots, e_m\}$ be the set of elements given as input for {\Parti}. 
Create an instance $\MMSidins$ as follows: $\N$ has $n$ agents, where $n$ will be fixed later based on the value of $\tau$. $\M=\{1,2,\cdots,m+n\}$ where the first $m + (n-2)$ items are goods, and the last $2$ are chores. The valuation function $v$ is defined as follows, where $\beta=1/4$.
\begin{equation*}
    v_{j}=\begin{cases}
    e_j &\forall j\in [m]\\
    \beta &\text{for }j\in \{m+1,...,m+(n-2)\}\\
    -\sum_{i\in[m]} e_i/2 + \beta &\text{for } j\in\{m+n-1,m+n\}. 
    \end{cases}
\end{equation*}

That is, the first $m$ goods have values equal to the weights of the corresponding elements of {\Parti}. The remaining $(n-2)$ goods have value $\beta$ each, and both the chores have value $-(\sum_i e_i/2) + \beta.$ Fix $n$ to satisfy $|v(\M)| \ge \tau\cdot \min\{ v(\M^+),|v(\M^-)|\}$, or equivalently $v(\M^+)\ge (1+\tau) |v(\M^-)|,$ that is, $((n-2)\beta + \sum_i e_i)\ge (1+\tau) (\sum_i e_i +2\beta)$. 

We again have $0\le \MMS\le \beta.$ The lower bound because $v(\M)>0$ and Lemma \ref{lem:mms-sign}, and the upper bound because the average $v(\M)/n$ is $\beta$ and Lemma \ref{lem:avg}. The correctness is argued in the next two claims. 

\begin{claim}\label{clm:no-gc}
{\Parti} has a solution $\Rightarrow \MMS\ge \beta.$
\end{claim}
\begin{proof}
Divide the first $m$ goods as per the division of the elements of {\Parti} into equal valued sets, and add one chore to each bundle. From the remaining goods $\{m+1,...,m+(n-2)\}$ give one each to the remaining $(n-2)$ bundles. The value of every bundle created is $\beta$. Hence, $\MMS\ge \beta$.
\end{proof}

\begin{claim}\label{clm:yes-gc}
 $\MMS>0 \Rightarrow $ {\Parti} has a solution.
\end{claim}
\begin{proof}
We prove the contrapositive of the statement, by contradiction. Suppose {\Parti} instance $E={e_1,\dots,e_m}$ does not have a solution, but $\MMS>0$ for $\MMSidins$. 

Given that there are exactly two chores, at least $(n-2)$ bundles have only goods and has to have at least one good. Furthermore, since $e_i$s are positive integers and $\beta=1/4$, each of these $(n-2)$ bundles have value at least $\beta$. Now, $\beta$ being the upper bound on the $\MMS$ value, wlog we can assume that these $(n-2)$ bundles have exactly one good of the minimum value, namely $\beta$. This exhaust the goods $\{m+1,\dots,m+(n-2)\}$ with value $\beta$. Therefore, the two chores and all goods corresponding to the {\Parti} problem elements, and no other good, are in the remaining two bundles. Let these be the first two bundles $A_1$ and $A_2$. 

Now by the same argument as in the proof of Claim \ref{clm:yes}, we can show that both $A_1$ and $A_2$ have positive value only if each contains exactly one chore and the total value of goods in each, namely $v(A_i\cap E)$ for $i=1,2$, is at least $\frac{1}{2} \sum_{i\in[m]} e_i$. Thus, $(A_1\cap E, A_2\cap E)$ is a solution to the {\Parti} instance $E$, a contradiction.
\end{proof}

Claims \ref{clm:no-gc} and \ref{clm:yes-gc} show that $\MMS >0$ for $\MMSidins$ $ \iff $ there is a solution to {\Parti}. 
%
\end{proof}


Theorems \ref{thm:hard-const} and \ref{thm:hard-gc-rel} show that even if we know that $\MMS\ge 0$ checking if it is strictly positive is $\classNP$-hard. Since for $\alpha\in(0,1]$, $\MMS>0 \Leftrightarrow \alpha\MMS>0$, this essentially means, we can not find an $\alpha$-$\MMS$ allocation for {\em any} value of $\alpha \in (0,1]$ if either of the two conditions in $\aMMS$ problem is dropped. The next theorem formalizes this.

\GenMMSHard*

Even though an instance with identical agents is guaranteed to have an allocation where every agent gets at least the $\MMS$ value, i.e., $1$-$\MMS$ allocation exists, Theorem \ref{thm:hard} ruling out an efficient algorithm for finding $\alpha$-$\MMS$ allocation any $\alpha\in (0,1]$ is very striking. In light of this result, it is evident that even getting a $\PTAS$, in other words finding $(1-\epsilon)$-$\MMS$ allocation, in case of identical agents is non-trivial and important. 

\section{Detailed Related Work }\label{sec:relWork}

\noindent
\textbf{Fairness and efficiency in mixed manna.} While ours is the first work on $\MMS+\PO,$ finding fair and efficient allocations has been studied for other notions. ~\cite{AzizCIW19} initiate the study for a mixed manna, and study the problem of finding $\EF+\PO$ allocations. ~\cite{AleksandrovW20} study fairness properties related to $\EFX,$ defined as envy-freeness up to any item along with $\PO.$ 

\noindent
\textbf{Fairness for Mixed Manna.}
Finding fair allocations of mixed items has recently caught a lot of attention for both divisible \cite{bogomolnaia2017competitive,bogomolnaia2019cbads} and indivisible \cite{Aleksandrov2019mixed,aleks2020algorithms, Aleksandrov20, GargM20}, items. However, to the best of our knowledge, ours is the first study on $\MMS$ allocations for a mixed manna.

\noindent
\textbf{Fairness and efficiency in goods manna.} This problem is well-studied for a goods manna. Two popular notions for a goods manna are the Nash social welfare ($\NSW$), and $\EF+\PO,$ defined and discussed below. 

\textit{$\NSW$.} Nash Social Welfare ($\NSW$) is the geometric mean of the valuation of the agents. The $\NSW$ problem is to find an allocation of indivisible items that maximizes $\NSW$. This problem is $\classAPX$-Hard~\cite{GargM19}, and remarkable approximation results for the linear valuations case have been proven by a connection of the problem with markets~\cite{ColeG15,ColeDGJMVY17,BarmanKV18,CheungCGGHM18} or real stable polynomials~\cite{AnariGSS17}. The best known result is a $1.45$ approximation factor~\cite{BarmanKV18}. Similar results are known, again by exploiting the market connection, for popular valuation functions like budget-additive~\cite{GargHM18}, separable piece-wise linear concave (SPLC)~\cite{AnariMGV18}, and their combination~\cite{CheungCGGHM18}. Recent results give an $O(n)$ approximation when agents have subadditive valuations, a far more general class than all the earlier ones~\cite{BarmanBKS20, ChaudhuryGM20}. Recent work has also been done on the general version of the problem with asymmetric agents, where the aim is to maximize the weighted geometric mean, for given weights, and submodular utilities~\cite{GargKK20}. This notion is not applicable for a mixed manna. 

\textit{$\EF+\PO$.}
\textit{$\EF$} was first introduced by \cite{budish2011combinatorial} as an relaxation of envy-freeness. An allocation is $\EF$ if for any two agents $i_1$ and $i_2$, agent $i_1$ prefers (or equally likes) her own bundle to agent $i_2$'s bundle after removing \emph{some item} from the bundle of agent $i_2$. An $\EF$ allocation can be found efficiently using envy cycle removal procedure introduced by \cite{lipton2004approximately}. ~\cite{BarmanKV18} show a pseudo-polynomial time algorithm to obtain an $\EF+\PO$ allocation on a goods manna. A series of works ~\cite{AzizMS20,Zeng2020,chaudhury2020dividing,SandomirskiyS19} study special cases of the problem.  

\textit{Other notions} studied for a goods manna are $\Prop+\PO$~\cite{AzizMS20}), or group fairness notions~\cite{Conitzer19}. When the preferences are ordinal, ~\cite{AzizHS19} discuss $\EF$ solutions that satisfy the efficiency notions of utilitarian maximality and rank maximality.

\noindent
\textbf{$\MMS$.} The study of fair division started with the cake cutting problem \cite{steinhaus1948problem}. Two popular notions of fairness established here were proportionality, meaning each agent must get a bundle worth at least $1/|\N|$ of her value for all items, and envy-freeness, where each agent must value her own bundle at least as much as any other. However, neither of these can always be attained when the items are indivisible. A simple example is allocating one good between two agents; there is no allocation that is proportional or envy-free. This motivated the search for new fairness notions for indivisible items. One well-studied notion resulting from this investigation is $\MMS$ \cite{budish2011combinatorial}. In recent years, the problem of finding $\MMS$ allocations gained a lot of interest, and a series of impressive results were found for various special cases of the problem, as discussed below. 

\noindent
\textbf{$\MMS$ for Goods.} \cite{BouveretL16} showed that in some restricted cases $\MMS$ allocations always exist. A notable result from \cite{ProcacciaW14} showed that $\MMS$ allocations may not always exist but $2/3$-$\MMS$ allocations always do. A series of works studied the efficient computation of $2/3$-$\MMS$ allocations for any $n$ \cite{amanatidis2017approximation,BarmanK17,garg2018approximating}. ~\cite{ghodsi2018fair} showed that a $3/4$-$\MMS$ allocation always exists. Most recently \cite{garg2019improved} showed that a ($3/4+1/(12n)$)-$\MMS$ allocation always exists. Finding $\MMS$ values is hard but a $\PTAS$ for this problem is known \cite{woeginger1997polynomial}. This $\PTAS$ can be used to find a ($3/4+1/(12n)-\epsilon$)-$\MMS$ allocation for $\epsilon>0$ in polynomial time. There is also a strongly polynomial time algorithm to find $3/4$-$\MMS$ allocation \cite{garg2019improved}. Other notable works on the goods only case before being improved by follow-up work are ~\cite{FarhadiGHLPSSY19,garg2018approximating,kurokawa2016can,KurokawaPW18}.


\noindent
\textit{Constant number of agents with a goods only manna.} For three agents, \cite{amanatidis2017approximation}  showed that a $7/8$-$\MMS$ allocation always exists. This factor was later improved to $8/9$ in~\cite{GourvesM19}. For four agents, \cite{ghodsi2018fair} showed that a $4/5$-$\MMS$ allocation always exist. 


\noindent
\textbf{$\MMS$ for Chores.}
\cite{AzizRSW17} first studied the $\MMS$ problem with a chores manna. They introduced an algorithm for finding $2$-$\MMS$ allocations\footnote{Our definition of $\aMMS$ for the mixed manna is consistent for agents with positive as well as negative $\MMS$ values. We define $\alpha$ as smaller than $1,$ and consider $1/\alpha$-$\MMS$ valued bundles as $\aMMS.$ Prior results for the chores manna have $\alpha>1$ and ask for $\alpha\cdot\MMS$ valued bundles. We state the approximation factors as defined in the original papers, and ask the reader to invert them when relating with ours.}. \cite{BarmanK17} improved the previous result by showing an algorithm for a $4/3$-$\MMS$ allocation. Later, \cite{HuangL19} improved this result to a $11/9$-$\MMS$ allocation. They also showed a $\PTAS$ to find ($11/9+\epsilon$)-$\MMS$ allocation and a polynomial time algorithm to find a $5/4$-$\MMS$ allocation.

\noindent
\textbf{Other variants of $\MMS$.} The $\MMS$ problem has been studied under various other models in the goods only setting like with asymmetric agents \cite{FarhadiGHLPSSY19}, group fairness~\cite{barman2018groupwise,chaudhury2020little}, beyond additive valuations~\cite{BarmanK17, ghodsi2018fair,li2018fair}, in matroids \cite{GourvesM19}, with additional constraints ~\cite{GourvesM19,biswas2018fair}, for agents with externalities~\cite{BranzeiMRLJ13,AhmadiPourAnariEGHIMM13}, with graph constraints~\cite{bei2019connected,lonc2019maximin}, and with strategic agents \cite{strategicagents}. In the chores only setting too, weighted $\MMS$ \cite{aziz2019weighted}, and asymmetric agents \cite{aziz2019maxmin} notions have been investigated.




\end{document}